\definecolor{dg}{cmyk}{0.60,0,0.88,0.27}
\definecolor{Blue1}{rgb}{0.85, 0.95, 1.0}
\definecolor{Red1}{rgb}{1.0, 0.85, 0.85}
\newcolumntype{u}{>{\columncolor{Blue1}}c}
\newcolumntype{v}{>{\columncolor{Blue1}}r}
\newtheorem{definition}{Definition}
\newtheorem{example}{Example}
\newtheorem{lemma}{Lemma}
\newtheorem{questionW}{Question}
\newtheorem{resultW}{Result}
\newcommand{\hide}[1]{}
\newcommand{\later}[1]{}
\crefname{algocf}{alg.}{algs.}
\Crefname{algocf}{Algorithm}{Algorithms}
\crefname{proposition}{prop.}{prop.}
\crefname{figure}{Figure}{Figures}
\Crefname{example}{Example}{Example}
\newcommand{\resultbox}[1]{
\begin{tcolorbox}[
	enhanced jigsaw,		%
	colback=red!5,
	colframe=red!75!black,	
	arc=0mm,
	left skip=-1mm,
	right skip=-1mm,	
	left=0mm,
	topsep at break=1mm,			%
	right=0mm,
	top=0mm,
	bottom=0mm,		%
	breakable,		%
	parbox = false		%
]
\emph{#1}
\end{tcolorbox}
}
\newcounter{resultboxenv}
\newsavebox{\coloredbgbox}
\newcommand\hl{\bgroup\markoverwith
  {\textcolor{yellow}{\rule[-.5ex]{2pt}{2.5ex}}}\ULon}
\renewcommand\dbltopfraction{1} %
\newcommand{\Var}{\ensuremath{\mathbb{V}}}
\newcommand{\bigO}{\ensuremath{\mathcal{O}}}
\newcommand{\method}{\textsc{RecPart}\xspace}
\newcommand{\methodBase}{\textsc{RecPart-b}\xspace}
\newcommand{\methodS}{\textsc{RecPart-S}\xspace}
\newcommand{\methodVit}{\textsc{CS$_{\text{IO}}$}\xspace}
\newcommand{\methodGrid}{\textsc{Grid-$\varepsilon$}\xspace}
\newcommand{\methodOneB}{\textsc{1-Bucket}\xspace}
\newcommand{\methodGridOpt}{\textsc{Grid*}\xspace}
\newcommand{\parTime}{optimization time}
\newcommand{\ParTime}{optimization time}
\newcommand{\Imbal}{Imbalance\xspace}
\renewcommand{\epsilon}{\varepsilon}    %
\newcommand{\introparagraph}[1]{\textbf{#1.}}        %
\renewcommand{\epsilon}{\varepsilon}    %
\definecolor{orcidlogocol}{HTML}{A6CE39}
\tikzset{
  orcidlogo/.pic={
    \fill[orcidlogocol] svg{M256,128c0,70.7-57.3,128-128,128C57.3,256,0,198.7,0,128C0,57.3,57.3,0,128,0C198.7,0,256,57.3,256,128z};
    \fill[white] svg{M86.3,186.2H70.9V79.1h15.4v48.4V186.2z}
                 svg{M108.9,79.1h41.6c39.6,0,57,28.3,57,53.6c0,27.5-21.5,53.6-56.8,53.6h-41.8V79.1z M124.3,172.4h24.5c34.9,0,42.9-26.5,42.9-39.7c0-21.5-13.7-39.7-43.7-39.7h-23.7V172.4z}
                 svg{M88.7,56.8c0,5.5-4.5,10.1-10.1,10.1c-5.6,0-10.1-4.6-10.1-10.1c0-5.6,4.5-10.1,10.1-10.1C84.2,46.7,88.7,51.3,88.7,56.8z};
  }
}
\DeclareRobustCommand\orcidicon[1]{\href{https://orcid.org/#1}{\mbox{\scalerel*{
\begin{tikzpicture}[yscale=-1,transform shape]
\pic{orcidlogo};
\end{tikzpicture}
}{|}}}}
\renewcommand{\arraystretch}{0.95}
\renewcommand\footnotetextcopyrightpermission[1]{} %
\begin{document}
\fancyhead{}					%

\title[Near-Optimal Distributed Band-Joins through Recursive Partitioning]{Near-Optimal Distributed Band-Joins\\through Recursive Partitioning}

\author{Rundong Li}
\authornote{Work performed while PhD student at Northeastern University.}
\orcid{0000-0003-3142-2559}
\affiliation{\orcidicon{0000-0003-3142-2559} \institution{Google, USA}}
\email{eadon.lee@gmail.com}

\author{Wolfgang Gatterbauer}
\orcid{0000-0002-9614-0504}
\affiliation{\orcidicon{0000-0002-9614-0504} Northeastern University, USA}
\email{w.gatterbauer@northeastern.edu}

\author{Mirek Riedewald}
\orcid{0000-0002-6102-7472}
\affiliation{\orcidicon{0000-0002-6102-7472} Northeastern University, USA}
\email{m.riedewald@northeastern.edu}

\begin{abstract}
We consider running-time optimization for band-joins in a distributed system, e.g., the cloud. 
To balance load across worker machines, input has to be partitioned, which causes duplication. 
We explore how to resolve this tension between \emph{maximum load per worker}
and \emph{input duplication} for band-joins between two relations. 
Previous work suffered from high optimization cost or considered partitionings
that were too restricted (resulting in suboptimal join performance).
Our main insight is that \emph{recursive partitioning of the join-attribute space}
with the appropriate split scoring measure can achieve
both low optimization cost and low join cost.
It is the first approach that is not only effective for one-dimensional band-joins
but also for joins on multiple attributes.
Experiments indicate that our method
is able to find partitionings that are within 10\% of the \emph{lower bound} for both
maximum load per worker and input duplication for a broad range of settings,
significantly improving over previous work.
\end{abstract}

 \begin{CCSXML}
<ccs2012>
<concept>
<concept_id>10002951.10002952.10003190.10003195.10010837</concept_id>
<concept_desc>Information systems~MapReduce-based systems</concept_desc>
<concept_significance>500</concept_significance>
</concept>
<concept>
<concept_id>10002951.10002952.10003190.10003195.10010838</concept_id>
<concept_desc>Information systems~Relational parallel and distributed DBMSs</concept_desc>
<concept_significance>300</concept_significance>
</concept>
</ccs2012>
\end{CCSXML}

\ccsdesc[500]{Information systems~MapReduce-based systems}
\ccsdesc[300]{Information systems~Relational parallel and distributed DBMSs}

\keywords{band-join; distributed joins; running-time optimization}

\maketitle

\section{Introduction}
\label{sec:intro}

Given two relations $S$ and $T$, the band-join $S \bowtie_B T$ returns all pairs $(s \in S, t \in T)$
that are ``close'' to each other. Closeness is determined based on band-width constraints
on the join attributes which we also call \emph{dimensions}.
This is related to (but in some aspects more general than, and in others a special case of)
\emph{similarity joins} (see \Cref{sec:related}).
The Oracle Database SQL Language Reference guide~\cite{OracleSQLlanguage19c} presents a
one-dimensional (1D) example of a query finding employees whose salaries differ by at most
\$100. Their discussion of band-join specific optimizations highlights the operator's importance.
Zhao et al~\cite{zhaoRDW16:arraySimJoin} describe an astronomy application where celestial objects are matched
using band conditions on time and coordinates \emph{ra} (right ascension) and \emph{dec} (declination).
This type of approximate matching based on space and time is very common in practice and leads to
three-dimensional (3D) band-joins like this:
\begin{example} \label{ex:eBird}
Consider bird-observation table \texttt{B} with columns 
\texttt{longitude}, 
\texttt{latitude}, 
\texttt{time}, \texttt{species}, 
\texttt{count}, and weather table \texttt{W},
reporting precipitation and temperature for location-time combinations.
A scientist studying how weather affects bird sightings wants to join these tables 
on attributes \texttt{longitude}, \texttt{latitude}, and \texttt{time}.
Since weather reports do not cover the exact time and location of the bird sighting,
she uses a band-join to link each bird report with
weather data for ``nearby'' time and location, e.g.,
$|\mathtt{B.longitude} - \mathtt{W.longitude}| \le 0.5$ \texttt{AND}
$|\mathtt{B.latitude} - \mathtt{W.latitude}| \le 0.5$ \texttt{AND}
$|\mathtt{B.time} - \mathtt{W.time}| \le 10$.
\end{example}

\begin{figure}[tb]
\includegraphics[width=\linewidth]{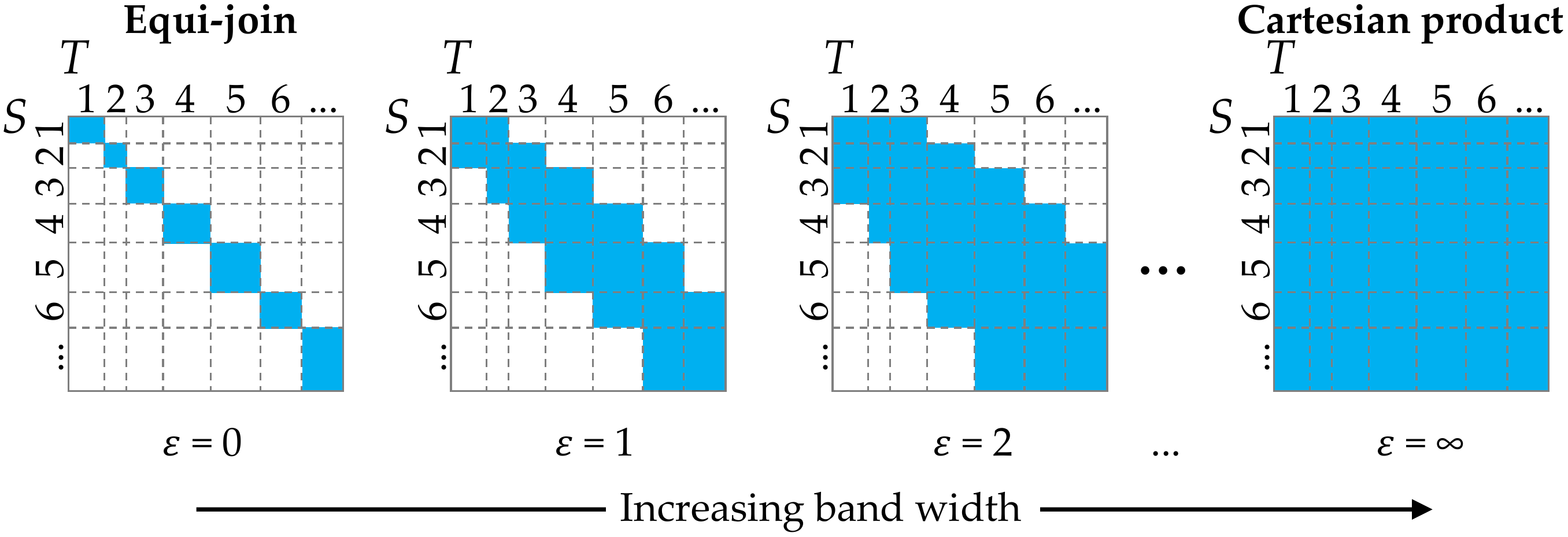}
\caption{Join matrices for one-dimensional (1D) band-join $|S.A - T.A| \le \varepsilon$
for increasing band width $\varepsilon$ from equi-join ($\varepsilon=0$) to Cartesian product ($\varepsilon=\infty$).
Numbers on matrix rows and columns indicate distinct $A$-values of input tuples. Cell $(i, j)$ corresponds to attribute
pair $(s_i, t_j)$ and is shaded iff the pair fulfills the join condition and is in the output.}
\label{fig:bandJoinSpectrum}
\end{figure}

We are interested in minimizing end-to-end running time of distributed band-joins,
which is the sum of (1) \emph{optimization time} (for finding a good execution strategy) and 
(2) \emph{join time} (for the join execution).
Join time depends on the \emph{data partitioning}
used to assign input records to worker machines.
As seen in \Cref{fig:bandJoinSpectrum}, band-joins generalize equi-join and Cartesian product.
Partitioning algorithms with optimality guarantees exist only for these two
extremes~\cite{beame2014skew,li2018submodularity,Okcan:theta-join}.

\begin{figure}[t]
\centering
    \begin{subfigure}[t]{0.47\linewidth}
	\centering
	\includegraphics[scale=0.28]{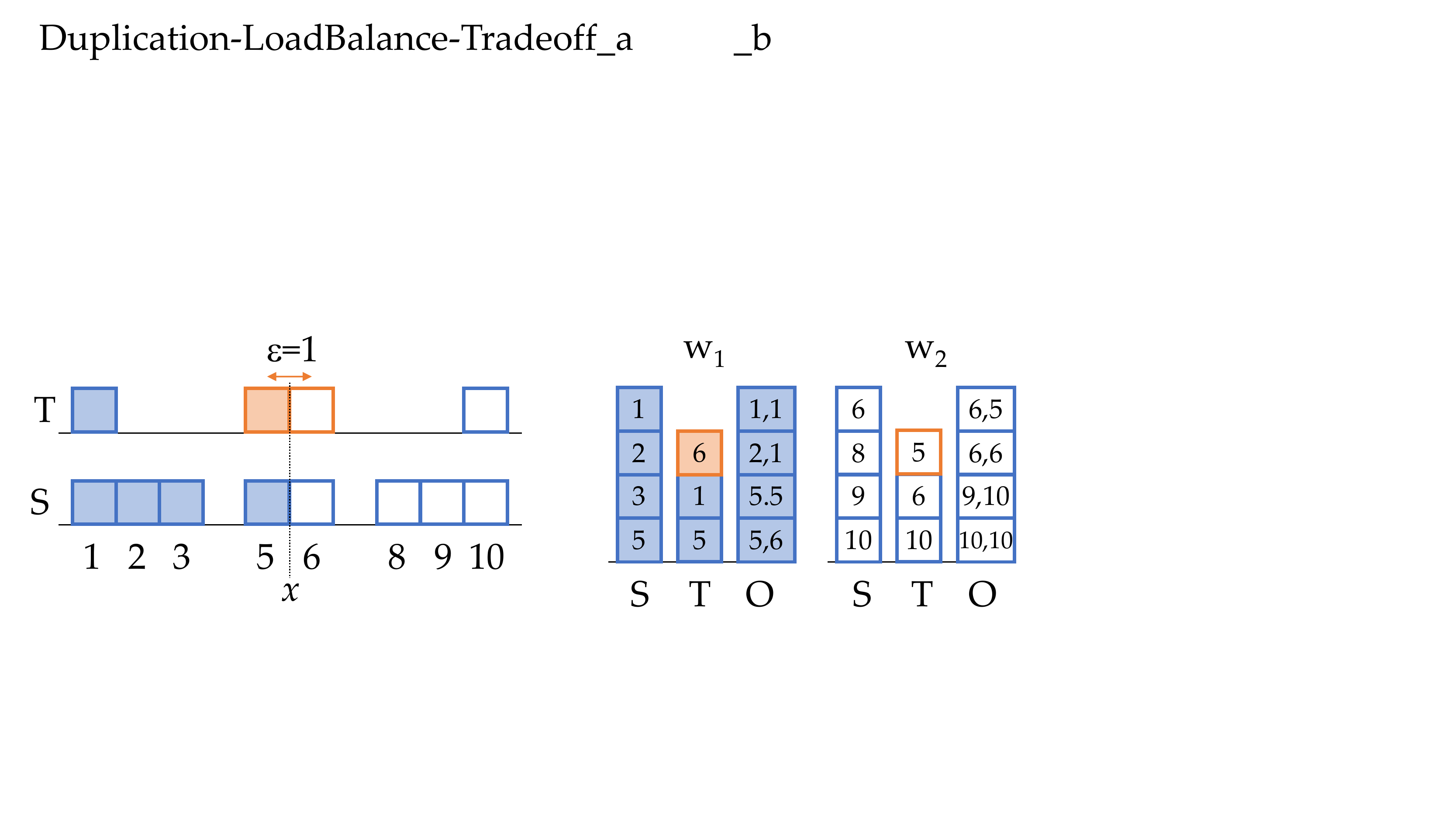}
	\caption{Input distribution}
	\label{fig:Duplication-LoadBalance-Tradeoff_a}
	\end{subfigure}
	\hspace{2mm}
    \begin{subfigure}[t]{0.47\linewidth}
	\centering
	\includegraphics[scale=0.28]{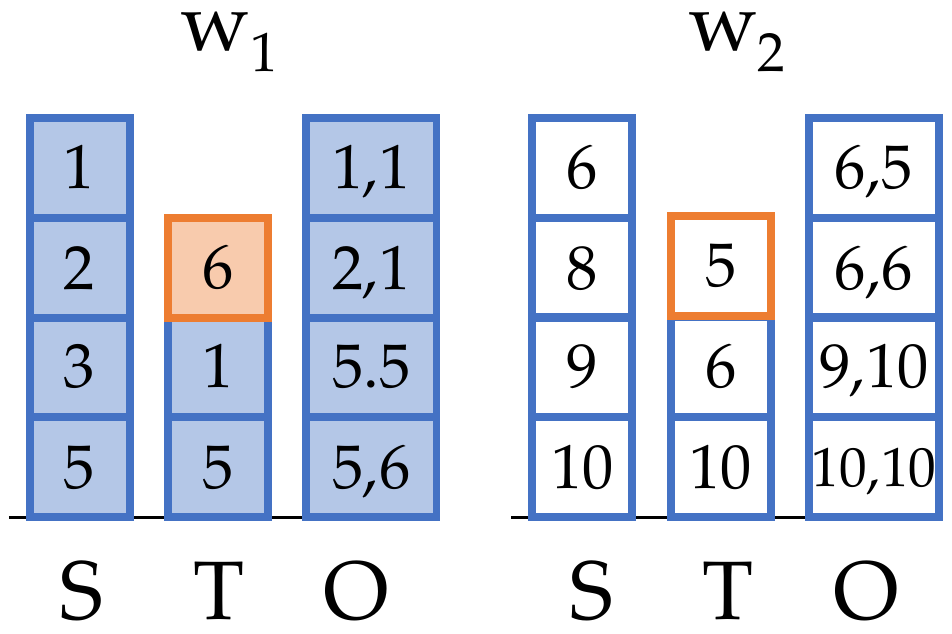}
	\caption{Load distribution}
	\label{fig:Duplication-LoadBalance-Tradeoff_b}
	\end{subfigure}
    \begin{subfigure}[t]{0.47\linewidth}
	\centering
	\includegraphics[scale=0.28]{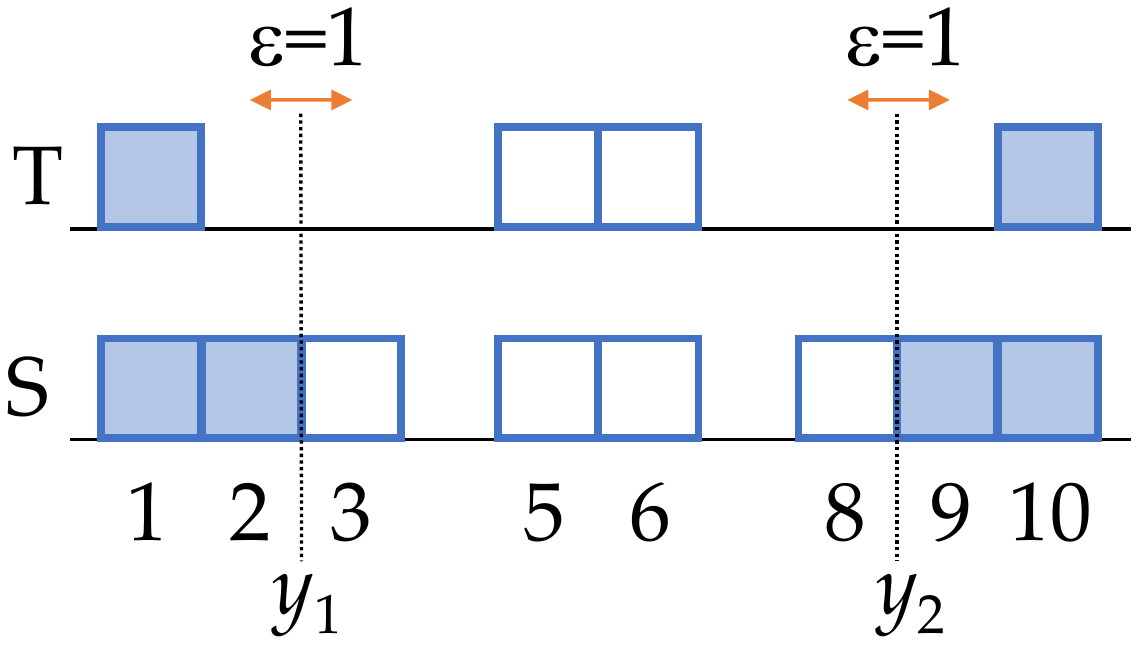}
	\caption{Input distribution}
	\label{fig:Duplication-LoadBalance-Tradeoff_e}
	\end{subfigure}
	\hspace{2mm}
    \begin{subfigure}[t]{0.47\linewidth}
	\centering
	\includegraphics[scale=0.28]{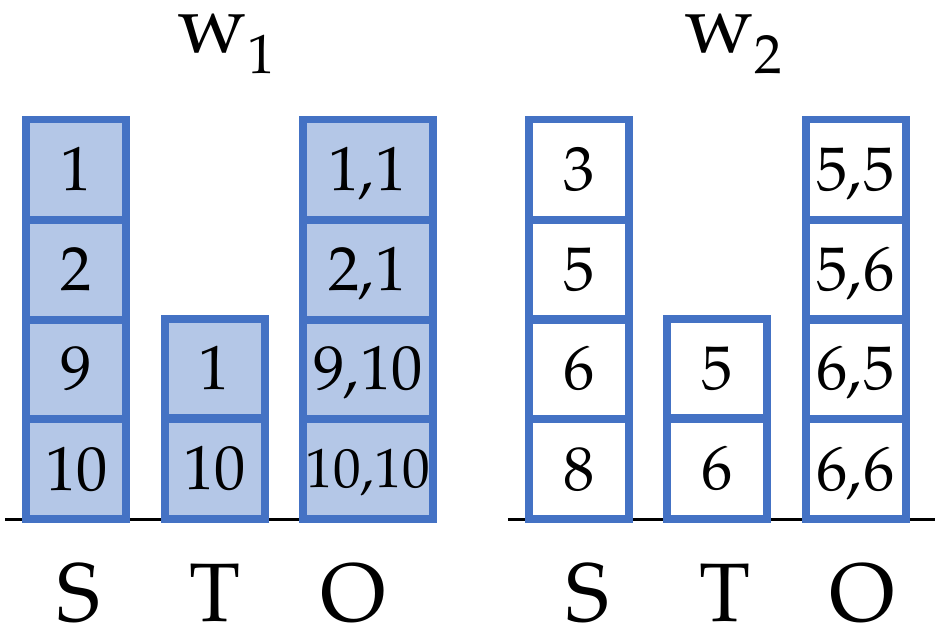}
	\caption{Load distribution}
	\label{fig:Duplication-LoadBalance-Tradeoff_f}
	\end{subfigure}
\caption{Input ($S$, $T$) and output ($O$) on workers $w_1$ and $w_2$ when
splitting on $x$ (top row). The $T$-tuples shown in orange are duplicated because they are within
the band width of the split point. When splitting on $y_1$ and $y_2$, no tuple is duplicated
and load is perfectly balanced (bottom row).}
\label{fig:dupVsBalanceX}
\end{figure}

\begin{example}\label{ex:motivating_splittingPoints}
To see why distributed band-joins are difficult,
consider a 1D join with band width $\epsilon = 1$ of $S=\{1,2,3,5,6,8,9,10\}$ and $T=\{1,5,6,10\}$
on $w=2$ workers. For balancing load, we may split $S$ on value $x$
and send the left half to worker $w_1$ and the right half to $w_2$
(see \cref{fig:Duplication-LoadBalance-Tradeoff_a}).
To not miss results near the split point, all $T$-tuples within band width $\varepsilon = 1$
of $x$ have to be copied across the boundary.
\Cref{fig:Duplication-LoadBalance-Tradeoff_b} shows the resulting input and output tuples
on each worker, with duplicates in orange.
By splitting in sparse regions of $T$, e.g., on $y_1$ and $y_2$
(\cref{fig:Duplication-LoadBalance-Tradeoff_e,fig:Duplication-LoadBalance-Tradeoff_f}), 
perfect load balance can be achieved without input duplication.
\end{example}

\emph{The main contribution of this work is a novel algorithm \method (\textbf{Rec}ursive \textbf{Part}itioning)
that quickly and efficiently finds split points such as $y_1$ and $y_2$.} To do so, it has to carefully
navigate the tradeoff between load balance and input duplication. For instance, $y_1$ by itself appears like a poor
choice from a load-balance point of view. It takes the additional split on $y_2$ to unlock $y_1$'s true potential.

\begin{figure}[tb]
\centering
\includegraphics[width=0.9\linewidth]{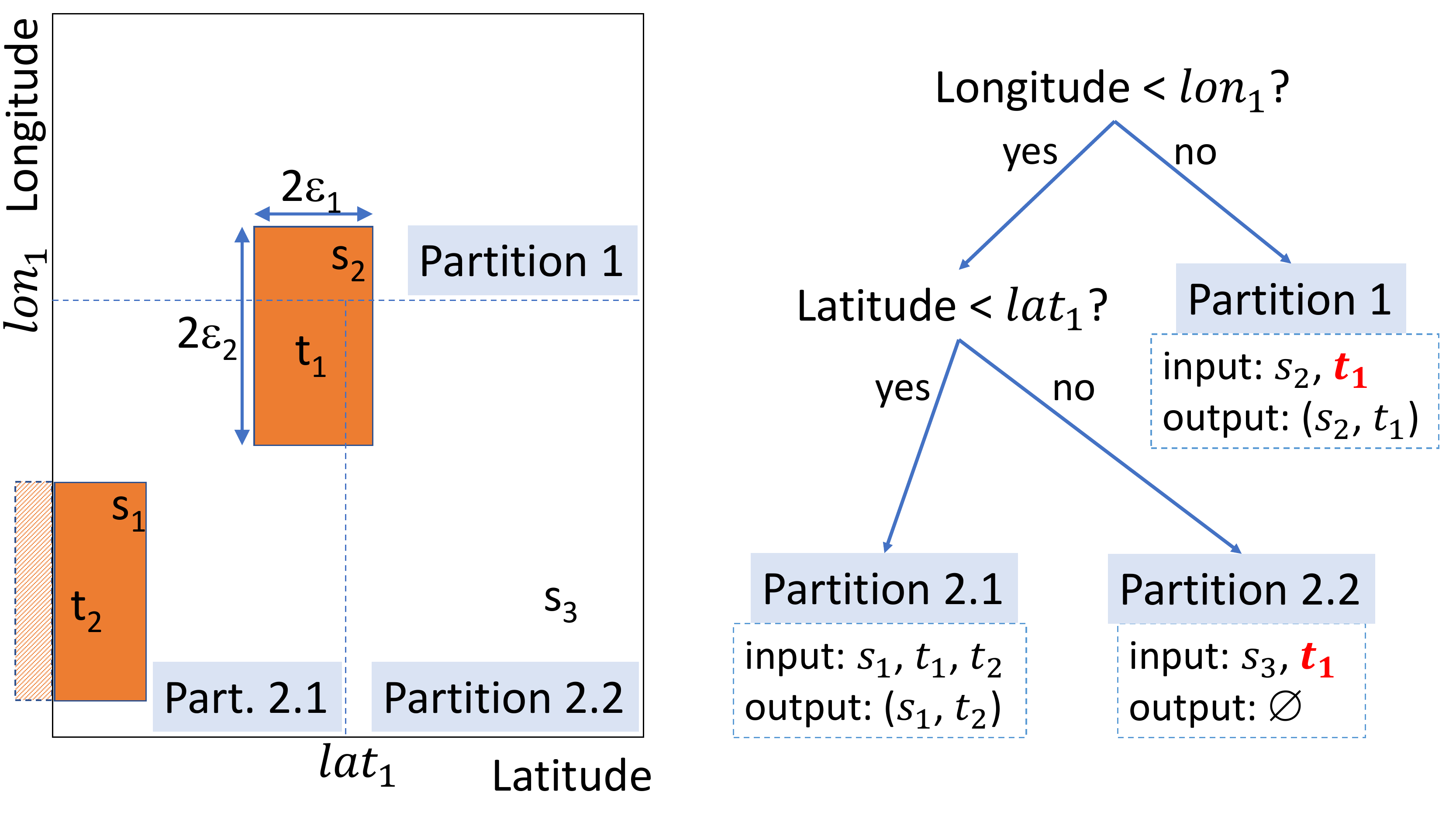}
\caption{Recursive \emph{split tree} for a 2D band-join on latitude and longitude. All splits are $T$-splits, i.e.,
$T$-tuples within band width of the split boundary are sent to both children. For instance, at the root
the $\varepsilon$-range (orange box) of $t_1$ crosses the $\mathrm{lon}_1$ line and therefore the tuple is copied
to the right sub-tree (partition 1). The same happens again for the split on $\mathrm{lat}_1$.
This ensures that no match is missed (e.g., $(s_2, t_1)$) and no output tuple is produced twice.
}
\label{fig:latLonPartitions}
\end{figure}

\introparagraph{Overview of the approach}
We propose recursive (i.e., hierarchical) partitioning of the join-attribute space, because it
offers a broad variety of partitioning options that can be explored efficiently.
As illustrated by the \emph{split tree} in \Cref{fig:latLonPartitions}, each path from the root node to a leaf defines a partition of the
join-attribute space as a conjunction of the split predicates in the nodes along the path.
Like \emph{decision trees} in machine learning~\cite{han2011data}, \method's split tree is grown
from the root, each time recursively splitting some leaf node. This step-wise expansion is a perfect
solution for the problem of \emph{navigating two optimization goals}: minimizing max worker load
and minimizing input duplication. As \method grows the tree, input duplication is monotonically
increasing, because more tuples may have to be copied across a newly added split boundary. At the same time,
large partitions are broken up and hence load balance may improve.

To find good partitionings, it is important to (1) use an appropriate scoring function to pick
a good split point (e.g., choose $y_1$ or $y_2$ over $x$ in \Cref{fig:dupVsBalanceX})
and to (2) choose the best leaf to be split next. We propose the \emph{ratio between
load balance improvement and additional input duplication} for both decisions.
In \Cref{ex:motivating_splittingPoints}, this would favor $y_1$ and $y_2$
over $x$, because they add zero duplication. Similarly, a leaf node with a zero-duplicate split
option would be preferred over a leaf whose split would cause duplication.

When a leaf becomes so small that virtually all tuples in the corresponding partition join with
each other, then it is not split any further. However, if the load induced by that partition
is high, then the leaf ``internally'' uses a grid-style partitioning inspired by
 \methodOneB~\cite{Okcan:theta-join} to create more fine-grained partitions. This is motivated by the
observation that the band-join computation in a sufficiently small partition behaves like
a Cartesian product---for which \methodOneB was shown to be near-optimal.

\introparagraph{Main contributions}
(1) We demonstrate analytically and empirically that previous work falls short either due to high optimization time
(to find a partitioning) or due to high join time (caused by an inferior partitioning), especially for band-joins
in more than one dimension.

(2) To address those shortcomings, we propose \emph{recursive partitioning} of the multidimensional
join-attribute space. Given a fixed-size input and output sample, our
algorithm \method finds a partitioning in $\bigO(w \log w + w d)$, where $w$ is the number of workers
and $d$ is the number of join attributes, i.e., the dimensionality of the band-join.
\method is inspired by decision trees~\cite{han2011data}, which had not been explored in the context
of optimizing running time of distributed band-joins. To make them work for our problem, we identify
a \emph{new scoring measure} to determine the best split points: \emph{ratio of load variance reduction
to input duplication increase}. It is informed by our observation that a good split
should improve load balance with minimal additional input duplication.
We also identify a \emph{new stopping condition} for tree growth.

\begin{figure}[tbp]
\centering
\includegraphics[width=0.7\linewidth]{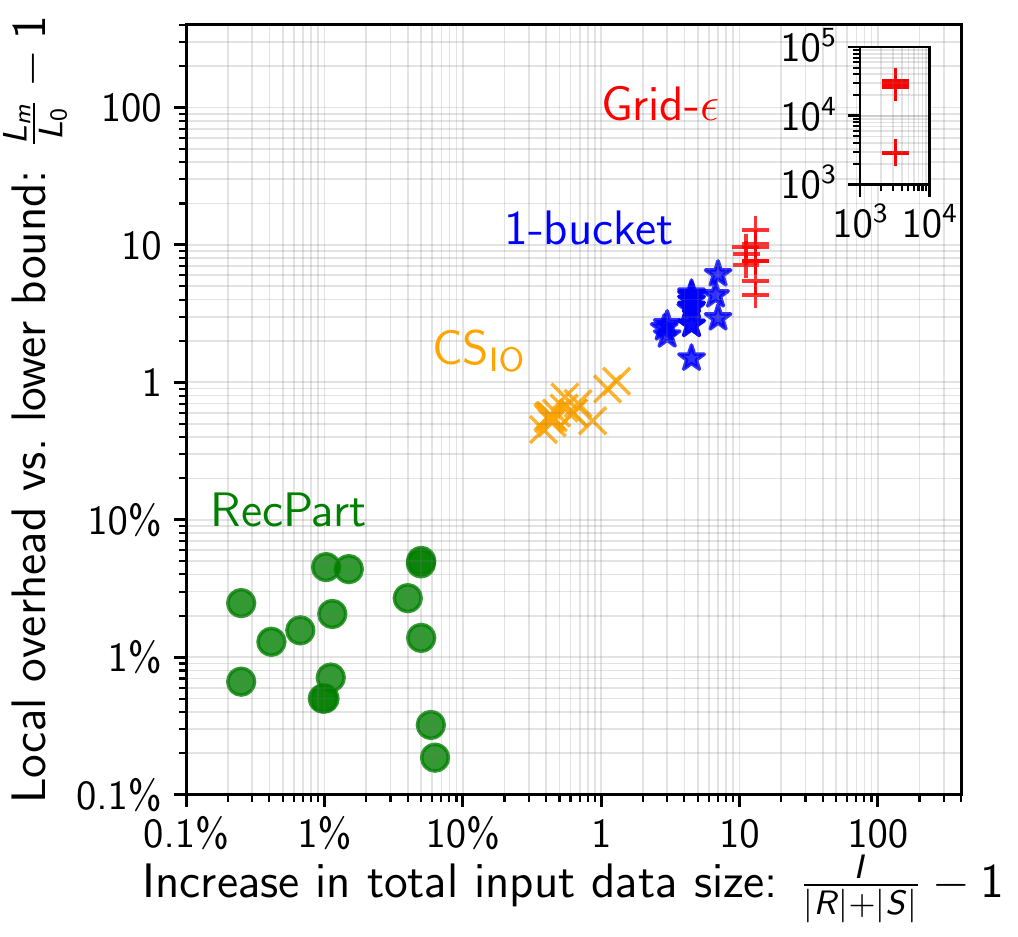}
\caption{Total input duplication (x-axis) and maximum overhead across workers (y-axis) 
for a variety of data points
for our method \method vs.\ 3 competitors
(see \cref{sec:exp} for details).
\method is always within 10\% of the lower bounds 
(0\% duplication and 0\% overhead).}
\label{fig:comp-overhead}
\end{figure}

(3) While we could not prove near-optimality of \method's partitioning, our experiments
provide strong empirical evidence. Across a variety of datasets,
cluster sizes, and join conditions, \method always found partitions for which both 
total input duplication 
and
max worker load 
were within 10\% of the corresponding \emph{lower bounds}, beating
all competitors (even those with significantly higher optimization cost) by a wide margin.
\Cref{fig:comp-overhead} shows this for a variety of problems
(notice the log scale).
The definition of the axes, the algorithms, and the detailed experiments are presented in \Cref{sec:problemDef},
\Cref{sec:related}, and \Cref{sec:exp}, respectively.

(4) We prove new \Cref{lem:gridBalanceBad,lem:gridBalanceGood} that characterize
when grid partitioning will be effective for distributed band-joins.

\section{Problem Definition}
\label{sec:problemDef}

Without loss of generality, let $S$ and $T$ be two relations with the same schema $(A_1, A_2,\ldots, A_d)$.
Given a band width $\varepsilon_i \ge 0$ for each attribute $A_i$, the band-join of $S$ and $T$ is defined as
\[
S \bowtie_B T = \{(s, t): s \in S \land t \in T \land \forall_{1 \le i \le d} |s.A_i - t.A_i| \le \varepsilon_i\}.
\]
We call $d$ the \emph{dimensionality} of the join and $A_i$ the $i$-th \emph{dimension}.
We refer to the $d$-dimensional hyper-rectangle centered around a tuple $a$ with side-length
$2\varepsilon_i$ in dimension $i$, formally
$\{(x_1,\ldots, x_d):\; \forall_{1 \le i \le d}\, a.A_i - \varepsilon_i \le x_i \le a.A_i + \varepsilon_i\}$,
as the \emph{$\varepsilon$-range} around $a$
(depicted as orange box in \Cref{fig:latLonPartitions}). 
Note that $(s, t)$ is in the output iff $s$ falls into the $\varepsilon$-range around $t$ (and vice versa). 
It is straightforward to generalize all results in this paper to asymmetric band conditions
($a.A_i - \varepsilon_{i_L} \le x_i \le a.A_i + \varepsilon_{i_R}$) and to
relations with attributes that do not appear in the join condition.

\begin{definition}[Join Partitioning]\label{def:joinPartitioning}
	Given input relations $S$ and $T$ with $Q = S \bowtie_B T$ and $w$ worker machines.
	A \emph{join partitioning} is an assignment $h: (S \cup T) \rightarrow 2^{\{1,\ldots, w\}}\setminus \emptyset$
	of each input tuple to one
	or more workers so that each join result $q \in Q$ can be recovered by exactly one local join.
\end{definition}

A \emph{local join} is the band-join executed by a worker on the input subset it receives.
The definition ensures that \emph{each output tuple is produced by exactly one worker},
which avoids an expensive post-processing phase for duplicate elimination and is in line
with previous work~\cite{beame2014skew,khayyat2017fast,Okcan:theta-join,vitorovic2016load}.

\begin{figure}[tb]
\centering
\includegraphics[width=\linewidth]{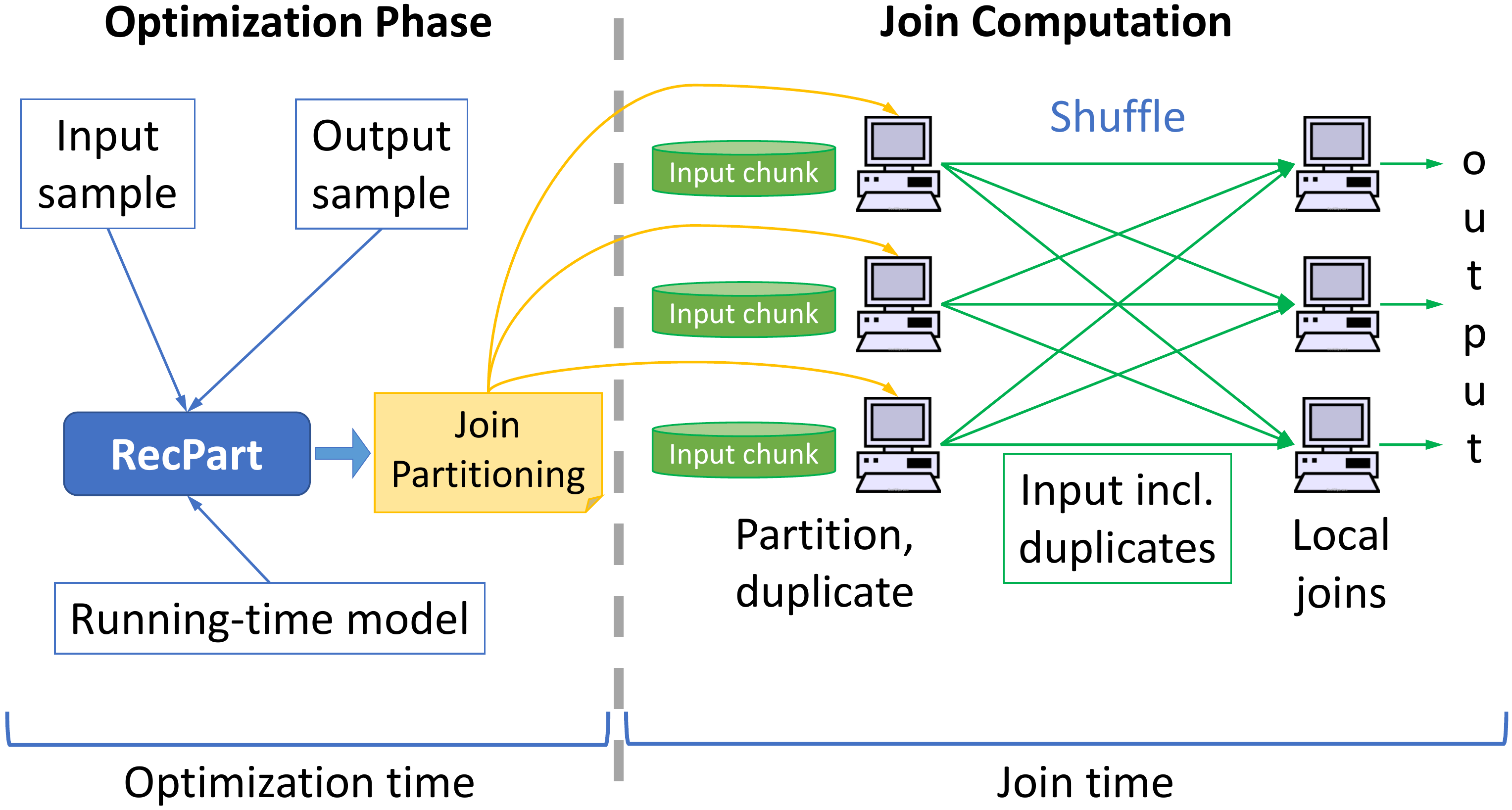}
\caption{Overview of the proposed approach.}
\label{fig:systemOverview}
\end{figure}

Given $S$, $T$, and a band-join condition,
our goal is to minimize the time to compute $S \bowtie_B T$.
This time is the sum of \emph{optimization time} (i.e., the time to find a join partitioning) 
and \emph{join time} (i.e., the time to compute
the result based on this partitioning), as illustrated in \Cref{fig:systemOverview}.

We follow common practice and define the load
$L_i$ on a worker $w_i$ as the weighted sum
$L_i = \beta_2 I_i + \beta_3 O_i$, $0 \le \beta_2, 0 \le \beta_3$
of input $I_i$ and output $O_i$ assigned to
it~\cite{beame2014skew,li2018submodularity,Okcan:theta-join,vitorovic2016load}.
\emph{Max worker load} $L_m = \max_i L_i$ is the maximum load
assigned to any worker. In addition, we also evaluate a partitioning based on its
\emph{total amount of input} $I$. It accounts for given inputs $S$ and $T$ and all duplicates
created by the partitioning, i.e., $I = \sum_{x \in S \cup T} |h(x)|$.
Recall from \cref{def:joinPartitioning} that $h$ assigns input tuples to a subset of the workers, i.e., $|h(x)|$ is the number of workers
that receive tuple $x$.

\begin{lemma}[Lower Bounds]\label{lem:lowerBounds}
$|S|+|T|$ is a lower bound for total input $I$.
And $L_0 = (\beta_2 (|S|+|T|) + \beta_3 |S \bowtie_B T|) / w$
is a lower bound for max worker load $L_m$.
\end{lemma}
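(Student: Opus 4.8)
The plan is to establish each bound separately by a direct counting/averaging argument, starting from the definitions given in the excerpt.

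For the total-input bound, the key observation is that every input tuple must be sent to at least one worker. Formally, \Cref{def:joinPartitioning} requires $h(x) \in 2^{\{1,\ldots,w\}} \setminus \emptyset$ for every $x \in S \cup T$, so $|h(x)| \ge 1$. First I would write
\[
I = \sum_{x \in S \cup T} |h(x)| \ge \sum_{x \in S \cup T} 1 = |S \cup T| = |S| + |T|,
\]
where the last equality uses that $S$ and $T$ are treated as disjoint multisets of tuples (even if they share attribute values, they are distinct input records). This gives the first claim immediately.

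For the max-worker-load bound, I would use an averaging argument over the $w$ workers. The total load summed across all workers is $\sum_{i=1}^{w} L_i = \beta_2 \sum_i I_i + \beta_3 \sum_i O_i$. The crucial step is to lower-bound the two aggregate quantities. For the output term, \Cref{def:joinPartitioning} guarantees that each result $q \in Q = S \bowtie_B T$ is produced by \emph{exactly} one worker, so $\sum_i O_i = |S \bowtie_B T|$ exactly. For the input term, each input tuple is assigned to at least one worker, so $\sum_i I_i = \sum_{x} |h(x)| = I \ge |S| + |T|$ by the first part. Combining, $\sum_i L_i \ge \beta_2(|S|+|T|) + \beta_3|S \bowtie_B T|$. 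Since the maximum of $w$ nonnegative numbers is at least their average,
\[
L_m = \max_i L_i \;\ge\; \frac{1}{w}\sum_{i=1}^{w} L_i \;\ge\; \frac{\beta_2(|S|+|T|) + \beta_3|S \bowtie_B T|}{w} = L_0,
\]
which is exactly the claimed bound.

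**The only subtle point**—and the step I would be most careful about—is justifying $\sum_i O_i = |S \bowtie_B T|$ rather than merely $\ge$. This relies entirely on the ``exactly one local join'' clause of \Cref{def:joinPartitioning}: no output tuple is produced more than once (so the sum is not inflated) and none is missed (so it is not deficient). Everything else is a routine application of the pigeonhole/averaging principle, and I do not expect any real obstacle; the lemma is essentially a restatement of the definitions packaged through one averaging inequality.
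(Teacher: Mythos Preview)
Your proposal is correct and follows essentially the same approach as the paper: the paper argues that each input tuple must be examined by at least one worker (giving $I \ge |S|+|T|$) and that any partitioning distributes total input of at least $|S|+|T|$ and total output of $|S \bowtie_B T|$ over $w$ workers, so the maximum load is at least the average. Your write-up simply spells out these steps more explicitly, including the averaging inequality and the role of the ``exactly one'' clause for $\sum_i O_i$, which the paper leaves implicit.
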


The lower bound for total input $I$ follows from \Cref{def:joinPartitioning}, because each input tuple has to be examined by
at least one worker. For max worker load, note that any partitioning has to distribute a total input of at least $|S|+|T|$
and a total output of $|S \bowtie_B T|$ over the $w$ workers, for a total load of at least
$\beta_2 (|S|+|T|) + \beta_3 |S \bowtie_B T|$.

\introparagraph{System Model and Measures of Success}
We consider the standard Hadoop MapReduce and Spark environment where inputs $S$ and $T$ are stored in
files. These files may have been chunked up before the computation starts, with chunks distributed
over the workers. Or they may reside in a separate cloud storage service such as Amazon's S3. The files
are not pre-partitioned on the join attributes and the workers do not have advance knowledge which chunk
contains which join-attribute values. Hence any desired join partitioning requires
that---in MapReduce terminology---(1) the entire input
is read by map tasks and (2) a full shuffle is performed to group the data according to the partitioning and
have each group be processed by a reduce task.

In this setting, the shuffle time is determined by total input $I$, 
and the duration of the reduce phase (each worker performing local joins)
is determined by max worker load. Hence we are interested in evaluating
how close a partitioning comes to the lower bounds for total input and max worker load (\Cref{lem:lowerBounds}).
We use $\frac{I-(|S|+|T|)}{|S|+|T|}$ and $\frac{L_m-L_0}{L_0}$, respectively, which measure by how much
a value exceeds the lower bound, relative to the lower bound. For instance, for $L_m=11$ and $L_0=10$ we
obtain 0.1, meaning that the max worker load of the partitioning is 10\% higher than the lower bound.

For systems with very fast networks, an emerging
trend~\cite{binnigCGKZ16:endOfSlowNetworks,rodiger2016flow}, data transfer time is
negligible compared to local join time, therefore the goal is to minimize max worker load, i.e.,
the success measure is $\frac{L_m-L_0}{L_0}$.
In applications where the input is already \emph{pre-partitioned on the join attributes}
(e.g., for dimension-dimension array joins~\cite{duggan2015skew}) the optimization goal
concentrates on reducing data movement~\cite{zhaoRDW16:arraySimJoin}. There our approach can be used
to \emph{find the best pre-partitioning}, i.e., to chunk up the array on the dimensions.

In addition to comparing to the lower bounds, we also measure end-to-end running time for a MapReduce/Spark
implementation of the band-join in the cloud. For join-time estimation we rely on the model
by Li et al.~\cite{Li2018DAPD}, which was shown to be sufficiently accurate to
optimize running time of various algorithms, including equi-joins. Similar to the equi-join model, our band-join
model $\mathcal{M}$ takes as input triple $(I, I_m, O_m)$ and estimates join time as
a piecewise linear model
$\mathcal{M}(I, I_m, O_m) = \beta_0 + \beta_1 I + \beta_2 I_m + \beta_3 O_m$.
The $\beta$-coefficients
are determined using linear regression on a small benchmark of training queries and inputs.

\section{Related Work}
\label{sec:related}

\subsection{Direct Competitors}
\label{sec:mainCompetitors}

\begin{figure}
\centering
\setlength{\tabcolsep}{1mm}
\renewcommand{\arraystretch}{1.1}
\small

\begin{tabular}{| c |c|c|c|c|} 
\hline
	\multicolumn{2}{|c|}{{Join-Matrix Covering}} 
	& \multicolumn{2}{c|}{{Attribute-Space Partitioning}} 
	& \multirow{2}{*}{$d$} 	
\\\cline{1-4}
	\methodOneB \cite{Okcan:theta-join}
	& \methodVit~\cite{vitorovic2016load} 
 	& \methodGrid \cite{dewitt1991evaluation,soloviev1993truncating} 
	& \method 
	&
\\\hline
    \multirow{3}{*}{\raisebox{-12mm}[0mm][0mm]{ \includegraphics[width=0.2\linewidth]{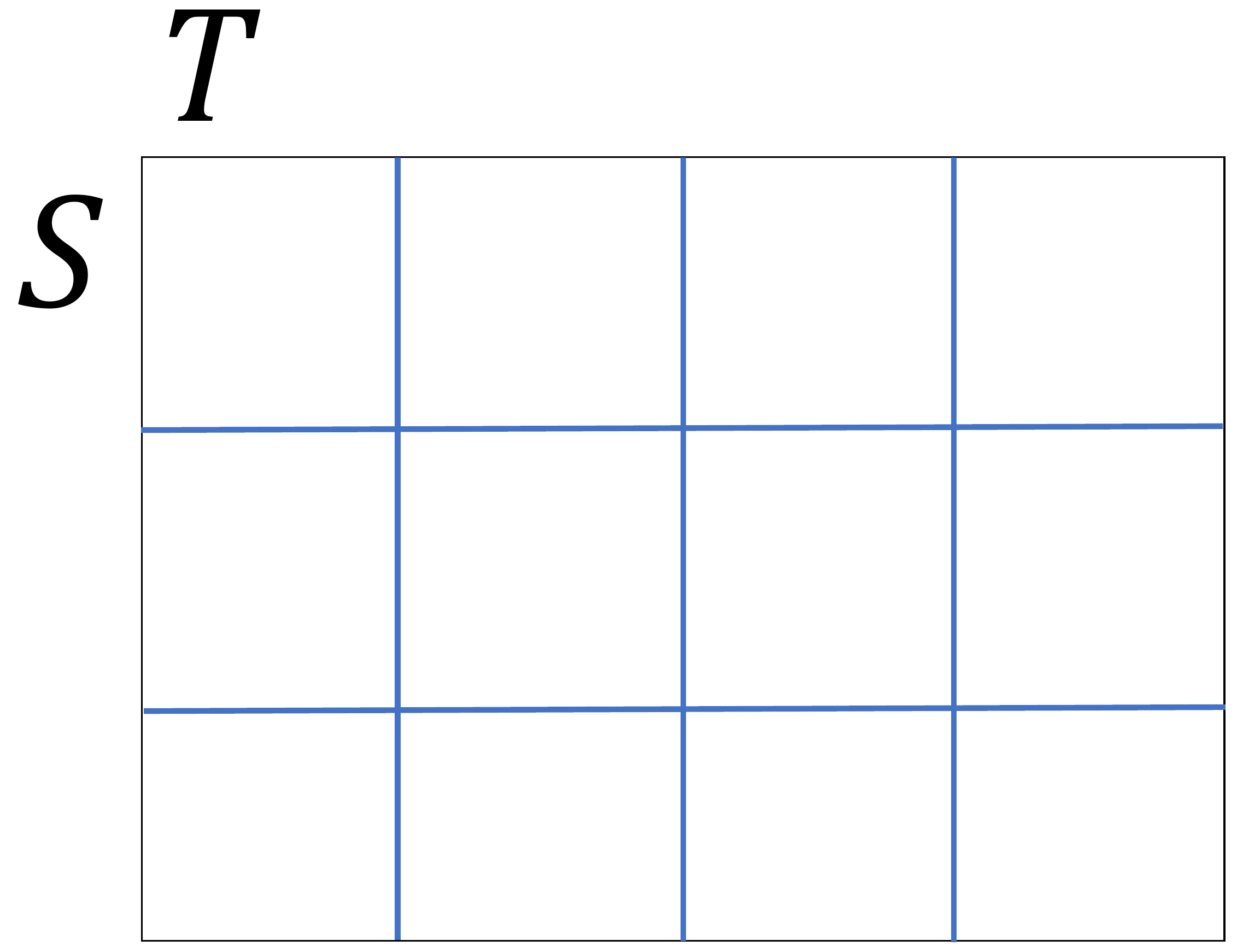} }}  
    & \rule{0pt}{12ex} 
	  \includegraphics[width=0.2\linewidth]{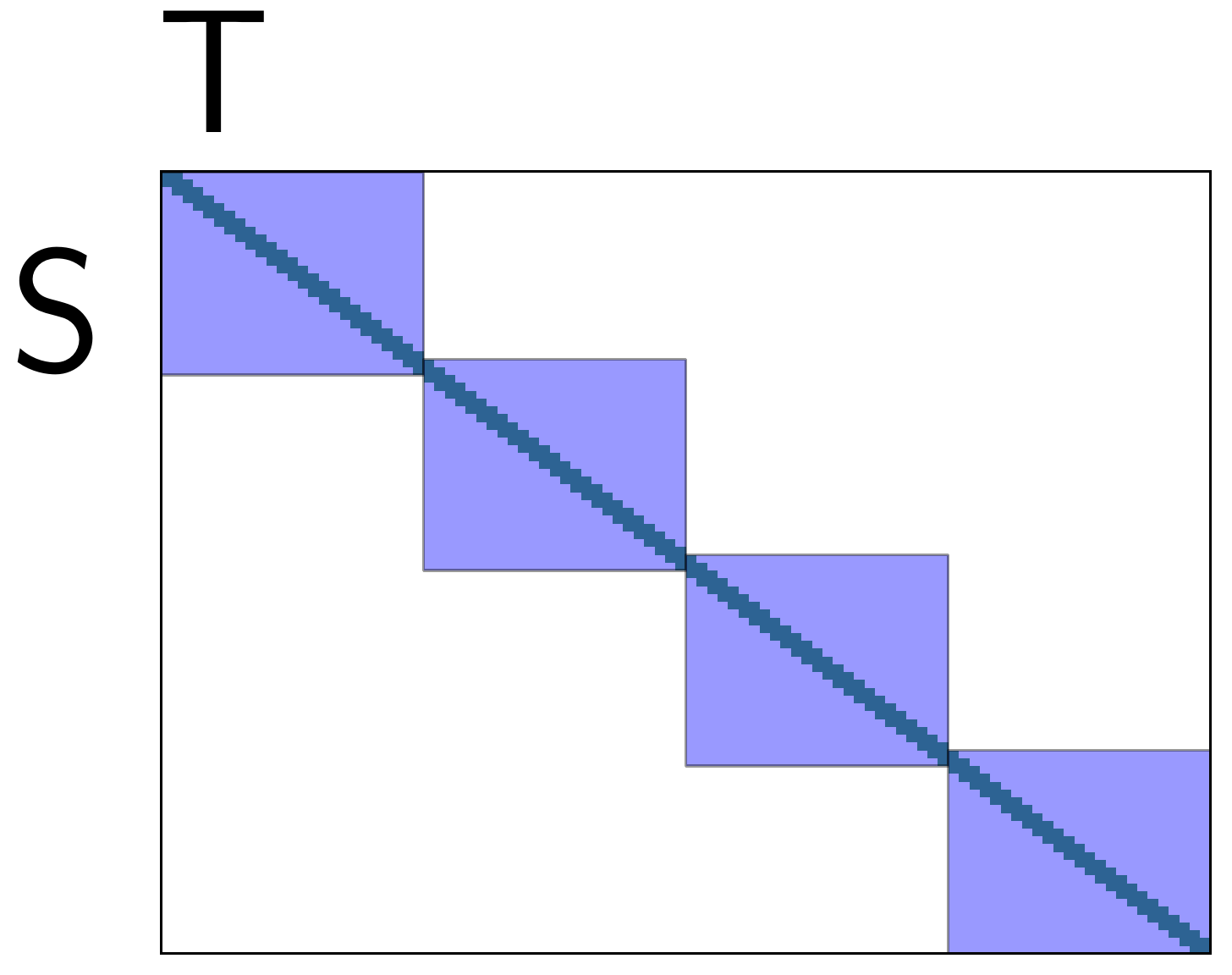} 
	& \includegraphics[width=0.2\linewidth]{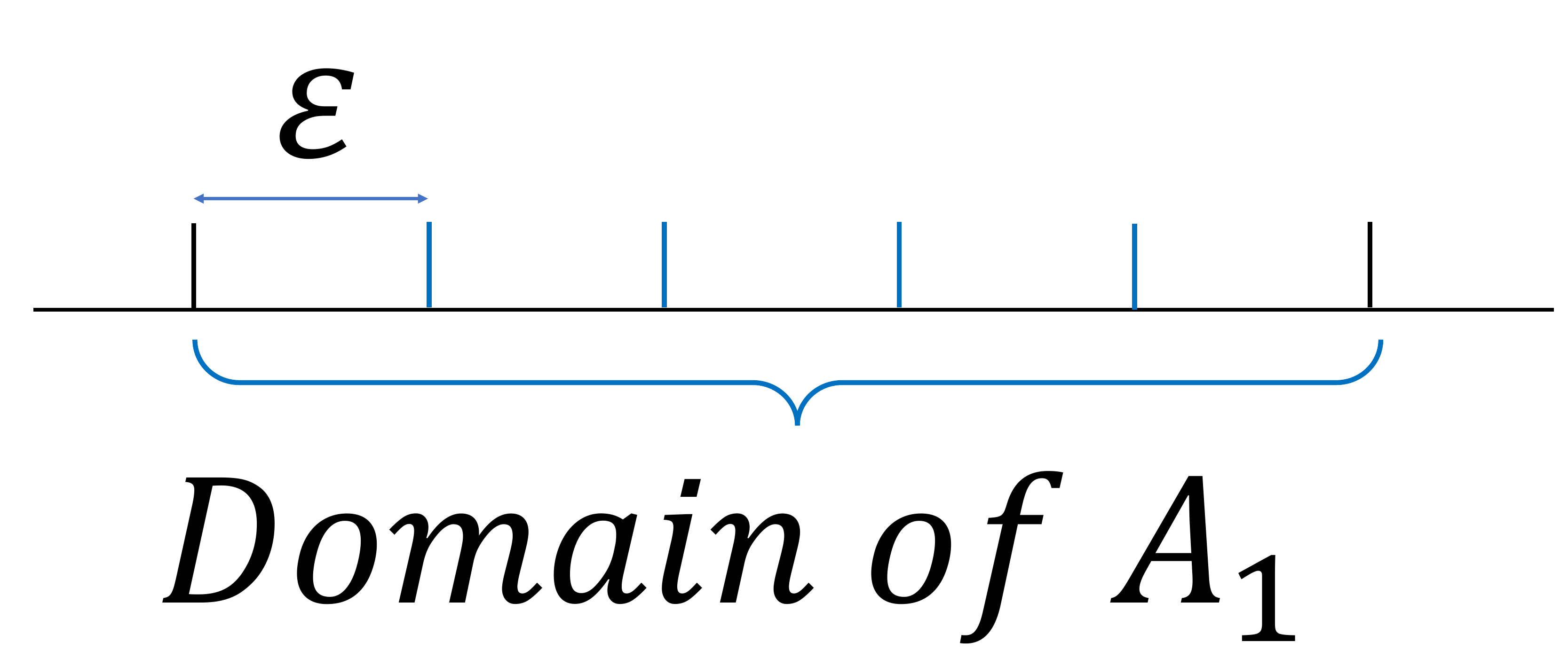} 
    & \includegraphics[width=0.2\linewidth]{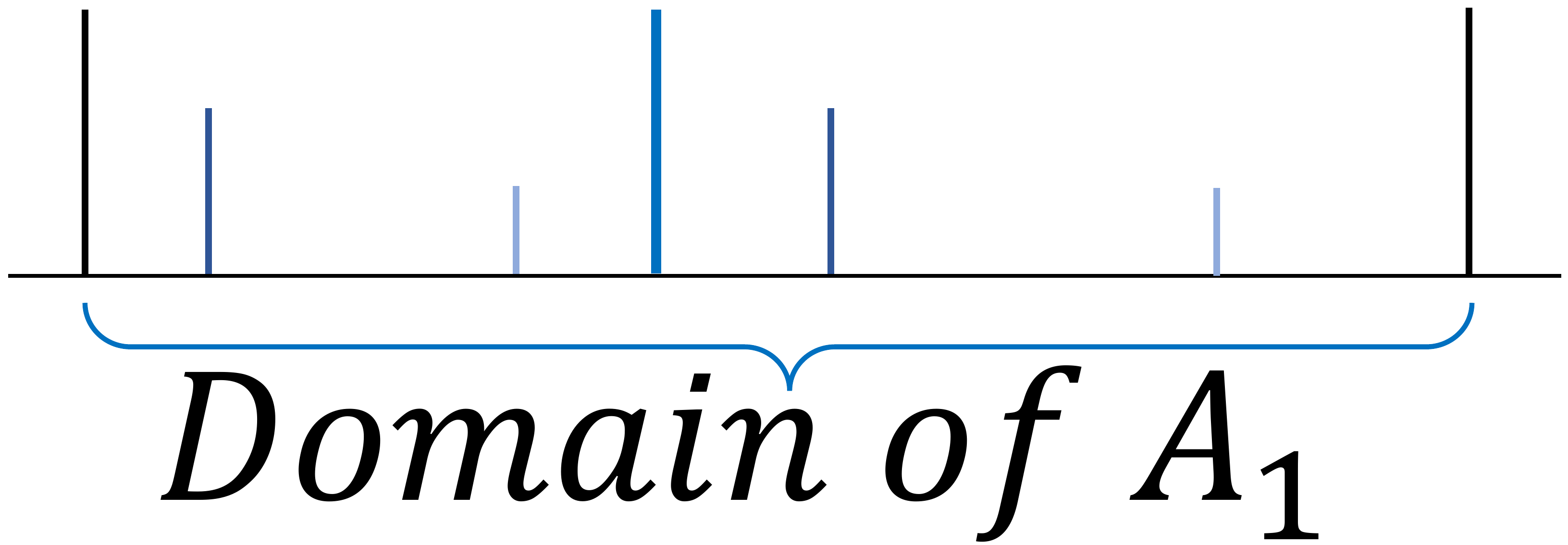} 
	& \raisebox{5mm}[0mm][0mm]{1}   	
\\\cline{2-5}
    & \rule{0pt}{12ex}
	  \includegraphics[width=0.2\linewidth]{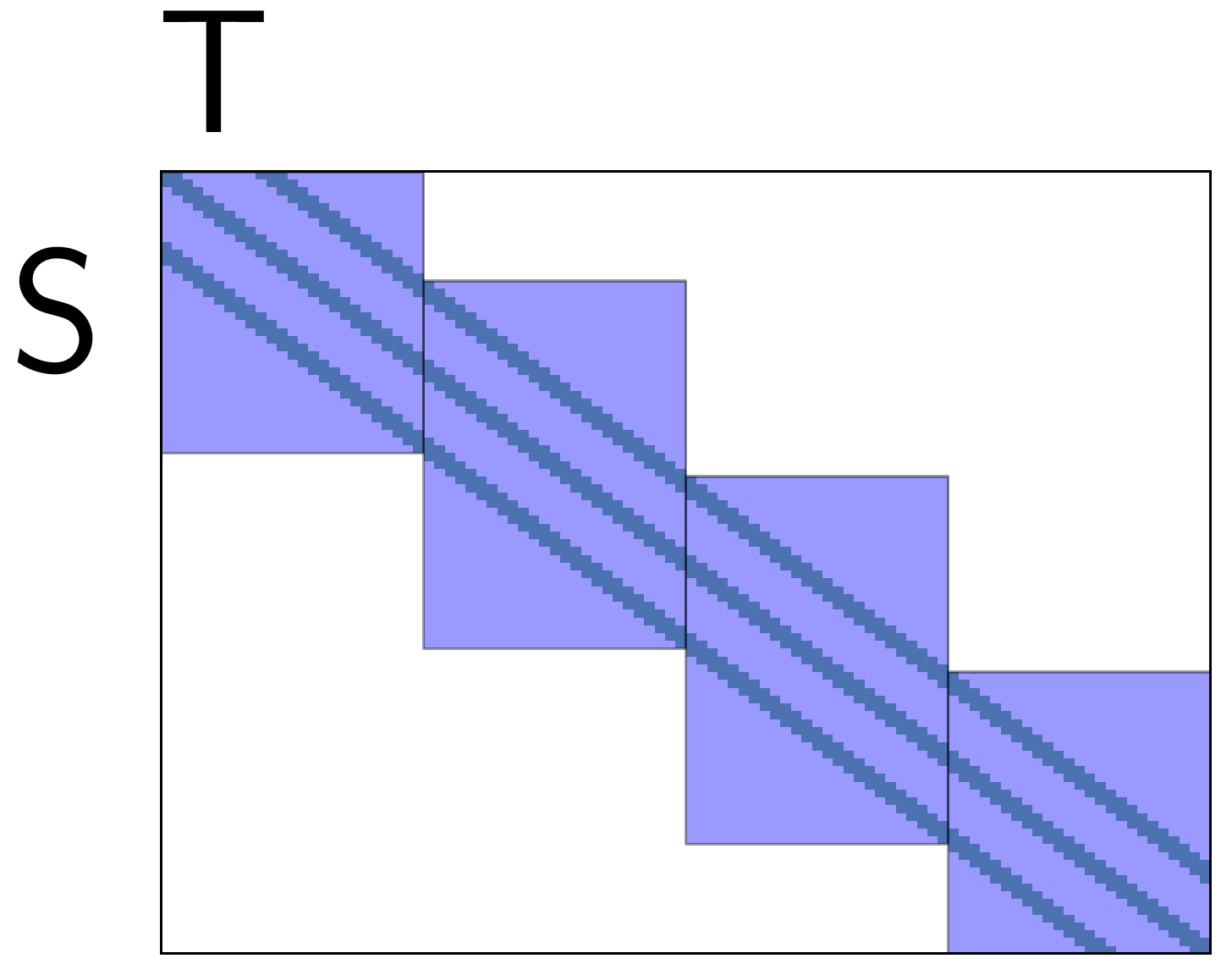} 
	& \includegraphics[width=0.2\linewidth]{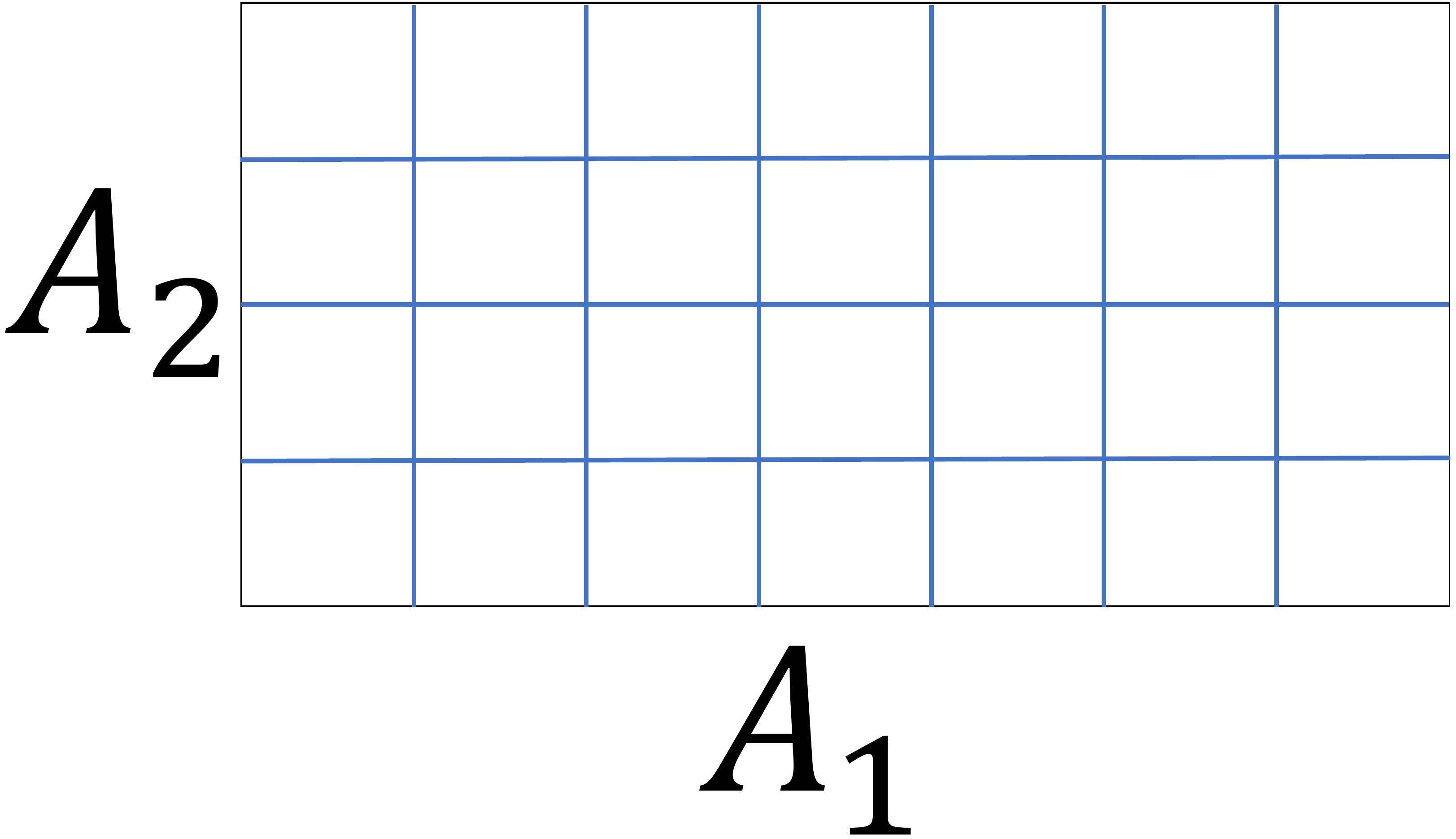} 
    & \includegraphics[width=0.2\linewidth]{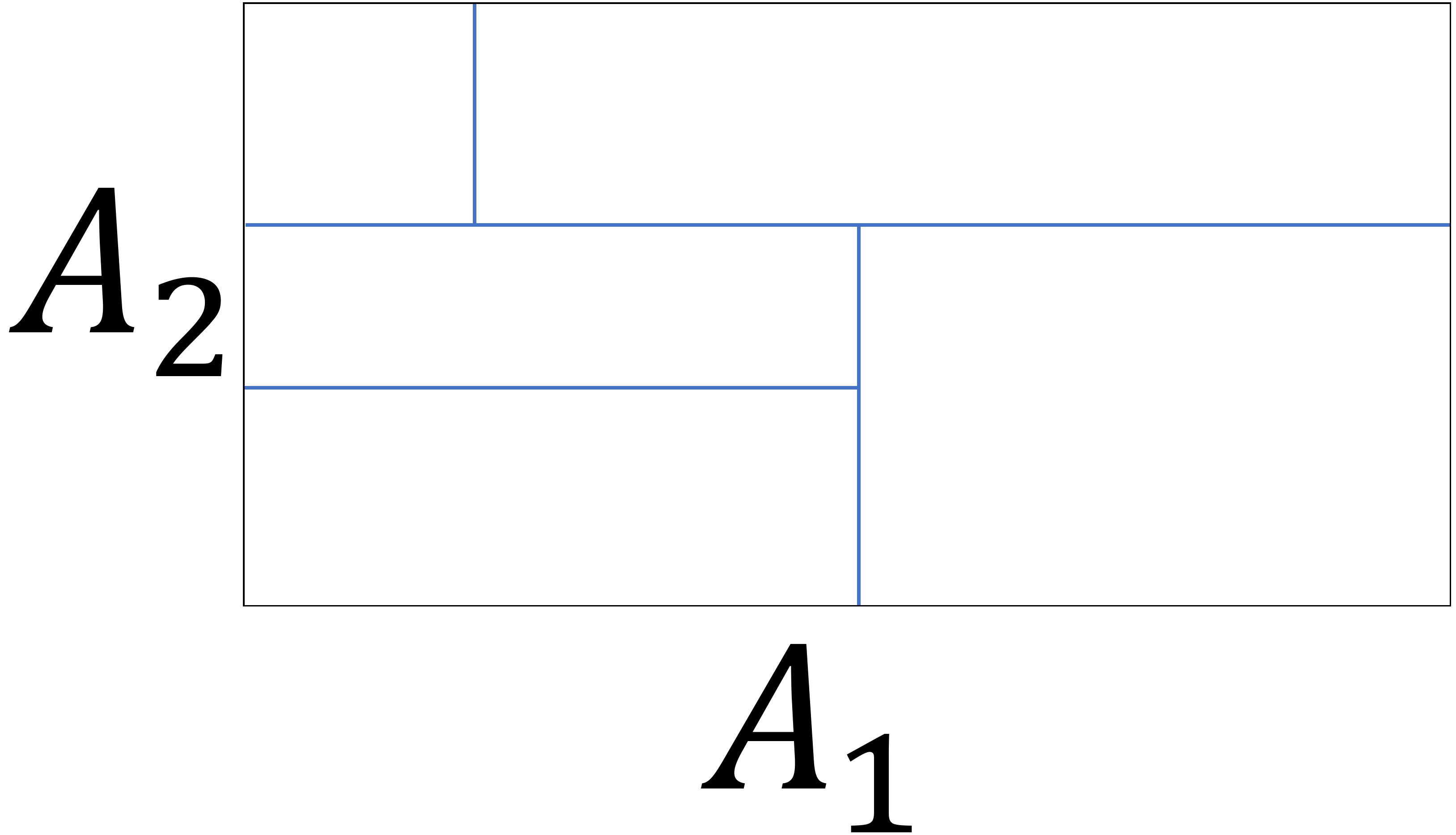} 
	& \raisebox{5mm}[0mm][0mm]{2}
\\\cline{2-5}
    & \rule{0pt}{12ex} 
	  \includegraphics[width=0.2\linewidth]{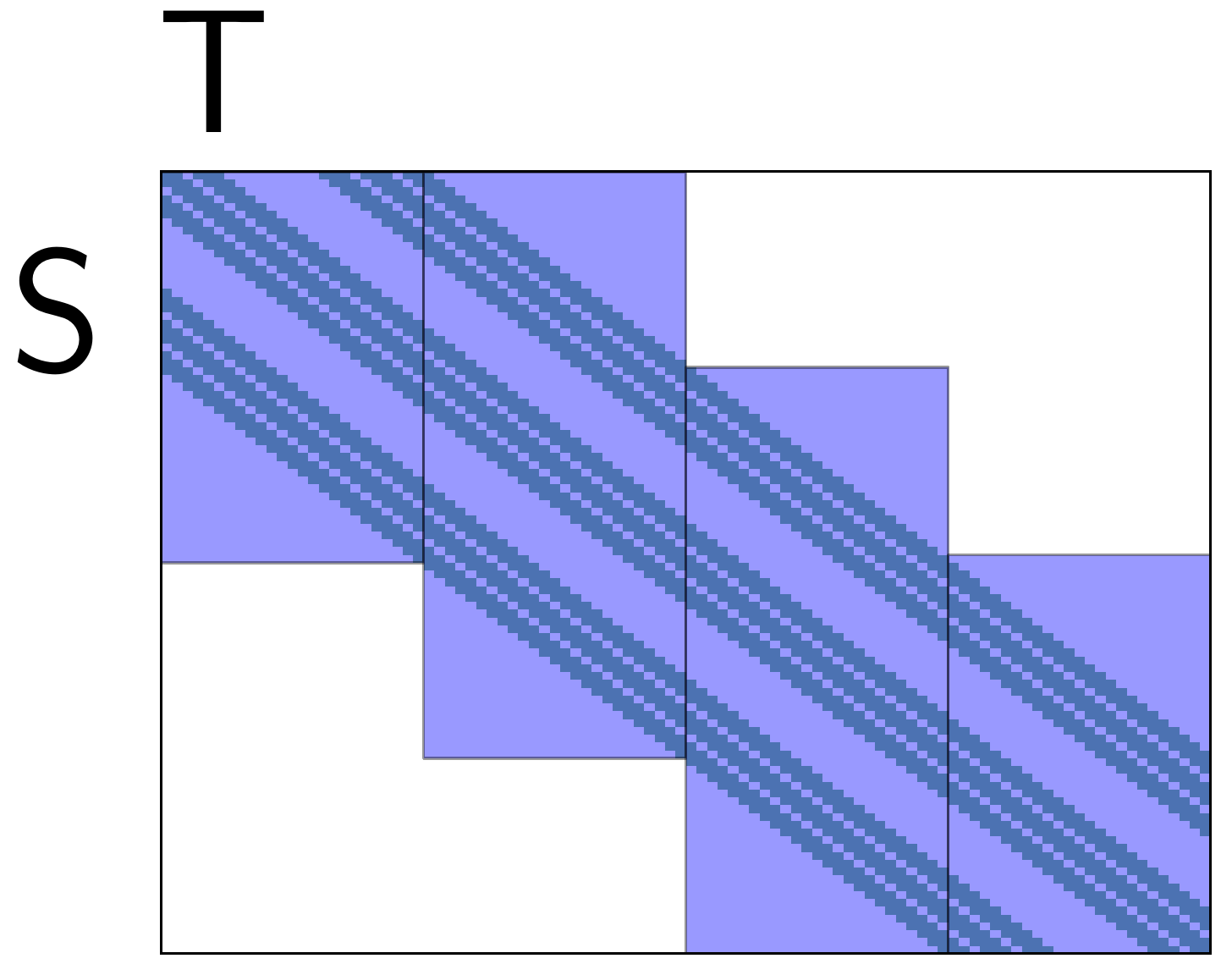}
    & \includegraphics[width=0.2\linewidth]{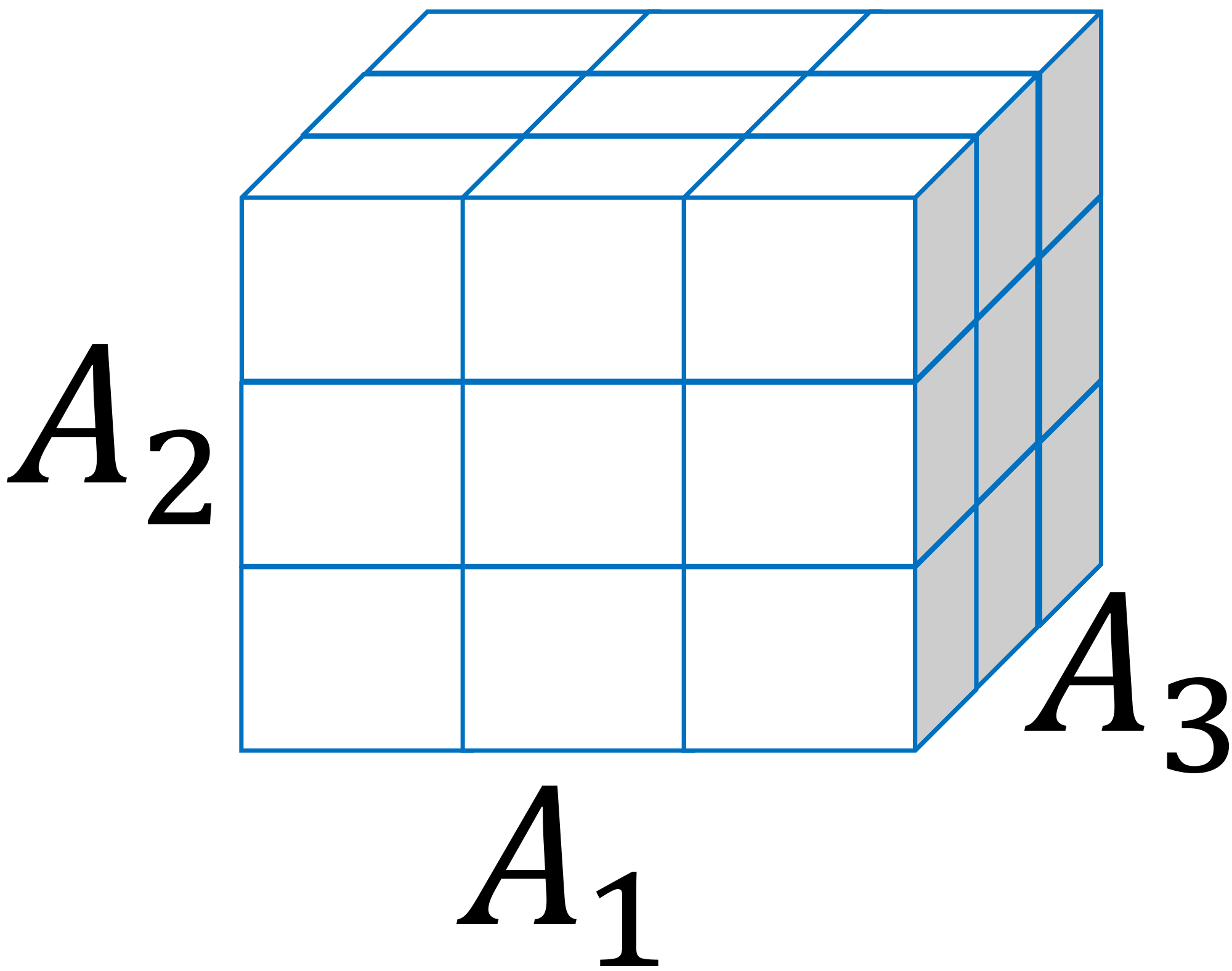} 
    & \includegraphics[width=0.2\linewidth]{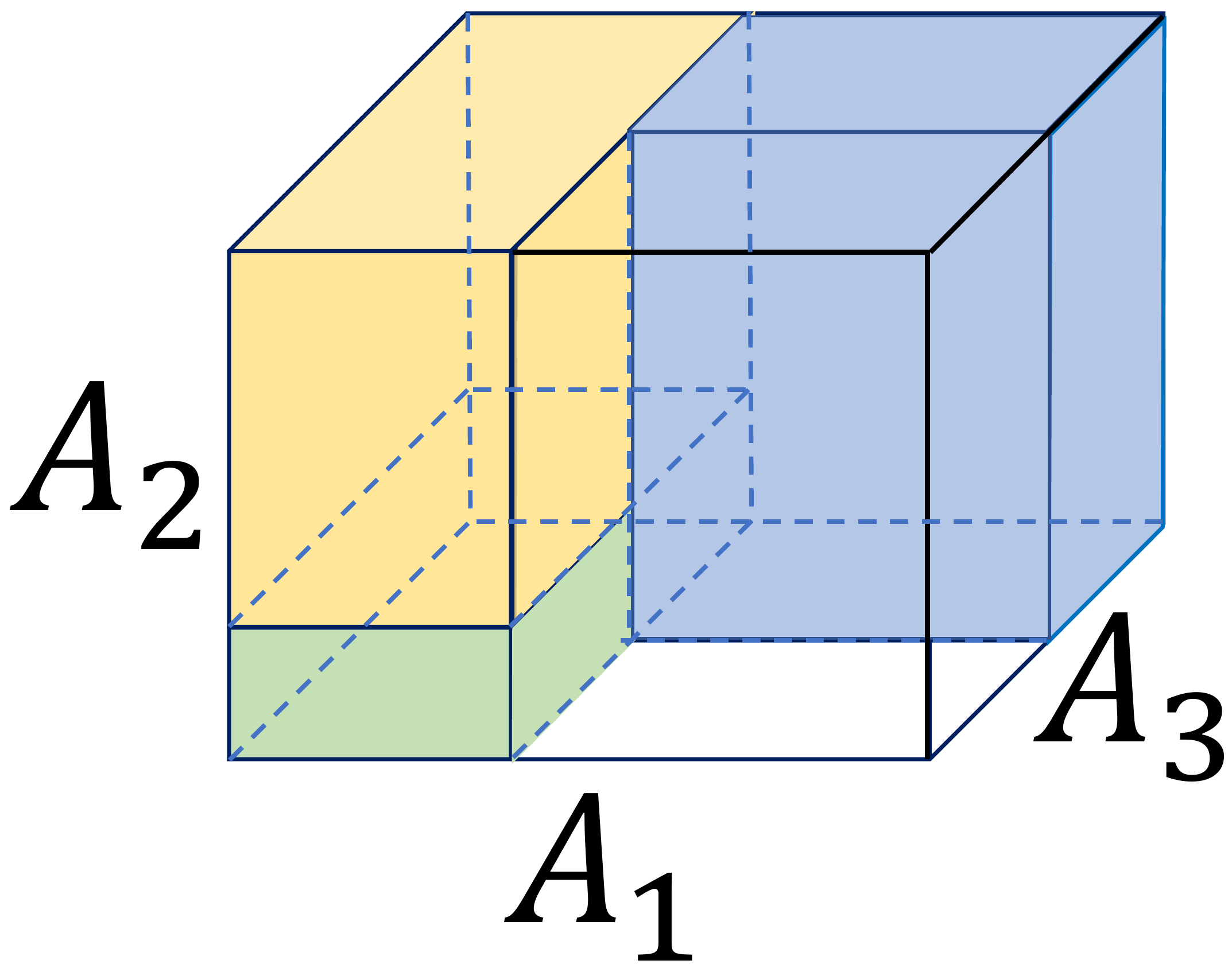} 
    & \raisebox{5mm}[0mm][0mm]{3}
\\\hline
\end{tabular}

\caption{
Illustration of partitioning methods for band-joins in $d$-dimensional space for $d = 1, 2, 3$; the $A_i$ are the join attributes. 
\methodGrid and \method partition the $d$-dimensional \emph{join-attribute space}, 
while \methodVit and \methodOneB create partitions by finding a cover of the 2-dimensional \emph{join matrix} 
$S \times T$, 
whose dimensions are independent of the dimensionality of the join condition. 
Bar height for \method and $d=1$ indicates recursive partition order.
}
\label{tab:partitioning}

\end{figure}

Direct competitors are approaches that (1) support distributed band-joins and (2) optimize
for load balance, i.e., max worker load or a similar measure. We classify them
into \emph{join-matrix covering} vs \emph{attribute-space partitioning}.

\introparagraph{Join-matrix covering}
These approaches model distributed join computation as a covering problem
for the \emph{join matrix}
$J = S \times T$, whose rows correspond to $S$-tuples and columns to $T$-tuples. A cell $J(s, t)$
is ``relevant'' iff $(s, t)$ satisfies the join condition. Any theta-join, including band-joins,
can be represented as the corresponding set of relevant cells in $J$.
A join partitioning can then be obtained by covering all relevant cells with non-overlapping
regions. Since the exact set of relevant cells is not known a priori (it corresponds to the
to-be-computed output), the algorithm covers a larger region of the matrix that
is guaranteed to contain all relevant cells. For instance, for inequality predicates,
M-Bucket-I~\cite{Okcan:theta-join} partitions both inputs on approximate quantiles
in one dimension and then covers with $w$ rectangles all regions corresponding to
combinations of inter-quantile ranges from $S$ and $T$ that could potentially
contain relevant cells. IEJoin~\cite{khayyat2017fast} directly uses the same
quantile-based range partitioning, but without attempting to find a $w$-rectangle
cover. Its main contribution is a clever \emph{in-memory} algorithm for queries with
two join predicates. Optimizing local processing is orthogonal to our focus on how
to \emph{assign input tuples to multiple workers}. In fact, one can use
an adaptation of their idea for local band-join computation on each worker.

To support any theta-join, \methodOneB~\cite{Okcan:theta-join} covers the entire join
matrix with a grid of $r$ rows and $c$ columns.
This is illustrated for $r=3$ and $c=4$ in \Cref{tab:partitioning}. 
Each $S$-tuple is randomly assigned to one of the $r$ rows
(which implies that it is sent to \emph{all} $c$ partitions in this row);
this process is analogous for $T$-tuples, which are assigned to random columns.
While randomization achieves near-perfect load balance,
input is duplicated approximately $\sqrt{w}$ times.

Zhang et al.~\cite{Zhang2012:multiwayThetaJoin} extend \methodOneB to
joins between many relations.
Koumarelas et al.~\cite{Koumarelas2018:selectiveThetaJoins} explore re-ordering of
join matrix rows and columns to improve the running time of M-Bucket-I~\cite{Okcan:theta-join}.
However, like M-Bucket-I, their technique does not take output distribution into account.
This was shown to lead to poor partitionings by Vitorovic et al.~\cite{vitorovic2016load}
whose method \methodVit represents the state of the art for distributed theta-joins.
It relies on a carefully tuned optimization pipeline that first range-partitions $S$ and $T$
using approximate quantiles, then coarsens those partitions, and finally finds the
optimal (in terms of max worker load) rectangle covering of the coarsened matrix.
The resulting partitioning was shown to be superior---including for band-joins---to
direct quantile-based partitioning, which is used by IEJoin.
\Cref{tab:partitioning} illustrates \methodVit for a covering with four rectangles for 1, 2 and 3 dimensions. 
The darker diagonal ``bands'' show relevant matrix cells, i.e., cells that have to be covered.
Notice how join dimensionality affects relevant-cell locations, but does not
affect the dimensionality of the join matrix:
\emph{for a join between two input relations $S$ and $T$, the join matrix
is always two-dimensional, with one dimension per input relation}.

\methodVit suffers from high optimization cost to find the covering rectangles,
which uses a tiling algorithm of complexity $\bigO(n^5 \log n)$ for $n$ input tuples.
Optimization cost can be reduced by coarsening the statistics used.
Further reduction in optimization cost is achieved for monotonic join matrices, a property that holds for
1-dimensional band-joins but \emph{not for multidimensional ones}. As our experiments will show,
the high optimization cost hampers the approach for multidimensional band-joins.

\introparagraph{Attribute-space partitioning}
Instead of using the 2-dimensional $S \times T$ join matrix, attribute-space partitioning works in
the $d$-dimensional space
$A_1 \times A_2 \times\cdots\times A_d$ defined by the domains of the join attributes. 
Grid partitioning of the attribute space was explored in the early days of parallel databases,
yet only for one-dimensional conditions. Soloviev~\cite{soloviev1993truncating} proposes
the \emph{truncating hash algorithm} and shows that it improves over a parallel
implementation of the hybrid partitioned band-join algorithm by
DeWitt et al.~\cite{dewitt1991evaluation}.
The method generalizes to more dimensions as illustrated in the \methodGrid column in \Cref{tab:partitioning}.
Grid cells define partitions and are assigned to the workers.

By default, \methodGrid sets grid size for attribute $A_i$ to the band width $\varepsilon_i$ in that
dimension. This results in near-zero optimization cost, but may create a poor load balance
(for skewed input) and high input duplication (when a partition boundary cuts through a dense region).
A coarser grid reduces input duplication, but the larger partitions make load balancing more challenging.
Our approach \method, which also applies attribute-space partitioning, mitigates the problem
by considering recursive partitionings that avoid cutting through dense regions.

\subsection{Other Related Work}

\textbf{Similarity joins} are related to band-joins, but neither generalizes the other: the similarity joins closest to band-joins
define a pair $(s \in S, t \in T)$ as similar if $\mathrm{sim}(s,t) > \theta$,
for some similarity function $\mathrm{sim}$ and threshold $\theta$.
This includes \emph{1D} band-joins as a special case, but does not support band-joins in multiple dimensions.
(A band-join in $d$ dimensions has $2d$ threshold parameters for lower and upper limits in each dimension.)
A recent survey~\cite{FierABLF18:setSimilarityJoinExperimentSurvey} compares 10 distributed
set-similarity join algorithms. The main focus of previous work on similarity joins is on addressing the specific
challenges posed by working in a \emph{general metric space} where vector-space operations such as addition and
scalar multiplication (which band-joins can exploit) are not available.
A particular focus is on (1) identifying fast filters that prune away
a large fraction of candidate pairs without computing their similarity and 
(2) selecting pivot elements
or anchor points to form partitions, e.g., via sampling~\cite{SarmaHC14:clusterJoin}.

Duggan et al~\cite{duggan2015skew} study skew-aware optimization for \textbf{distributed array equi-joins}
(not band-joins). The work by Zhao et al~\cite{zhaoRDW16:arraySimJoin} is closest to ours, because they
introduce array similarity joins that can encode multi-dimensional band-join conditions.
However, it is not a direct competitor for \method, because it considers a different optimization problem:
The array is assumed to be already grid-partitioned into chunks on the join attributes and the main challenge
is to co-locate with minimal network cost those partitions that need to be joined.
Our approach is orthogonal for two reasons: 
First, we do not make any assumptions
about existing pre-partitioning on the join attributes and hence the join requires a full data shuffle.
Second, we show that for band-joins, grid partitioning is inferior to \method's recursive partitioning.
Hence \method provides new insights for choosing better array partitions when the array DBMS
anticipates band-join queries.

\textbf{Attribute-space partitioning} is explored in other contexts for optimization goals that are very different
from distributed band-join optimization. For array tiling, the goal is to minimize page accesses
of range queries~\cite{FurtadoB99:arrayTiling}. Here, like for histogram
construction~\cite{poosalaIHS96improvedHistograms,poosala:1997:selectivity}, the band-join's
data duplication across a split boundary is not taken into account.
Histogram techniques optimize for
a different goal: maximizing the information captured with a given number of partitions.
Only the equi-weight histograms by
Vitorovic et al.~\cite{vitorovic2016load} take input duplication into account.
We include their approach \methodVit in our comparison.

For equi-joins, several algorithms address skew by \textbf{partitioning heavy hitters}~\cite{beame2014skew,bruno2014advanced,dewitt1992practical,lu1994load,polychroniou2014track,poosala1996estimation,wolf1994new,xu2008handling}.
Other than the high-level idea of splitting up large partitions to improve load balance, the concrete approaches do
not carry over to band-joins: They rely on the property that tuples with different join values cannot be
matched, i.e., do not capture that tuples within band width of a split boundary
must be duplicated. However, our decision to use load variance for measuring load balance was inspired by the
state-of-the-art equi-join algorithm of Li et al.~\cite{li2018submodularity}. Earlier work relied on hash
partitioning~\cite{dewitt1985multiprocessor,kitsuregawa1983application} and focused on assigning partitions to processors~\cite{dewan1994predictive,elseidy2014scalable,harada1995dynamic,hua1991handling,kitsuregawa1990bucket,rodiger2016flow,walton1991taxonomy}. 
Empirical studies of parallel and distributed equi-joins include~\cite{blanas2010comparison,chu2015theory,dewitt1985multiprocessor,schneider1989performance,schneider1990tradeoffs}.

\section{Recursive Partitioning}
\label{sec:method}

We introduce \method and analyze its complexity.

\subsection{Main Structure of the Algorithm}

\begin{algorithm}[tb]
\caption{\method}
\label{alg:adapFull}
\SetAlgoLined
\LinesNumbered
\KwData{$S$, $T$, band-join condition, sample size $k$}
\KwResult{Hierarchical partitioning $\mathcal{P}^*$ of $A_1 \times\cdots\times A_d$}
Draw random input sample of size $k/2$ from $S$ and $T$\;
Draw random output sample~\cite{vitorovic2016load} of size $k/2$\;
Initialize $\mathcal{P}$ with root partition $p_r = A_1 \times\cdots\times A_d$\;
$p_r.(\mathrm{bestSplit}, \mathrm{topScore}) = \mathrm{best\_split}(p_r)$\;
\Repeat{termination condition}{
    Let $p \in \mathcal{P}$ be the leaf node with the highest $\mathrm{topScore}$\;
    Apply $p.\mathrm{bestSplit}$\;
    \ForEach{newly created (for regular leaf split) or updated (for small leaf split) leaf node $p'$}{
        $p'.(\mathrm{bestSplit}, \mathrm{topScore}) = \mathrm{best\_split(p')}$\;
    }
}
Return best partitioning $\mathcal{P}^*$ found
\end{algorithm}

\begin{algorithm}[tb]
\caption{$\mathrm{best\_split}$}
\label{alg:bestSplit}
\SetAlgoLined
\LinesNumbered
\KwData{Partition $p$, input and output sample tuples in $p$,
number of row sub-partitions $r$ and column sub-partitions $c$ ($r=c=1$ for regular partitions)}
\KwResult{Split predicate $\mathrm{bestSplit}$ and its score $\mathrm{topScore}$}
Initialize $\mathrm{topScore} = 0$ and $\mathrm{bestSplit} = \mathrm{NULL}$\;
\If{$p$ is a regular partition}{
    \tcp{Find best decision-tree style split}
    \ForEach{regular dimension $A_i$}{
        Let $x_i$ be the split predicate on dimension $A_i$ that has the highest ratio
        $\sigma_i = \Delta\mathrm{Var}(x_i)/\Delta\mathrm{Dup}(x_i)$ among all
        possible splits in dimension $A_i$\;
        \If{$\sigma_i > \mathrm{topScore}$}{
            Set $\mathrm{topScore} = \sigma_i$ and set $\mathrm{bestSplit} = x_i$\;
        }
    }    
}\Else{
    \tcp{Small partition: increment number of row or column sub-partitions using \methodOneB}
    Let $\sigma_r = \Delta\mathrm{Var}(r+1, c)/\Delta\mathrm{Dup}(r+1, c)$\;
    Let $\sigma_c = \Delta\mathrm{Var}(r, c+1)/\Delta\mathrm{Dup}(r, c+1)$\;
    \If{$\sigma_r > \sigma_c$}{
        Set $\mathrm{topScore} = \sigma_r$ and set $\mathrm{bestSplit} = \mathrm{row}$
    }\Else{
        Set $\mathrm{topScore} = \sigma_c$ and set $\mathrm{bestSplit} = \mathrm{column}$
    }
}
Return $(\mathrm{bestSplit}, \mathrm{topScore})$
\end{algorithm}

\begin{figure}[tb]
\centering
	\includegraphics[width=\linewidth]{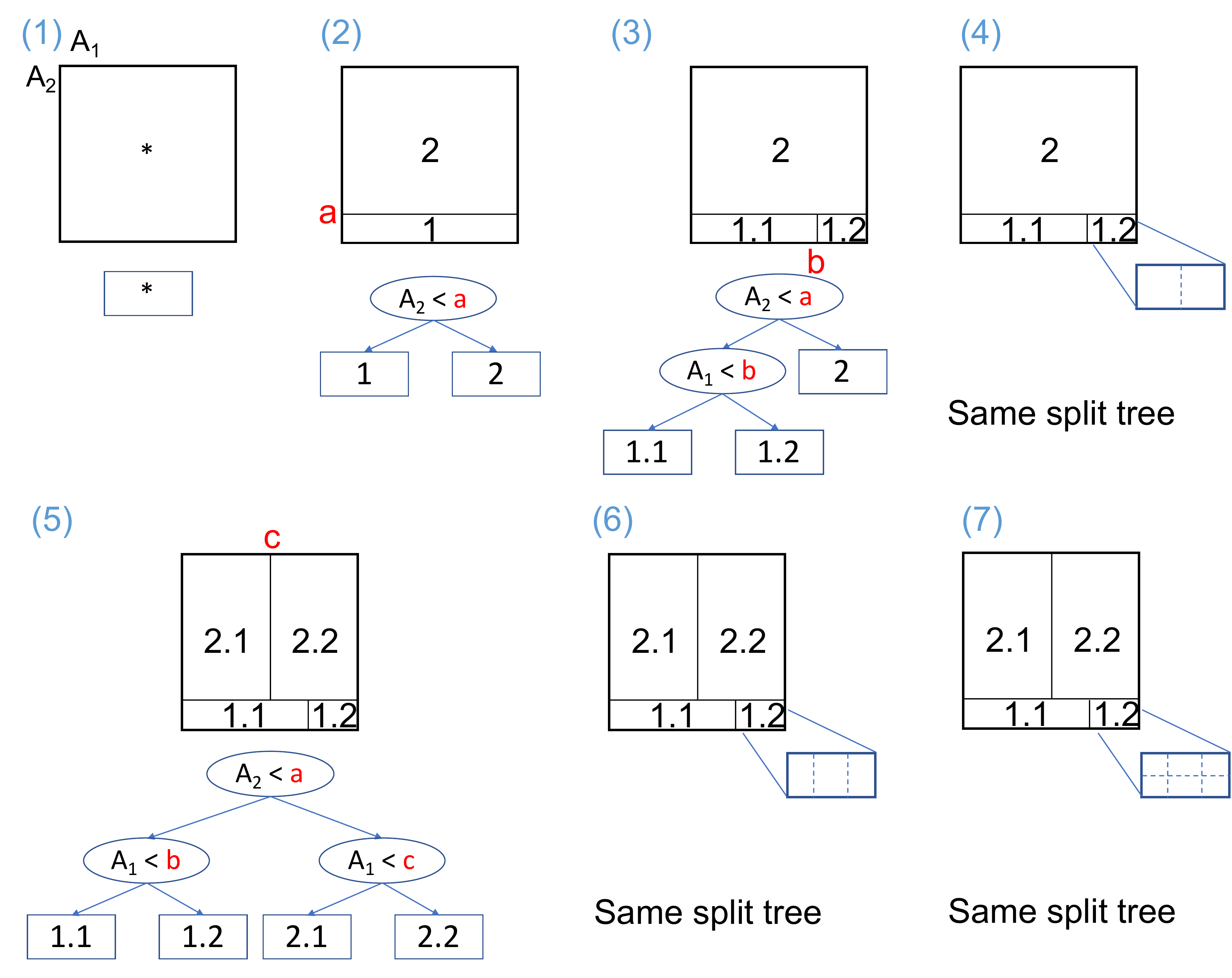}
\caption{
Recursive partitioning for a 2D band-join on attributes $A_1$ and $A_2$. In the split tree, a path
from the root to a leaf defines a rectangular partition in $A_1 \times A_2$ as the conjunction of all predicates along the path.
(By convention the left child is the branch satisfying the split predicate.) In small partitions such as 1.2, \method applies
\methodOneB. Those partitions are terminal leaves in the split tree and only change their ``internal'' partitioning.}
\label{fig:recPartSteps}
\end{figure}

\method (\Cref{alg:adapFull}) is inspired by decision trees~\cite{han2011data} and recursively partitions the
$d$-dimensional space spanned by all join attributes. To adapt this high-level
idea to running-time optimization for band-joins, we (1) identify a new split-scoring measure that determines
the selection of split boundaries, (2) propose a new stopping condition for the splitting process, and (3)
propose an ordering to determine which tree leaf to consider next for splitting. The splitting process is illustrated
 in \Cref{fig:recPartSteps}.

\Cref{alg:adapFull} starts with a single leaf node covering the entire join-attribute space and calls
$\mathrm{best\_split}$ (\Cref{alg:bestSplit}) on this leaf to find the best possible
split and its score. Assuming the leaf is a ``regular'' partition (we discuss small partitions below), $\mathrm{best\_split}$
sorts the input sample on $A_1$ and tries all middle-points between consecutive $A_1$-values as possible split boundaries.
Then it does the same for $A_2$. \emph{The winning split boundary is the one with the highest ratio
between load-variance reduction and input duplication increase} (see details in \cref{sec:algorithmDetails}).

Assume the best split is $A_2 < a$. The first execution of the repeat-loop then applies this split, creating
new leaves ``1'' and ``2'' and finding the best split for each of them. Which leaf should be split next?
\emph{\method manages all split-tree leaves in a priority queue based on their $\mathrm{topScore}$ value.}
Assuming
leaf ``1'' has the higher score, the next repeat-loop iteration will split it, creating new leaves ``1.1'' and ``1.2''.
This process continues until the appropriate termination condition is reached (discussed below).
As the split tree is grown, the algorithm also keeps track of the best partitioning found so far.

\subsection{Algorithm Details}
\label{sec:algorithmDetails}

\introparagraph{Small partitions} When a partition becomes ``small'' relative to band width in
a dimension, then no further \emph{recursive} splitting in that dimension is allowed.
When the partition is small in \emph{all}
dimensions, then it switches into a different partitioning mode inspired by \methodOneB~\cite{Okcan:theta-join}.
This is motivated by the observation that when the length of a partition approaches band width in each
dimension, then all $S$ and $T$-tuples in that partition join with each other. And for Cartesian products,
\methodOneB was shown to be near-optimal.

We define a partition as ``small'' as soon as its size is below twice the band width
in all dimensions. In \Cref{fig:recPartSteps} step (4), leaf ``1.2'' is small and hence when it is picked
in the repeat-loop, applying the best split leaves the split tree unchanged, but instead
increases the number of column partitions $c$ to 2. Afterward, the $\mathrm{topScore}$
value of leaf ``1.2'' may have decreased and leaf ``2'' is split next, using a regular recursive split.
This may be followed by more ``internal'' splits of leaf ``1.2'' in later iterations as shown in \Cref{fig:recPartSteps},
steps (6) and (7).

It is easy to show that having some leaves in ``regular'' and others in ``small'' split mode
does not affect correctness. Intuitively, this is guaranteed because duplication only needs to be considered
inside the region that is further partitioned, i.e., it does not ``bleed'' beyond partition boundaries.

\introparagraph{Split scoring}
In split score $\Delta\mathrm{Var}(x_i)/\Delta\mathrm{Dup}(x_i)$ for a regular dimension (\Cref{alg:bestSplit}),
$\Delta\mathrm{Var}(x_i)$ is defined as follows:
Let $\mathcal{L}$ denote the set of split-tree leaves.
Each (sub-partition in a) leaf corresponds to a region $p$ in $A_1 \times\cdots\times A_d$, for which we
estimate input $I_p$ and output $O_p$ from the random samples drawn by \Cref{alg:adapFull}.
The load induced by $p$ is $l_p = \beta_2 I_p + \beta_3 O_p$. Load variance is computed as follows:
Assign each leaf in $\mathcal{L}$ to a randomly selected worker.
Then per-worker load is a random variable $\mathcal{P}$ (we slightly abuse notation to avoid notational clutter)
whose variance can be shown to be $\Var[\mathcal{P}] = \frac{w-1}{w^2} \sum_{p \in \mathcal{L}} l_p^2$.
We analogously obtain $\Var(\mathcal{P'})$ for a partitioning $\mathcal{P'}$ that results from
splitting some leaf $p' \in\mathcal{L}$ into sub-partitions $p_1$ and $p_2$ using predicate $x_i$.
Then $\Delta\Var(x_i) = \Var(\mathcal{P'}) - \Var(\mathcal{P})$.
$\Var(\mathcal{P'})$ can be computed from $\Var(\mathcal{P})$ in \emph{constant} time by subtracting
$\frac{w-1}{w^2} l_{p'}$ and adding $\frac{w-1}{w^2} (l_{p_1} + l_{p_2})$.

The additional duplication caused by a split is obtained by estimating the number of $T$-tuples within band width
of the new split boundary using the input sample. When multiple split predicates cause no input duplication,
then the best split is the one with the greatest variance reduction among them. The calculation of
load variance and input duplication for ``small'' leaves is analogous.

The split score reflects our goal of reducing max worker load with minimal input duplication.
For the former, load-variance reduction could be replaced by other measures, but precise estimation
of input and output on the most loaded worker is difficult due to dynamic load balancing applied by schedulers
at runtime. We therefore selected load variance as a scheduler-independent proxy.

\introparagraph{Termination condition and winning partitioning}
We propose a \emph{theoretical} and an \emph{applied} termination condition for the repeat-loop
in \Cref{alg:adapFull}. For the theoretical approach,
the winning partitioning is the one with the \emph{lowest overhead over the lower bound
in terms of both max worker load and input duplication}, i.e.,
the one with the minimal value of $\max\big\{\frac{I-(|S|+|T|)}{|S|+|T|}; \frac{L_m-L_0}{L_0}\big\}$.
It is easy to show that each iteration of the repeat-loop monotonically increases input $I$,
because each new split boundary (regular leaf) and more fine-grained sub-partitioning (small leaf)
can only increase the number of input duplicates. At the same time, the loop iteration may or may not decrease
$\frac{L_m-L_0}{L_0}$. Hence repeat-loop iterations can be terminated as soon as
$\frac{I-(|S|+|T|)}{|S|+|T|}$ exceeds the smallest value of $\frac{L_m-L_0}{L_0}$ encountered so far.

The theoretical approach only needs input and output samples, as well as an estimate of the relative impact of
an input tuple versus an output tuple on join computation time. Input sampling is straightforward; for output
sampling we use the method from~\cite{vitorovic2016load}. If the output is large, it efficiently produces a large sample.
If the output is small, then the output has negligible impact on join computation cost. The experiments show that
we get good results when limiting sample size based on memory size and sampling
cost to at most 5\% of join time.

For estimating load impact, we run band-joins with different
input and output sizes $I$ and $O$ on an individual worker and use linear regression to determine
$\beta_2$ and $\beta_3$ in load function $\beta_2 I + \beta_3 O$. In our Amazon cloud cluster,
$\beta_2 / \beta_3 \approx 4$. Note that $\beta_2$ and $\beta_3$ tend to increase with input
and output. For the lower bound, we use the smallest values, i.e., those obtained for scenarios
where a node receives about $1/w$ of the input (recall that $w$ is the number of workers).
This establishes a lower value for the lower bound, i.e., it is more challenging for \method to be close to it.

For the applied approach, we use the cost model as discussed in the end of \Cref{sec:problemDef}.
The winning partitioning is the one with the lowest running time predicted by the cost model.
Repeat-loop iterations terminate when estimated join time bottoms out. We detect this based on
a window of the join times over the last $w$ repeat-loop iterations: loop execution terminates
when improvement is below 1\% (or join time even increased) over those last $w$ iterations.
(We chose $w$ as window size because it would take at least one extra split per worker to break
up each worker's load. This dovetails with the prioritization of leaves: The most promising leaves in
terms of splitting up load with low input duplication overhead are greedily selected.)

\introparagraph{Extension: symmetric partitioning}
Like classic grid partitioning, \method as discussed so far treats inputs $S$ and $T$ differently:
at an inner node in the split tree, $S$ is partitioned (without duplication), while $T$-tuples near
the split boundary are duplicated. For regions where $S$ is sparse and $T$ is dense,
we want to reverse these roles. Consider \Cref{fig:Duplication-LoadBalance-Tradeoff_e}, where
$y_1$ and $y_2$ enabled a zero-duplication partitioning with perfect load balance. What if the input
distribution was reversed in another region of the join-attribute space, e.g., $S'=\{21,25,26,30\}$ and
$T'=\{21,22,23,25,26,28,29,30\}$? Then no split in range 21 to 30 could avoid duplication of $T'$-tuples
because for band width 1 at least one of the $T'$-values would be within 1 of the split point.
In that scenario we want to reverse the roles of $S'$
and $T'$, i.e., perform the partitioning on $T'$ and the partition/duplication on $S'$.

For grid partitioning, it is not clear how to reverse the roles of $S$ and $T$ in some of the grid cells.
For \method, this turns out to be easy. When exploring split options in a regular leaf
(\methodOneB in small partitions already treats both inputs symmetrically),
\Cref{alg:bestSplit} computes the duplication for both cases: partition $S$ and partition/duplicate $T$ as well
as the other way round. We call the former a \emph{$T$-split} and the latter an \emph{$S$-split}.
The split type information is added to the corresponding node in the split tree.

\Cref{alg:tupleAssignment} is used
to determine which tuples to assign to the sub-trees. (Only the version for $T$-tuples is shown; the
one for $S$-tuples is analogous.) It is easy to show that for each result $(s, t) \in S \bowtie_B T$,
exactly one leaf in the split tree receives both $s$ and $t$.

\begin{algorithm}[tb]
\caption{assign\_input}
\label{alg:tupleAssignment}
\SetAlgoLined
\LinesNumbered
\KwData{Input tuple $t \in T$; node $p$ in split tree}
\KwResult{Set of leaves to which $t$ is copied}
\If{$p$ is a leaf}{
    Return (\methodOneB($t$, $p$)) \label{alg:oneBstep}
}\Else{
    \If{$p$ is a $T$-split node}{
        \For{each child partition $p'$ of $p$ that intersects with the $\varepsilon$-range of $t$}{
            assign\_input($t$, $p'$)
        }
    }\Else{
        Let $p'$ be the child partition of $p$ that contains $t$\;
        assign\_input(t, $p'$)
    }
}
\end{algorithm}

\subsection{Algorithm Analysis}
\label{sec:optCost}

\method has low complexity, resulting in low optimization time.
Let $\lambda$ denote the number of repeat-loop executions in \Cref{alg:adapFull}.
Each iteration increases the number of leaves in the split tree by at most one.
The algorithm manages all leaves in a priority queue based on the score returned by \Cref{alg:bestSplit}.
With a priority queue where inserts have constant cost and removal of the top element takes time
logarithmic in queue size, an iteration of the repeat-loop takes $\bigO(\log \lambda)$
to remove the top-scoring leaf $p$. If $p$ is regular, then it takes $\bigO(1)$ to create sub-partitions
$p_1$ and $p_2$ and to distribute all input and output samples in $p$ over them.
(Sample size is bounded by constant $k$, a fraction of machine-memory size.)
Checking if $p_1$ and $p_2$ are small partitions takes time $\bigO(d)$.
Then best\_split is executed, which for a regular leaf requires sorting of the input sample on each dimension
and trying all possible split points, each a middle point between two consecutive sample tuples in that dimension.
Since sample size is upper-bounded by a constant, the cost is $\bigO(d)$. For a small leaf, the cost is $\bigO(1)$.
Finally, inserting $p_1$ and $p_2$ into the priority queue takes $\bigO(1)$.
In total, splitting a regular or small leaf has complexity $\bigO(d)$ and $\bigO(1)$, respectively.

After $\lambda$ executions of the repeat-loop in \Cref{alg:adapFull}, the next iteration
has complexity $\bigO(\log \lambda + d)$. Hence the total cost of $\lambda$ iterations is
$\bigO(\lambda \log \lambda + \lambda d)$. In our experience, the algorithm will terminate after a number
of iterations bounded by a small multiple of the number of worker machines.
To see why, note that each iteration breaks up a large partition $p$
and replaces it by two (for regular $p$) or more (for small $p$) sub-partitions. 
The split-scoring metric favors breaking up heavy partitions, therefore load can be balanced across workers
fairly evenly as soon as the total number of partitions reaches a small multiple of the number of workers.
This in turn implies for \Cref{alg:adapFull}, given samples of fixed size, a total complexity of
$\bigO(w \log w + w d)$.

\section{Analytical Insights}
\label{sec:previousPlus}

We present two surprising results about the ability of grid partitioning to address load imbalances.
For join-matrix covering approaches like \methodVit that depend on the notion of a total ordering of the join-attribute
space, we explore how to enumerate the multi-dimensional space.

\subsection{Properties of Grid Partitioning}
\label{sec:gridProperties}

Without loss of generality, let $S$ be the input that is partitioned and $T$ be the input that is partitioned/duplicated.

\introparagraph{Input duplication} With grid-size in each
dimension set to the corresponding band width, the $\varepsilon$-range of a $T$-tuple intersects with up to
3 grid cells per dimension, for a total replication rate of $\bigO(3^d)$ in $d$ dimensions. Can this exponential
dependency on the dimensionality be addressed through coarser partitioning? Unfortunately, for any non-trivial
partitioning, asymptotically the replication rate is still at least $\bigO(2^d)$. To construct the worst case,
an adversary places all input tuples near the corner of a centrally located grid cell.

\introparagraph{Max worker load}
We now show an even stronger negative result,
indicating that grid-partitioning is inherently limited in its ability to reduce max worker load, no matter the number
of workers or the grid size.
Consider a partitioning where one of the grid cells contains half of $S$ and half of $T$.
Does there exist a more fine-grained grid partitioning where none of the grid
cells receives more than 10\% of $S$ and 10\% of $T$?
One may be tempted to answer in the affirmative: just keep decreasing grid size in all dimensions until
the target is reached. Unfortunately, this does not hold as we show next.

\begin{lemma}\label{lem:gridBalanceBad}
If there exists an $\varepsilon$-range in the join-attribute space with $n$ tuples from $T$, then
grid partitioning will create a partition with at least $n$ $T$-tuples, \emph{no matter the grid size}.
\end{lemma}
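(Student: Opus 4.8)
The plan is to exploit the symmetry of the band condition: a point $a$ lies within band width of a tuple $t$ precisely when $t$ lies within band width of $a$, i.e., when $a$ falls inside the $\varepsilon$-range around $t$. I would fix the given $\varepsilon$-range $R$ and let $a=(a_1,\dots,a_d)$ denote its center, so that every one of the $n$ witnessing $T$-tuples $t\in R$ satisfies $|t.A_i - a_i|\le\varepsilon_i$ for all $i$. The goal is then to show that a single grid cell simultaneously receives all of these $n$ tuples, irrespective of how the grid is drawn.

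First I would identify the target cell: regardless of the chosen grid, the center $a$ lies in some grid cell $C$ (using the partition's boundary convention to break ties when $a$ happens to sit on a grid line). Then, for each witness $t\in R$, I would observe that the inequalities $|a_i - t.A_i|\le\varepsilon_i$ say exactly that $a$ belongs to the $\varepsilon$-range around $t$. Since $a\in C$ as well, the point $a$ is a common point of $C$ and the $\varepsilon$-range of $t$, so these two regions intersect.

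Next I would invoke the replication rule of grid partitioning for band-joins: a $T$-tuple must be copied to every grid cell whose region intersects that tuple's $\varepsilon$-range, since otherwise some result $(s,t)$ with $s\in C$ would be missed (this is the correctness requirement of \Cref{def:joinPartitioning}). Because $C$ intersects the $\varepsilon$-range of every witness $t$, cell $C$ receives all $n$ of them, yielding a partition with at least $n$ $T$-tuples. Crucially, the argument never refers to the grid spacing, so the conclusion holds \emph{no matter the grid size}, as claimed.

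The step I expect to require the most care is the boundary bookkeeping rather than any genuine difficulty: I must ensure that the notions ``$a$ lies in cell $C$'' and ``the $\varepsilon$-range of $t$ intersects $C$'' are compatible with whatever half-open or closed convention the grid uses to assign tuples, so that $t$ is genuinely forced into $C$ and not merely into a neighboring cell. Apart from this minor formalization, the result is a direct consequence of the symmetry of the $\varepsilon$-relation together with the no-missed-results guarantee of a valid join partitioning.
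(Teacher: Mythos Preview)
Your argument is correct. The approaches differ in where they anchor. The paper works in one dimension, distinguishes the case that no grid line crosses the interval from the case that some grid line $X$ does, and in the latter argues that both cells adjacent to $X$ receive all $n$ tuples because every tuple in the interval lies within $\varepsilon_1$ of $X$; it then appeals to an analogous argument for $d>1$. You instead anchor at the \emph{center} $a$ of the $\varepsilon$-range and observe that the single cell containing $a$ must receive every witness $t$, since $a$ itself certifies the intersection of that cell with the $\varepsilon$-range of $t$. Your route avoids the case split, handles all $d$ at once, and matches the paper's formal definition of $\varepsilon$-range (side length $2\varepsilon_i$) cleanly: distance from any point of such a box to its center is at most $\varepsilon_i$, which is exactly what the replication rule needs. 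The paper's version, as written, uses an interval of length $\varepsilon_1$ so that an \emph{arbitrary} interior split point is within $\varepsilon_1$ of every tuple; choosing the center instead makes this step work for the full $2\varepsilon_i$-sided box without that restriction. Your note about boundary conventions is the right caveat, and it is indeed only bookkeeping.
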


\begin{proof}
For $d=1$ consider an interval of size $\varepsilon_1$ that contains $n$ $T$-tuples.
If no split point partitions the interval, then the partition containing it has all $n$ $T$-tuples.
Otherwise, i.e., if at least one split point partitions the interval, pick one of the split points inside the interval,
say $X$, and consider the $T$-tuples copied to the grid cells adjacent to $X$. Since all $T$-tuples in an
interval of size $\varepsilon_1$ are within $\varepsilon_1$ of $X$, both grid cells receive all $n$ $T$-tuples
from the interval. It is straightforward to generalize this analysis to $d>1$.
\end{proof}

In short, even though a more fine-grained partitioning can split $S$ into ever smaller pieces, the same is not
possible for $T$ because of the duplication needed to ensure correctness for a band-join.
For \emph{skewed} input, $n$ in \Cref{lem:gridBalanceBad} can be $\bigO(|T|)$.
Interestingly, we now show that grid partitioning can
behave well in terms of load distribution for skewed input, as long as the input is ``sufficiently'' large.

\begin{lemma}\label{lem:gridBalanceGood}
Let $c_0>0$ be a constant such that $|S \bowtie_B T| \le c_0(|S|+|T|)$.
Let $R$ ($R'$) denote the region of size $\varepsilon_1 \times \varepsilon_2 \times\cdots\times \varepsilon_d$
in the join attribute space containing the most tuples from $S$ ($T$); and let $x$ ($x'$) and $y$ ($y'$)
denote the fraction of tuples from $S$ and $T$, respectively, it contains.
If there exist constants $0 < c_1 \le c_2$ such that $c_1 \le x/y \le c_2$ and $c_1 \le x'/y' \le c_2$,
then no region of size $\varepsilon_1 \times \varepsilon_2 \times\cdots\times \varepsilon_d$ contains
more than $\bigO(\sqrt{1/|S| + 1/|T|})$ input tuples.
\end{lemma}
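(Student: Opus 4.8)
The plan is to exploit the defining geometric property of a region of size $\varepsilon_1 \times \cdots \times \varepsilon_d$: if an $S$-tuple $s$ and a $T$-tuple $t$ both lie inside such a region, they differ by at most $\varepsilon_i$ in every dimension $i$, so the pair $(s,t)$ automatically satisfies the band-join condition. Hence any such region behaves like a Cartesian product, and the number of output tuples it contributes is at least the product of the number of $S$-tuples and the number of $T$-tuples inside it. First I would apply this to the densest $S$-region $R$, which holds $x|S|$ tuples from $S$ and $y|T|$ tuples from $T$. These pairs are a subset of the output, so $xy\,|S|\,|T| \le |S \bowtie_B T| \le c_0(|S|+|T|)$, which rearranges to $xy \le c_0(1/|S| + 1/|T|)$.

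Next I would bring in the balance assumptions, each used in exactly one direction. The upper bound $x/y \le c_2$ forces $y \ge x/c_2$, hence $xy \ge x^2/c_2$; combined with the display above this gives $x^2 \le c_0 c_2 (1/|S| + 1/|T|)$, i.e.\ $x = \bigO(\sqrt{1/|S| + 1/|T|})$. Because $R$ is by definition the region containing the most $S$-tuples, no region contains a larger fraction of $S$, so this already bounds the $S$-density of \emph{every} region. Symmetrically, the same Cartesian-product bound applied to the densest $T$-region $R'$ gives $x'y' \le c_0(1/|S| + 1/|T|)$, and now the lower bound $x'/y' \ge c_1$ (so $x' \ge c_1 y'$) yields $x'y' \ge c_1 y'^2$, whence $y'^2 \le (c_0/c_1)(1/|S| + 1/|T|)$ and $y' = \bigO(\sqrt{1/|S|+1/|T|})$. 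Since $R'$ maximizes $T$-density, this bounds the $T$-density of every region as well.

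Combining the two halves, every region of size $\varepsilon_1 \times \cdots \times \varepsilon_d$ contains at most an $\bigO(\sqrt{1/|S|+1/|T|})$ fraction of $S$ and at most the same fraction of $T$, which is the claim (the $c_0,c_1,c_2$ being absorbed into the $\bigO$). I expect the only genuinely delicate step to be the first one: arguing cleanly that a region whose side length is exactly $\varepsilon_i$ per dimension makes all co-located $S$- and $T$-tuples join, so that the output lower bound is valid, and then being careful that the two balance conditions are invoked in the correct direction---the upper ratio bound to control the $S$-densest region and the lower ratio bound to control the $T$-densest region. The remaining manipulations are routine algebra.
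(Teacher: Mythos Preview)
Your proposal is correct and follows essentially the same argument as the paper: derive $xy \le c_0(1/|S|+1/|T|)$ from the Cartesian-product behavior of an $\varepsilon$-sized region, then use $x/y \le c_2$ to bound $x$ and $x'/y' \ge c_1$ to bound $y'$, and finally invoke the maximality of $R$ and $R'$. You have even identified the correct directions of the two ratio bounds, matching the paper exactly.
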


\begin{proof}
By definition, all $S$ and $T$ tuples in region $R$ join with each other. Together
with $|S \bowtie_B T| \le c_0(|S|+|T|)$ this implies
\begin{equation}
x|S| \cdot y|T| \le c_0 (|S|+|T|)\quad\Rightarrow\quad xy \le c_0(1/|S| + 1/|T|) \label{eqn:inputFraction}
\end{equation}
From $x/y \le c_2$ follows $x^2 \le c_2 x y$,
then $x^2\le c_2 c_0 (1/|S| + 1/|T|)$ (from \Cref{eqn:inputFraction}) and thus
$x \le \sqrt{c_0 c_2}\sqrt{1/|S| + 1/|T|} = \bigO(\sqrt{1/|S| + 1/|T|})$.
We show analogously for region $R'$ that
$y' \le \sqrt{c_0/c_1}\sqrt{1/|S| + 1/|T|} = \bigO(\sqrt{1/|S| + 1/|T|})$.

Since $R$ is the region of size $\varepsilon_1 \times \varepsilon_2 \times\cdots\times \varepsilon_d$ with most $S$-tuples and $R'$ the region with most $T$-tuples, no region of size $\varepsilon_1 \times \varepsilon_2 \times\cdots\times \varepsilon_d$ can contain more than $x$ fraction of $S$-tuples and $y'$ fraction of $T$-tuples.
\end{proof}

\Cref{lem:gridBalanceGood} is surprising. It states that for larger inputs $S$ and $T$, the \emph{fraction}
of $S$ and $T$ in any partition of \methodGrid (recall that its partitions are of size
$\varepsilon_1 \times \varepsilon_2 \times\cdots\times \varepsilon_d$) is upper-bounded by
a function that decreases proportionally with $\sqrt{|S|}$ and $\sqrt{|T|}$.
For instance,
when $S$ and $T$ double in size, then the upper bound on the
input fraction in any partition decreases by a factor of $\sqrt{2} \approx 1.4$.

Clearly, this does not hold for all band joins.
The proof of \Cref{lem:gridBalanceGood} required (1) that the region with most $S$-tuples contain
a sufficiently large fraction of $T$, and vice versa; and (2) that output size is bounded by $c_0$
times input size. The former is satisfied when $S$ and $T$ have a similar distribution in
join-attribute space, e.g., a self-join. For the latter, we are aware of two scenarios. First,
the user may actively try to avoid huge outputs by setting a smaller band width. Second,
for any output-cost-dominated theta-join, \methodOneB was shown to be
near-optimal~\cite{Okcan:theta-join}. Hence specialized band-join solutions such as
\method, \methodGrid, and \methodVit would only
be considered when output is ``sufficiently'' small.

\subsection{Ranges in Multidimensional Space}
\label{sec:csio-sort}

\begin{figure}[t]
    \begin{subfigure}[t]{0.48\linewidth}
	\centering
	\includegraphics[width=\linewidth]{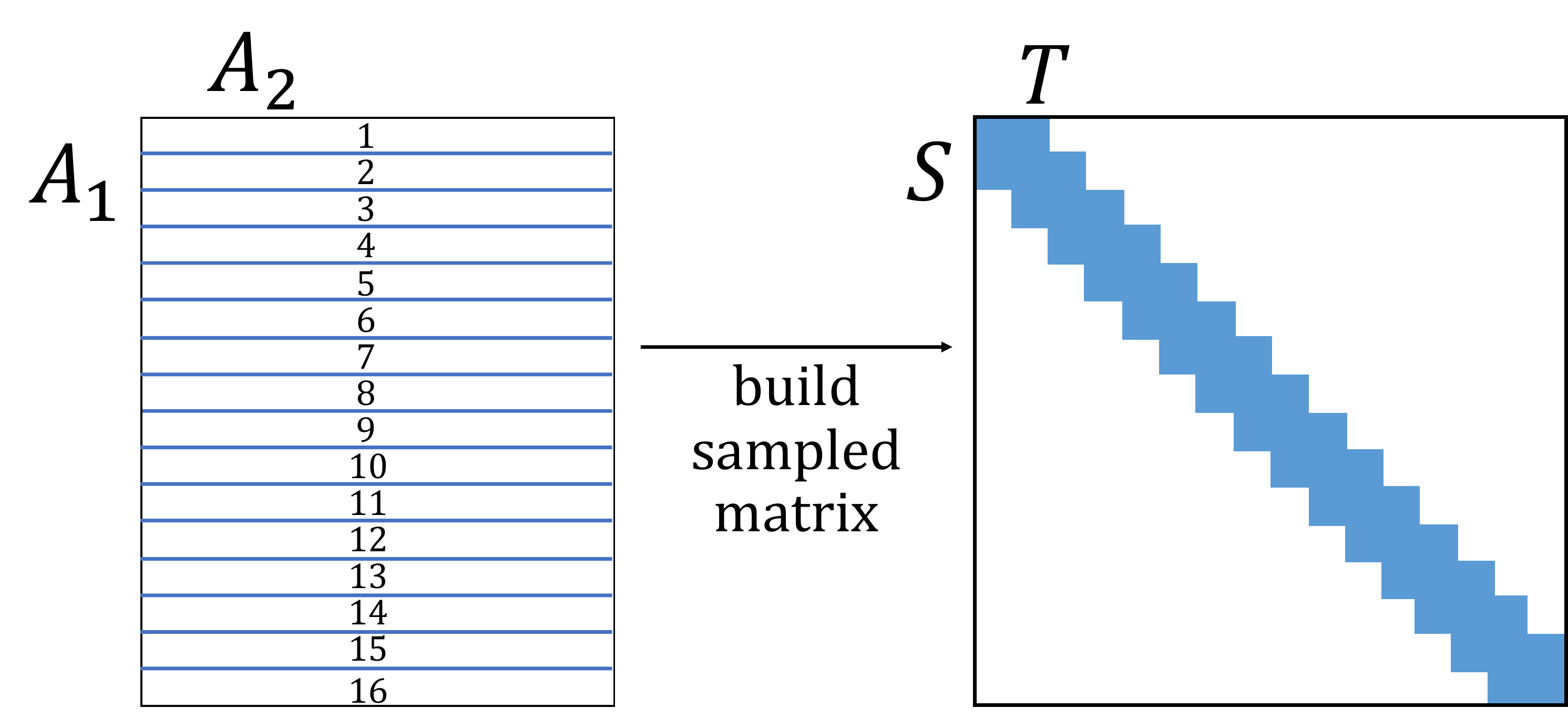}
	\caption{Row-major order}
	\label{fig:cs_io-alphabetical}
	\end{subfigure}
	\hspace{1mm}
	\begin{subfigure}[t]{0.48\linewidth}
	\centering
	\includegraphics[width=\linewidth]{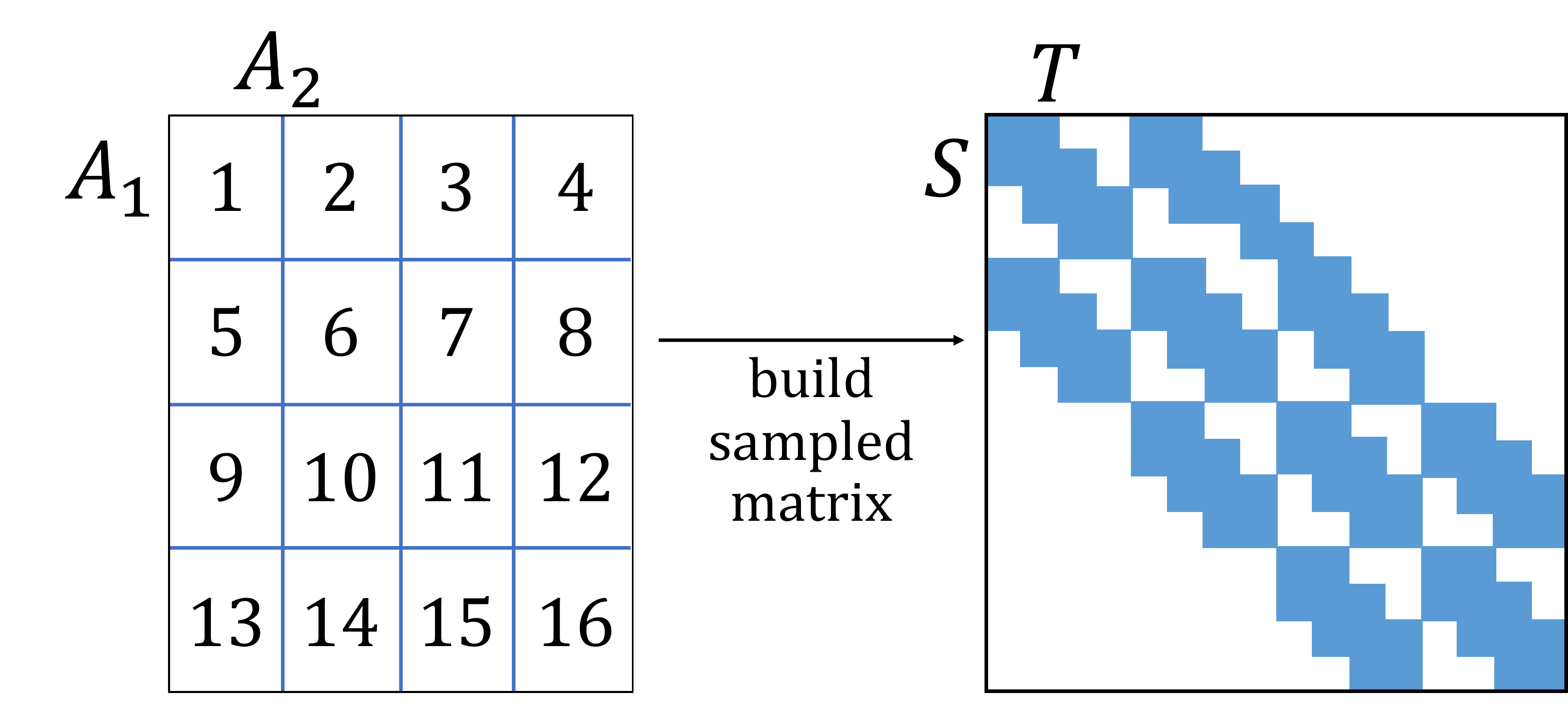}
	\caption{Block-style order}
	\label{fig:cs_io-block}
	\end{subfigure}
\caption{
Impact of enumeration order of a multidimensional space, here for a band-join between $S$ and $T$
on attributes $A_1$ and $A_2$, on the location of cells in the join matrix that may produce output.
Here band width is smaller than the height and width of each partition, resulting in a significantly sparser
join matrix for row-major order.}
\label{fig:cs_io-sampled-matrix}
\end{figure}

\methodVit starts with a range partitioning of the join-attribute space based on approximate quantiles.
For quantiles to be well-defined, a total order must be established on the multi-dimensional space---this
was left unspecified for \methodVit.
The ordering can significantly impact performance as illustrated in \Cref{fig:cs_io-sampled-matrix}
for a 2D band-join. In row-major order, ranges correspond to long horizontal stripes.
Alternatively, each range may correspond to a ``block,'' creating more square-shaped regions.
This choice affects the candidate regions in the join matrix that need to be covered.
Assume each horizontal stripe in
\Cref{fig:cs_io-alphabetical} is at least $\varepsilon_1$ high. Then an $S$-tuple in stripe $i$ can only
join with $T$-tuples in the three stripes $i-1$, $i$, and $i+1$.
This creates the compact candidate-region diagonal of width 3 in the join matrix.
For the block partitioning in \Cref{fig:cs_io-block}, $S$-tuples in a block may join with $T$-tuples in up
to nine neighboring blocks. This creates the wider band in the join matrix. For \methodVit, a wider
diagonal and denser matrix cause additional input duplication.

In general, the number of candidate cells in the join matrix is minimized by
row-major ordering, \emph{if the distance between
the hyperplanes in the most significant dimension is greater than or equal to the band width
in that dimension}. This was the case in our experiments and therefore row-major ordering was selected for \methodVit.

\section{Experiments}
\label{sec:exp}

We compare the total running time (optimization plus join time) of \method to the state of the art
(\methodGrid, \methodVit, \methodOneB). Note that \methodGrid is not defined
for band width zero. Reported times are measured on a real cloud, unless stated otherwise.
In large tables, we mark cells with the main results in blue; red color highlights a weak spot,
e.g., excessive optimization time or input duplication.

\subsection{Experimental Setup}
\label{sec:expSetup}

\introparagraph{Environments} Both MapReduce \cite{Dean:2008:mapreduce}
and Spark \cite{zaharia2010spark}
are well-suited for band-join implementation. Spark's ability to keep large data in memory
makes little difference for the map-shuffle-reduce pipeline of a band-join, therefore we use
MapReduce, where it is easier to control low-level behavior such as custom data
partitioning.  All experiments were conducted on Amazon's Elastic MapReduce (EMR) cloud,
using 30 \texttt{m3.xlarge} machines (15GB RAM, 40GB disk, high network performance)
by default. All clusters run Hadoop 2.8.4 with the YARN scheduler in default configuration. 
\later{All our code will be made available on Github to allow reproducible research.}

\begin{table}[tb]
\centering
\caption{Band-join characteristics used in the experiments. Input and output size are reported in [million tuples].}
\label{tab:query}
\scalebox{0.9}
{
	\begin{tabular}{|l|r|r|r|r|}
	\hline
	Data Set                         & $d$ & Band width & Input Size & Output Size  \\
	\hline
	\texttt{pareto-1.5}              & 1 & $0$              & 400  & 2430  \\ 
	\texttt{pareto-1.5}              & 1 & $10^{-5}$        & 400  & 4580  \\
	\texttt{pareto-1.5}              & 1 & $2\cdot 10^{-5}$ & 400  & 9120  \\
	\texttt{pareto-1.5}              & 1 & $3\cdot 10^{-5}$ & 400  & 11280 \\
	\texttt{pareto-1.5}              & 3 & $(0,0,0)$        & 400  & 0     \\
	\texttt{pareto-1.5}              & 3 & $(2,2,2)$        & 400  & 1120  \\
	\texttt{pareto-1.5}              & 3 & $(4,4,4)$        & 400  & 8740  \\
	\hline	
	\texttt{pareto-0.5}              & 3 & $(2,2,2)$        & 400  & 12    \\
	\texttt{pareto-1.0}              & 3 & $(2,2,2)$        & 400  & 420   \\
	\texttt{pareto-2.0}              & 3 & $(2,2,2)$        & 400  & 3200  \\
	\hline
	\texttt{pareto-1.5}              & 8 & $(20,\ldots,20)$     & 100  & 9  \\
	\texttt{pareto-1.5}              & 8 & $(20,\ldots,20)$     & 200  & 57  \\
	\texttt{pareto-1.5}              & 8 & $(20,\ldots,20)$     & 400  & 219  \\
	\texttt{pareto-1.5}              & 8 & $(20,\ldots,20)$     & 800  & 857  \\
	\hline
	\texttt{rv-pareto-1.5}      & 1 & $2$        & 400    & 0  \\
	\texttt{rv-pareto-1.5}      & 1 & $1000$     & 400 	 & 0  \\
	\texttt{rv-pareto-1.5}      & 3 & $(1000,1000,1000)$     & 400 & 0  \\
	\texttt{rv-pareto-1.5}      & 3 & $(2000,2000,2000)$     & 400 & 0  \\
	\hline
	\texttt{ebird} and \texttt{cloud} & 3 & ($0,0,0$) & 890     & 0     \\
	\texttt{ebird} and \texttt{cloud} & 3 & ($1,1,1$) & 890        & 320  \\
	\texttt{ebird} and \texttt{cloud} & 3 & ($2,2,2$) & 890        & 2134 \\
	\texttt{ebird} and \texttt{cloud} & 3 & ($4,4,4$) & 890        & 16998 \\
    \hline
	\end{tabular}
}
\end{table}

\introparagraph{Data} For synthetic data, we use a Pareto distribution where join-attribute
value $x$ is drawn from domain $[1.0, \infty)$ of real numbers and follows PDF
$z / x^{z+1}$ (greater $z$ creates more skew). This models the famous power-law
distribution observed in many real-world contexts, including the 80-20 rule for
$z=\log_4 5 \approx 1.16$. We explore $z$ in the range $[0.5,2.0]$, which covers
power-law distributions observed in real data.
\texttt{pareto-$z$} denotes a pair of tables, each with 200 million tuples, with Pareto-distributed
join attributes for skew $z$. High-frequency values in $S$ are also high-frequency values in $T$.
\texttt{rv-pareto-$z$} is the same as \texttt{pareto-$z$}, but high-frequency values in $S$
have low frequency in $T$, and vice versa. Specifically, $T$ follows a Pareto distribution from
$10^6$ down to $-\infty$.  
($T$ is skewed toward larger values. We generate $T$ by drawing numbers from 
$[1.0, \infty)$ following Pareto distribution and then converting each number $y$ to $10^6 - y$.)

\texttt{cloud} is a real dataset containing $382$ million cloud reports~\cite{hahn1999extended},
each reporting time, latitude, longitude, and 25 weather attributes.
\texttt{ebird} is a real dataset containing $508$ million bird sightings, each with attributes
describing time, latitude, longitude, species observed, and 1655 features of the observation
site~\cite{munson2014ebird}.

For each input, we explore different band widths as summarized in \Cref{tab:query}.
\emph{Output sizes below 0.5 million are reported as 0.} For the real data, the three join
attributes are time ([days] since January 1st, 1970), latitude ([degrees] between $-90$ and $90$)
and longitude ([degrees] between $-180$ and $180$).

\introparagraph{Local join algorithm} After partitions are assigned to workers, each worker
needs to locally perform a band-join on its partition(s). Many algorithms could
be used, ranging from nested-loop to adaptations
of IEJoin's sorted arrays and bit-arrays. Since we focus on the \emph{partitioning} aspect, the
choice of local implementation is orthogonal, because it only affects the relative importance
of optimizing for input duplication vs optimizing for max worker load.
We observed that \method wins \emph{no matter what this ratio is set to}, therefore in our
experiments we selected a fairly standard local band-join algorithm based on index-nested-loops.
(Let $S_p$ and $T_p$ be the input in partition $p$.) (1) range-partition $T_p$ on $A_1$ into
ranges of size $\varepsilon_1$. (Here $A_1$ is the most selective dimension.) (2) For each
$s \in S_p$, use binary search to find the $T$-range $i$ containing $s$. Then check
band condition on $(s,t)$ for all $t$ in ranges $(i-1)$, $i$, and $i+1$.

Since \methodGrid partitions are of size $\varepsilon_1$ in dimension $A_1$, we slightly modify the
above algorithm and sort both $S_p$ and $T_p$ on $A_1$. The binary search for $s \in S_p$ then
searches for $s.A_1 - \varepsilon_1$ (the smallest tuple $t \in T_p$ it could join with) and scans
the sorted $T$-array from there until $s.A_1+\varepsilon_1$.

\introparagraph{Statistics and running-time model} We sample 100,000 input records and set
output-sample size so that total time for statistics gathering does not exceed $5\%$ of the fastest
time (optimization plus join time) observed for any method. 
For output sampling we use the method introduced for \methodVit~\cite{vitorovic2016load}.
For the cost model (see \Cref{sec:problemDef}), we determine the model coefficients ($\beta$-values)
as discussed in \cite{Li2018DAPD} from a benchmark of 100 queries. The benchmark is run offline
once to profile the performance aspects of a cluster.

\hide{
\introparagraph{Result reporting} We report measured \parTime{} as well as measured join time on the EMR cluster. 
Optimization (i.e., finding a partitioning) is performed on a single machine, while the join is executed on all machines. 
However, measured join time heavily depends on tuning parameters of cluster and computation environment, as well as features of the underlying hardware and virtualization environment. 
In order to show more general results, 
we report more insightful numbers that are independent of these tuning and setup aspects for most experiments: (1) The total amount of data shuffled $I$ 
(which is the input size plus all duplicates created due to the partitioning). 
Clearly, the more data is shuffled, the longer it takes to transfer the data from map to reduce phase. 
(2) The total amount of input ($I_m$) and output ($O_m$) assigned to the worker machine with the greatest load. 
It determines the longest reduce time and hence the end of the join computation. Recall that we define load as the weighted sum of input and output as for \methodVit~\cite{vitorovic2016load}. 
Our offline measurements on the EMR cluster yielded a 4-to-1 weight ratio between input and output.

When relative time over \method (or \methodBase) is reported for the other methods, this number is calculated by dividing the time of the other method by that of \method (or \methodBase).
}

\begin{table*}[t]
\setlength{\tabcolsep}{1mm}
\caption{Impact of band width: 
\methodS wins in all cases, and the winning margin gets bigger for band joins with more dimensions. (Blue color
highlights the main results; red color highlights a weak spot.)}
\label{tab:time-single-attr}
\centering
\begin{subtable}{\linewidth}
	\centering
	\caption{\texttt{Pareto-$1.5$}, $d=1$, varying band width.}
	\label{tab:time-single-attr}
	\scalebox{0.8}
	{
	    \begin{tabular}{|r||r|r|r|r||r|r|r||rrr|rrr|rrr|rrr|}
		\hline
	    Band width 
		    & \multicolumn{4}{c||}{Runtime (\parTime+join time) in [sec]}
	        & \multicolumn{3}{c||}{Relative time over \methodS} 
	        & \multicolumn{12}{c|}{I/O sizes in [millions]: $I$, $I_m$, $O_m$} \\
		\cline{2-20}
		    & \methodS & \methodVit & \methodOneB & \methodGrid  
			& \methodVit & \methodOneB & \methodGrid 
			& \multicolumn{3}{c|}{\methodS} & \multicolumn{3}{c|}{\methodVit} 
			& \multicolumn{3}{c|}{\methodOneB} & \multicolumn{3}{c|}{\methodGrid}   \\
		\hline
		0                
		& 351(3+348) 
		& 512(29+483)   
		& 762  
		& --- 
		& \colorbox{Blue1}{1.46} 
		& \colorbox{Blue1}{2.17}
		& \colorbox{Blue1}{N/A}
		& 400 
		& 14 
		& 83  
		& 496 
		& 13 
		& 131  
		& 2200 
		& 73 
		& 81   
		& --- & --- & ---
		\\
		$10^{-5}$        & 539(7+532) & 684(29+655)   & 1004 & 540 
			& \colorbox{Blue1}{1.27} & \colorbox{Blue1}{1.86} & \colorbox{Blue1}{1.00} 
			& 400 & 12 & 158  & 475 & 8  & 266  & 2200 & 73 & 153  & 800 & 27 & 153  
		\\ 
		$2\!\cdot\! 10^{-5}$ & 813(3+810) & 992(30+962)   & 1316 & 834 
			& \colorbox{Blue1}{1.22} & \colorbox{Blue1}{1.62} & \colorbox{Blue1}{1.03} 
			& 401 & 13 & 305  & 488 & 10 & 388  & 2200 & 73 & 304  & 800 & 27 & 304  
		\\
		$3\!\cdot\! 10^{-5}$ & 878(3+875) & 1170(30+1140) & 1520 & 956 
			& \colorbox{Blue1}{1.33} & \colorbox{Blue1}{1.73} & \colorbox{Blue1}{1.09} 
			& 401 & 12 & 384  & 479 & 10 & 503  & 2200 & 73 & 376  & 800 & 27 & 376  
		\\ 
		\hline 
		\end{tabular}
	}
\end{subtable}
\vspace{5px}
\newline
\begin{subtable}{\linewidth}
	\centering
	\caption{\texttt{Pareto-$1.5$}, $d=3$, varying band width.}
	\label{tab:time-pareto-3d}
	\scalebox{0.8}
	{
	    \begin{tabular}{|r||r|r|r|r||r|r|r||rrr|rrr|rrr|rrr|}
		\hline
		\multirow{2}{*}{Band width}
		    & \multicolumn{4}{c||}{Runtime (\parTime+join time) in [sec]}
	        & \multicolumn{3}{c||}{Relative time over \methodS}
	        & \multicolumn{12}{c|}{I/O sizes in [millions]: $I$, $I_m$, $O_m$} \\
		\cline{2-20}
		    & \methodS & \methodVit & \methodOneB & \methodGrid
			& \methodVit & \methodOneB & \methodGrid
			& \multicolumn{3}{c|}{\methodS} & \multicolumn{3}{c|}{\methodVit}
			& \multicolumn{3}{c|}{\methodOneB} & \multicolumn{3}{c|}{\methodGrid}   \\
		\hline
		($0,0,0$) & 230(1+229) & 366(46+320)     & 792  & ---  & \colorbox{Blue1}{1.59} & \colorbox{Blue1}{3.44} & \colorbox{Blue1}{N/A} & 401 & 14 & 0   & 497 & 17 & 0   & 2200 & 73 & 0   & ---  & --- & --- \\
		($2,2,2$) & 344(2+342) & 1339(\colorbox{Red1}{694}+645)   & 1149 & 1412 & \colorbox{Blue1}{3.89} & \colorbox{Blue1}{3.34} & \colorbox{Blue1}{4.10}  & 404 & 15 & 29  & 652 & 19 & 69  & 2200 & 73 & 37  & \colorbox{Red1}{5541} & 185 & 37 \\
		($4,4,4$) & 860(2+858) & 2557(\colorbox{Red1}{1345}+1212) & 1772 & 1816 & \colorbox{Blue1}{2.97} & \colorbox{Blue1}{2.06} & \colorbox{Blue1}{2.11}  & 413 & 14 & 290 & 838 & 31 & 321 & 2200 & 73 & 291 & \colorbox{Red1}{5485} & 183 & 291 \\
		\hline
		\end{tabular}
	}
\end{subtable}
\vspace{5px}
\newline
\begin{subtable}{\linewidth}
	\centering
	\caption{Join of \texttt{ebird} with \texttt{cloud}, $d=3$, varying band width.}
	\label{tab:time-real}
	\scalebox{0.8}
	{
	    \begin{tabular}{|r||r|r|r|r||r|r|r||rrr|rrr|rrr|rrr|}
		\hline
		\multirow{2}{*}{Band width}
		    & \multicolumn{4}{c||}{Runtime (\parTime+join time) in [sec]}
	        & \multicolumn{3}{c||}{Relative time over \methodS}
	        & \multicolumn{12}{c|}{I/O sizes in [millions]: $I$, $I_m$, $O_m$} \\
		\cline{2-20}
		    & \methodS & \methodVit & \methodOneB & \methodGrid
			& \methodVit & \methodOneB & \methodGrid
			& \multicolumn{3}{c|}{\methodS} & \multicolumn{3}{c|}{\methodVit}
			& \multicolumn{3}{c|}{\methodOneB} & \multicolumn{3}{c|}{\methodGrid}   \\
		\hline
		($0,0,0$) & 248(3+245) & 346(38+308)   & 1418 & ---   
			&  \colorbox{Blue1}{1.40} & \colorbox{Blue1}{5.72} & \colorbox{Blue1}{N/A}
			&  890 & 30 & 0  &  951 & 32 & 0  &  4832 & 161 & 0  & --- & --- & --- \\
		($1,1,1$) & 332(3+329) & 1945(\colorbox{Red1}{968}+977) & 1532 & 1419  
			&  \colorbox{Blue1}{5.86} & \colorbox{Blue1}{4.61} & \colorbox{Blue1}{4.27}  
			&  895 & 35 & 5  & 1490 & 95 & 9  & 4832 & 161 & 11 & \colorbox{Red1}{10891} & 361 & 11 \\
		($2,2,2$) & 423(3+420) & 2615(\colorbox{Red1}{1553}+1062) & 1573 & 1377  
			&  \colorbox{Blue1}{6.18} & \colorbox{Blue1}{3.72} & \colorbox{Blue1}{3.26}  
			&  899 & 32 & 66 & 1830 & 107 & 74 & 4832 & 161 & 67 & \colorbox{Red1}{10783} & 361 & 74 \\
		\hline
		\end{tabular}
	}
\end{subtable}
\end{table*}

\subsection{Impact of Band Width}
\label{sec:expIORatio}

We explore the impact of band width, which affects output.
For comparison with grid partitioning, we turn \method's
symmetric partitioning off, i.e., $T$ is always the partitioned/duplicated relation.
Since the grid approaches do not apply symmetric partitioning by design,
all advantages of \method in the experiments are due to the better partition boundaries,
not due to symmetric partitioning.
(The impact of symmetric partitioning is explored separately later.) To avoid confusion, we
refer to \method without symmetric partitioning as \methodS.

\subsubsection{Single Join Attribute}
\label{sec:expSingleAttr}

The left block in \Cref{tab:time-single-attr} reports running times.
\emph{\methodS wins in all cases, by up to a factor of 2}, but the other methods are
competitive because the join is in 1D and skew is moderate.
For \methodVit we tried different parameter settings
that control optimization time versus partitioning quality, reporting the best found.
The right block in \Cref{tab:time-single-attr} reports input plus duplicates ($I$), and the input ($I_m$)
and output ($O_m$) on the most loaded worker machine.
Recall that profiling revealed $\beta_2/\beta_3 \approx 4$, i.e., each input
tuple incurs 4 times the load compared to an output tuple.

\methodS and \methodVit achieve similar load characteristics, because
both intelligently leverage input and output samples to balance load while avoiding input duplication.
However, \methodS finds a better partitioning (lower max worker load and input duplication)
with 10x lower optimization time. The other two methods produce significantly higher input duplication, affecting
$I$ and $I_m$. \methodGrid still shows competitive running time because it works with a
very fine-grained partitioning, i.e., each worker receives its input already split into
many small grid cells. Data in a cell can be joined independent of the other cells, resulting in
efficient in-memory processing. As a result, \methodGrid has lower
per-tuple processing time than the other methods. (There each worker receives its input in a large
``chunk'' that needs to be range-partitioned.)

\subsubsection{Multiple Join Attributes}
\label{sec:expMultAttr}

\Cref{tab:time-pareto-3d,tab:time-real} show that
the performance gaps widen when joining on 3 attributes:
\emph{\methodS is the clear winner in total running time as well as join time alone}.
It finds the partitioning with the lowest max worker load, while keeping input duplication below 4\%,
while the competitors created up to 12x input duplication.

\methodVit is severely
hampered by the complexity of the optimization step. (Lowering optimization time resulted in higher join time
due to worse partitioning.) \methodGrid suffers from $\bigO(3^d)$ input duplication in $d$ dimensions.
For \methodOneB, note that the numbers in \Cref{tab:time-single-attr} and \Cref{tab:time-pareto-3d} are virtually
identical. This is due to the fact that it covers the entire join matrix $S \times T$, i.e., the matrix cover is not affected
by the dimensionality of the join condition.

\methodGrid has by far the highest input duplication, but again recovers some of this cost due to its faster
local processing that exploits that each worker's  input arrives already partitioned into small grid cells.
This is especially visible when comparing to \methodOneB for band width $(2, 2, 2)$ in
\Cref{tab:time-real}.

\begin{table*}[tbp]
\centering
\setlength{\tabcolsep}{1mm}
\caption{Skew resistance: 
\methodS is fastest and has much less data duplication than other methods.
(\texttt{Pareto-$z$}, $d=3$, band width ($2,2,2$), and increasing skew $z = 0.5, \ldots, 2$.)
\later{1-bucket and gird-eps for z=1.5 appear too low times, when looking at the trend. any possibility that something went wrong there in the runs? 
Did you run it just once, or 3 times and then took the median?
Answer: It should be due to system fluctuation as I only ran each query on AWS once. (Some I ran 2-3 times but it costs a lot for the ones with longer join time. }
}
\label{tab:pareto-time}
\scalebox{0.8}
{
    \begin{tabular}{|r||r|r|r|r||r|r|r||rrr|rrr|rrr|rrr|}
    \hline
    \multirow{2}{*}{Data Sets} 
	    & \multicolumn{4}{c||}{Runtime (\parTime+join time) in [sec]}
        & \multicolumn{3}{c||}{Relative time over \methodS} 
        & \multicolumn{12}{c|}{I/O sizes in [millions]: $I$, $I_m$, $O_m$} \\
    \cline{2-20}
	    & \methodS & \methodVit & \methodOneB & \methodGrid  
		& \methodVit & \methodOneB & \methodGrid 
		& \multicolumn{3}{c|}{\methodS} & \multicolumn{3}{c|}{\methodVit} 
		& \multicolumn{3}{c|}{\methodOneB} & \multicolumn{3}{c|}{\methodGrid}   \\
    \hline
	\texttt{pareto-0.5} & 230 (3+227) & 609 (\colorbox{Red1}{263}+346) & 1137  & 1146  & \colorbox{Blue1}{2.65} & \colorbox{Blue1}{4.94} & \colorbox{Blue1}{4.98}  & 401 & 13 & 0.3 & 577  & 20 & 1   & 2200 & 73 & 0.4  & \colorbox{Red1}{5582} & 186 & 0.4 \\
	\texttt{pareto-1.0} & 290 (3+287) & 1064 (\colorbox{Red1}{525}+539) & 1235 & 1335  & \colorbox{Blue1}{3.67} & \colorbox{Blue1}{4.26} & \colorbox{Blue1}{4.60}  & 401 & 13 & 17  & 616  & 20 & 31  & 2200 & 73 & 14   & \colorbox{Red1}{5554} & 185 & 14  \\
	\texttt{pareto-1.5} & 344 (2+342) & 1339 (\colorbox{Red1}{694}+645) & 1149 & 1412  & \colorbox{Blue1}{3.89} & \colorbox{Blue1}{3.34} & \colorbox{Blue1}{4.10}  & 404 & 15 & 29  & 652  & 19 & 69  & 2200 & 73 & 37   & \colorbox{Red1}{5541} & 185 & 37  \\
	\texttt{pareto-2.0} & 485 (2+483) & 1811 (\colorbox{Red1}{1000}+811) & 1369 & 2417 & \colorbox{Blue1}{3.73} & \colorbox{Blue1}{2.82} & \colorbox{Blue1}{4.98}  & 406 & 14 & 111 & 747  & 19 & 168 & 2200 & 73 & 107  & \colorbox{Red1}{5522} & 184 & 107 \\
	\hline
	\end{tabular}
}
\end{table*}

\subsection{Skew Resistance}
\label{sec:expSkew}

\Cref{tab:pareto-time} investigates the impact of join-attribute skew, showing that
\emph{\methodS handles it the best, again achieving the lowest max worker load with almost no
input duplication}. The competitors suffer from high input duplication; \methodVit also from high optimization cost.
Note that as skew increases, output size increases as well. This is due to the power-law distribution and the
correlation of high-frequency join-attribute values in the two inputs. 
Greater output size implies a denser join matrix for \methodVit, increasing its optimization time.

\begin{table*}[tbp]
\setlength{\tabcolsep}{1mm}
\caption{Scalability experiments: 
\methodS and \method have almost perfect scalability and beat all competitors.
(Dataset $X/Y/w$ in (a) and (b) refers to $X$ million input and $Y$ million output on $w$ workers.)}
\label{tab:scalability}
    \begin{subtable}{\linewidth}
	\centering
	\caption{\texttt{Pareto-1.5}, $d=3$, band width ($2, 2, 2$).}
	\label{tab:scal-pareto}
	\scalebox{0.8}
	{
	    \begin{tabular}{|r||r|r|r|r||r|r|r||rrr|rrr|rrr|rrr|}
	    \hline
	    \multirow{2}{*}{Data Sets} 
		    & \multicolumn{4}{c||}{Runtime (\parTime+join time) in [sec]}
	        & \multicolumn{3}{c||}{Relative time over \methodS} 
	        & \multicolumn{12}{c|}{I/O sizes in [millions]: $I$, $I_m$, $O_m$} \\
	    \cline{2-20}
		    & \methodS & \methodVit & \methodOneB & \methodGrid  
			& \methodVit & \methodOneB & \methodGrid 
			& \multicolumn{3}{c|}{\methodS} & \multicolumn{3}{c|}{\methodVit} 
			& \multicolumn{3}{c|}{\methodOneB} & \multicolumn{3}{c|}{\methodGrid}   \\
	    \hline
        $200/282/15$  & 306 (1+305) & 1227 (\colorbox{Red1}{767}+460) & 779  & 1381    &  \colorbox{Blue1}{4.01} & \colorbox{Blue1}{2.55} & \colorbox{Blue1}{4.51}  &  202  & 13 & 20 & 290  & 19 & 36 & 800  & 53  & 19 & \colorbox{Red1}{2772}  & 185 & 19 \\
        $400/1120/30$ & 344 (2+342) & 1374 (\colorbox{Red1}{729}+645) & 1149 & 1412    &  \colorbox{Blue1}{3.99} & \colorbox{Blue1}{3.34} & \colorbox{Blue1}{4.10}  &  404  & 15 & 29 & 652  & 19 & 69 & 2200 & 73  & 37 & \colorbox{Red1}{5541}  & 185 & 37 \\
        $800/4460/60$ & 438 (4+434) & 1721 (\colorbox{Red1}{801}+920) & 1731 & \textsc{failed}  
																	  &  \colorbox{Blue1}{3.93} & \colorbox{Blue1}{3.95} & \colorbox{Blue1}{N/A} &  809  & 21 & 45 & 1690 & 42 & 74 & 6400 & 107 & 74 & \colorbox{Red1}{11089} & 185 & 74 \\
		\hline
		\end{tabular}
	}
    \end{subtable}
    \vspace{5px}
    \newline
    \begin{subtable}{\linewidth}
	\centering
	\caption{Join of \texttt{ebird} with \texttt{cloud}, $d=3$, band width ($2, 2, 2$).}
	\label{tab:scal-real}
    \scalebox{0.8}
	{
	    \begin{tabular}{|r||r|r|r|r||r|r|r||rrr|rrr|rrr|rrr|}
	    \hline
	    \multirow{2}{*}{Data Sets} 
		    & \multicolumn{4}{c||}{Runtime (\parTime+join time) in [sec]}
	        & \multicolumn{3}{c||}{Relative time over \methodS} 
	        & \multicolumn{12}{c|}{I/O sizes in [millions]: $I$, $I_m$, $O_m$} \\
	    \cline{2-20}
		    & \methodS & \methodVit & \methodOneB & \methodGrid  
			& \methodVit & \methodOneB & \methodGrid 
			& \multicolumn{3}{c|}{\methodS} & \multicolumn{3}{c|}{\methodVit} 
			& \multicolumn{3}{c|}{\methodOneB} & \multicolumn{3}{c|}{\methodGrid}   \\
	    \hline
        $222/134/15$  & 207 (3+204) & 1213 (\colorbox{Red1}{942}+271)   & 547  & 812  & \colorbox{Blue1}{5.86} & \colorbox{Blue1}{2.64} & \colorbox{Blue1}{3.92}  &  223  & 15 & 11 & 307  & 22 & 11 & 856  & 57  & 9  & \colorbox{Red1}{2688} & 179 & 9  \\
        $445/530/30$  & 193 (3+190) & 1778 (\colorbox{Red1}{1447}+331)  & 688  & 771  & \colorbox{Blue1}{9.21} & \colorbox{Blue1}{3.56} & \colorbox{Blue1}{3.99}  &  448  & 16 & 14 & 748  & 26 & 27 & 2420 & 81  & 18 & \colorbox{Red1}{5403} & 180 & 18 \\
        $890/2000/60$ & 215 (2+213) & 1919 (\colorbox{Red1}{1479}$^*$+440) & 1117 & 793  & \colorbox{Blue1}{8.93} & \colorbox{Blue1}{5.20} & \colorbox{Blue1}{3.69}  &  899  & 13 & 44 & 2040 & 38 & 35 & 6870 & 114 & 36 & \colorbox{Red1}{10805} & 180 & 36 \\
		\hline
        \multicolumn{20}{l}{$^*$ \small \ParTime{} for $890/2000/60$ is similar to that of $445/530/30$ after we tuned parameters for \methodVit s.t.\ it could finish optimization within 90 minutes.} \\
		\end{tabular}
	}
    \end{subtable}
\vspace{5px}
\newline
	\begin{subtable}{\linewidth}\centering
	\caption{Varying input size: \texttt{pareto-1.5}, $d=8$, band width is 20 in each dimension, 30 workers.}
	\label{tab:varying-input}
	\scalebox{0.8}{
	\begin{tabular}{|r|r||r|r|r|r||rrr|rrr|rrr|rrr|}
		\hline
		\multirow{2}{*}{\begin{tabular}[r]{@{}r@{}}Input Size \\in [millions]\end{tabular}} &
          \multirow{2}{*}{\begin{tabular}[r]{@{}r@{}}Join Result Size \\in [millions]\end{tabular}} &
		  \multicolumn{4}{c||}{Runtime (\parTime+join time) in [sec]} & \multicolumn{12}{c|}{I/O sizes in [millions]: $I,I_m,O_m$} \\ 
		\cline{3-18}
		 & & \method & \methodVit & \methodOneB & \methodGrid
			& \multicolumn{3}{c|}{\method} & \multicolumn{3}{c|}{\methodVit} 
			& \multicolumn{3}{c|}{\methodOneB} & \multicolumn{3}{c|}{\methodGrid}   \\
		\hline
	$100$ & 9 & \colorbox{Blue1}{61} (5+56) & \colorbox{Blue1}{528} (\colorbox{Red1}{449}+79) & \colorbox{Blue1}{292} & \colorbox{Blue1}{173,581} & 104 & 3 & 2 & 142 & 5 & 1 & 550 & 18 & 0.3 & \colorbox{Red1}{297,421} & 9,914 & 0.3 \\
	$200$ & 57 & \colorbox{Blue1}{120} (5+115) & \colorbox{Blue1}{612} (\colorbox{Red1}{448}+164) & \colorbox{Blue1}{587} & \colorbox{Blue1}{347,944} & 210 & 7 & 2 & 285 & 10 & 5 & 1100 & 37 & 2 & \colorbox{Red1}{594,834} & 19,828 & 2 \\
	$400$ & 219 & \colorbox{Blue1}{240} (8+232) & \colorbox{Blue1}{760} (\colorbox{Red1}{418}+342) & \colorbox{Blue1}{1180} & \colorbox{Blue1}{694,574} & 420 & 14 & 7 & 574 & 7 & 67 & 2200 & 73 & 7 & \colorbox{Red1}{1,189,996} & 39,667 & 7 \\
	$800$ & 857 & \colorbox{Blue1}{510} (17+493) & \colorbox{Blue1}{1166} (\colorbox{Red1}{423}+743) & \colorbox{Blue1}{2390} & \colorbox{Blue1}{$1.39\cdot 10^6$} & 847 & 26 & 31 & 1180 & 53 & 4 & 4400 & 147 & 29 & \colorbox{Red1}{2,379,329} & 79,311 & 29 \\\hline
	\end{tabular}}
    \end{subtable}
\vspace{5px}
\newline
	\begin{subtable}{\linewidth}\centering
	\caption{Varying number of workers ($w$): \texttt{pareto-1.5}, $d=8$, band width is 20 in each dimension, input size 400 million.}
	\label{tab:varying-w}
	\scalebox{0.8}{
	\begin{tabular}{|r|r||r|r|r|r||rrr|rrr|rrr|rrr|}
		\hline
		\multirow{2}{*}{$w$} & \multirow{2}{*}{\begin{tabular}[r]{@{}r@{}}Join Result Size \\in [millions]\end{tabular}} &
		  \multicolumn{4}{c||}{Runtime (\parTime+join time) in [sec]} & \multicolumn{12}{c|}{I/O sizes in [millions]: $I,I_m,O_m$} \\ 
		\cline{3-18}
		 & & \method & \methodVit & \methodOneB & \methodGrid
			& \multicolumn{3}{c|}{\method} & \multicolumn{3}{c|}{\methodVit} 
			& \multicolumn{3}{c|}{\methodOneB} & \multicolumn{3}{c|}{\methodGrid}   \\
		\hline
	1 & \multirow{4}{*}{219} & \colorbox{Blue1}{3655} & \colorbox{Blue1}{3655} & \colorbox{Blue1}{3655} & \colorbox{Blue1}{8,527,502} & 400 & 400 & 219 & 400 & 400 & 219 & 400 & 400 & 219 & \colorbox{Red1}{1,189,996} & 1,189,996 & 219 \\
	15 &  & \colorbox{Blue1}{358} (5+353) & \colorbox{Blue1}{710} (\colorbox{Red1}{190}+520) & \colorbox{Blue1}{1295} & \colorbox{Blue1}{1,040,000} & 420 & 28 & 10 & 565 & 40 & 29 & 1600 & 107 & 15 & \colorbox{Red1}{1,189,996} & 79,333 & 15 \\
	30 &  & \colorbox{Blue1}{240} (8+232) & \colorbox{Blue1}{760} (\colorbox{Red1}{418}+342) & \colorbox{Blue1}{1180} & \colorbox{Blue1}{695,000} & 420 & 14 & 7 & 574 & 7 & 67 & 2200 & 73 & 7 & \colorbox{Red1}{1,189,996} & 39,667 & 7 \\
	60 &  & \colorbox{Blue1}{182} (10+172) & \colorbox{Blue1}{3703} (\colorbox{Red1}{3431}+272) & \colorbox{Blue1}{1287} & \colorbox{Blue1}{525,000} & 425 & 6 & 5 & 619 & 13 & 2 & 3200 & 53 & 4 & \colorbox{Red1}{1,189,996} & 19,833 & 4 \\\hline
	\end{tabular}}
    \end{subtable}
\end{table*}

\subsection{Scalability}
\label{sec:expScalability}

\Cref{tab:scal-pareto,tab:scal-real,tab:varying-input,tab:varying-w} show that \emph{\methodS and \method have almost perfect scalability
and beat all competitors}. In \Cref{tab:scal-pareto,tab:scal-real}, from row to row, we double both input size and
number of workers. In \Cref{tab:varying-input}, only the input size varies while the number of workers is constant.
In \Cref{tab:varying-w}, we only change the number of workers. The latter two results are for an 8D band-join
to explore \emph{which techniques can scale beyond dimensionality common today}.
For cost reasons, we use the running-time model to predict join time in \Cref{tab:varying-input,tab:varying-w}.
For queries on real data, the smaller inputs are random samples from the full data. Note that join output grows super-linearly,
therefore perfect scalability cannot be achieved. Nevertheless, \methodS and \method come close to the ideal when taking
the output increase into account. When the same query is run on different-sized clusters
(\Cref{tab:varying-w}), \method scales out best. \methodVit's optimization time grows
substantially as $w$ increases. We explored various settings, but reducing optimization time
resulted in even higher join time and vice versa. The numbers shown represent the best tradeoff found. 
\methodGrid failed on the largest synthetic input due to a memory
exception caused by one of the grid cells receiving too many input records.

\begin{table}[tbp]
\setlength{\tabcolsep}{1mm}
	\centering
	\caption{\methodGrid vs. \methodGridOpt on \texttt{pareto-$1.5$}, band width ($2,2,2$), varying grid size ($I$, $I_m$ and $O_m$ in [millions]).}
	\label{tab:vary-grid}
	\scalebox{0.8}
	{
		\begin{tabular}{|r|rrrr||rrrr|}
		\hline
		\multicolumn{5}{|c||}{\methodGrid} &  $I$\quad & $I_m$ & $O_m$ & Join Time \\ \hline
			  Grid Size & $I$\quad & $I_m$ & $O_m$ & Join Time &  \multicolumn{4}{c|}{\methodGridOpt} \\ \cline{1-5}
		      (1,1,1) & 5610 & 180 & 38 & 2993 & 460 & 16 & 46 & \colorbox{Blue1}{335} \\ \cline{6-9}
			  (2,2,2) & 5541 & 185 & 37 & 3021 & \multicolumn{4}{c|}{\methodS} \\
		      (4,4,4) & 1780 & 60 & 38 & 1023 & 404 & 15 & 29 & \colorbox{Blue1}{286} \\ \cline{6-9}
		      (8,8,8) & 861 & 29 & 38 & 533 & \multicolumn{4}{c|}{\methodVit} \\
		      (16,16,16) & 582 & 20 & 39 & 389 &  652 & 19 & 69 & \colorbox{Blue1}{459}\\ \cline{6-9}
		      (32,32,32) & 478 & 16 & 42 & \colorbox{Blue1}{336} & \multicolumn{4}{c|}{\methodOneB} \\
		      (64,64,64) & 435 & 15 & 56 & 344 & 2200 & 73 & 37 & \colorbox{Blue1}{1236} \\\hline
		\end{tabular}
	}
\end{table}

\begin{table}[tbp]
\setlength{\tabcolsep}{1mm}
\centering
\caption{\methodGridOpt vs. \method. I/O sizes in [millions].}
\label{tab:gridopt}
\scalebox{0.8}
{
	\begin{tabular}{|r|r||rrr|r|rrr|}
	\hline
	\multirow{2}{*}{Data Sets}
        & \multirow{2}{*}{Band width}
	    & \multicolumn{3}{c|}{\method} & \multicolumn{4}{c|}{\methodGridOpt} \\
		\cline{3-9}
		& & $I$\qquad & $I_m$ & $O_m$ & Grid Size & $I$\quad & $I_m$ & $O_m$ \\
	\hline
    \texttt{pareto-$2.0$} & (2,2,2)
        & 406 & 14 & 111 & 8 & 497 & 17 & 130 \\
    \texttt{rv-pareto-$1.5$} & (1K,1K,1K)
        & 400 & 13 & 0 & 2750 & 882 & \colorbox{Red1}{237} & 0 \\
    \texttt{rv-pareto-$1.5$} & (2K,2K,2K)
    	& 401 & 13 & 0 & 11500 & 1207 & \colorbox{Red1}{401} & 0 \\
	\hline
	\end{tabular}
}
\end{table}

\subsection{Optimizing Grid Size}
\label{sec:vary-grid}

\Cref{tab:vary-grid} shows that grid granularity has a significant impact on join time, here model estimated,
of \methodGrid. With the default grid size $(2, 2, 2)$, join time is 9x
higher compared to grid size $(32, 32, 32)$, caused by input duplication.
Our extension \emph{\methodGridOpt automatically explores different grid sizes},
using the same running-time model $\mathcal{M}$ as \method and
\methodVit to find the best setting.
Starting with grid size $\varepsilon_i$ in dimension $i$ for all join attributes $A_i$,
it tries coarsening the grid to size $j \cdot \varepsilon_i$ in dimension $i$, for $j =2, 3,\ldots$ For each
resulting grid partitioning $\mathcal{G}$, we execute $\mathcal{M}(\mathcal{G})$ to let model
$\mathcal{M}$ predict the running time, until a local minimum is found.

Automatic grid tuning works well for \texttt{Pareto-$z$} where both
inputs are similarly distributed and band width is small: \Cref{lem:gridBalanceGood} applies with
large $c_1$ and small $c_2$, providing strong upper bounds on the amount of input in
any $\varepsilon$-range. This confirms that grid-partitioning can indeed work well for ``sufficiently large''
input even in 3D space. However, \methodGridOpt fails on the reverse Pareto
distribution as \Cref{tab:gridopt} shows.
There $S$ and $T$ have very different density, resulting in small $c_1$ and large $c_2$, and
therefore much weaker upper bounds on the input per $\varepsilon$-range. The resulting dense
regions, as stated by \Cref{lem:gridBalanceBad}, cause high input duplication and
high input $I_m$ assigned to the most loaded worker.

\begin{table}[tbp]
\setlength{\tabcolsep}{1mm}
\centering
\caption{Comparing to distributed IEJoin: Input duplication and max worker load on \texttt{pareto-$z$}, $w=30$, varying skew and
band width ($I$, $I_m$ and $O_m$ in [millions]).}
\label{tab:comp-iejoin}
\scalebox{0.8}{
	\begin{tabular}{|r|r|r||rrr||rrr|r|}
	\hline
	\multirow{2}{*}{$Z$} & Band & \multirow{2}{*}{\begin{tabular}[c]{@{}r@{}}Output \\size\end{tabular}} & \multicolumn{3}{c||}{\methodS} & \multicolumn{4}{c|}{IEJoin} \\
	\cline{4-10}
	 & width &  & $I$ & $I_m$ & $O_m$ & $I$ & $I_m$ & $O_m$ & sizePerBlock \\
	\hline
	\multirow{3}{*}{1.5} & \multirow{3}{*}{{[}0,0,0{]}} & \multirow{3}{*}{0} & \multirow{3}{*}{401} & \multirow{3}{*}{14} & \multirow{3}{*}{0} & 780 & 40 & 0 & 10000 \\
	 &  &  &  &  &  & 726 & 25 & 0 & \colorbox{Blue1}{12524} \\
	 &  &  &  &  &  & 756 & 28 & 0 & 14000 \\\hline
	\multirow{3}{*}{1.5} & \multirow{3}{*}{{[}2,2,2{]}} & \multirow{3}{*}{1120} & \multirow{3}{*}{404} & \multirow{3}{*}{15} & \multirow{3}{*}{29} & 1092 & 48 & 14 & 6000 \\
	 &  &  &  &  &  & 1070 & 45 & 21 & \colorbox{Blue1}{7422} \\
	 &  &  &  &  &  & 1062 & 36 & 85 & 9000 \\\hline
	\multirow{3}{*}{1.0} & \multirow{3}{*}{{[}2,2,2{]}} & \multirow{3}{*}{420} & \multirow{3}{*}{401} & \multirow{3}{*}{13} & \multirow{3}{*}{17} & 1176 & 40 & 21 & 4000 \\
	 &  &  &  &  &  & 1080 & 37 & 26 & \colorbox{Blue1}{6263} \\
	 &  &  &  &  &  & 1088 & 48 & 4 & 8000 \\\hline
	\multirow{3}{*}{0.5} & \multirow{3}{*}{{[}2,2,2{]}} & \multirow{3}{*}{12} & \multirow{3}{*}{401} & \multirow{3}{*}{13} & \multirow{3}{*}{0.3} & 828 & 24 & 1 & 6000 \\
	 &  &  &  &  &  & 796 & 17 & 2 & \colorbox{Blue1}{8295} \\
	 &  &  &  &  &  & 820 & 20 & 2 & 10000 \\\hline
	\end{tabular}}
\end{table}

\subsection{Comparing to Distributed IEJoin}
\label{sec:exp-iejoin}

\Cref{tab:comp-iejoin} shows representative results for a comparison to the quantile-based partitioning
used by IEJoin~\cite{khayyat2017fast}. We explore a wide range of inter-quantile
ranges (sizePerBlock) and report results for those at and near the best setting found.
Note that \method can use the
same local processing algorithm as IEJoin, therefore we are interested in comparing based on
\emph{the quality of the partitioning}, i.e., $I$, $I_m$, and $O_m$.
It is clearly visible that \emph{\methodS finds significantly better partitionings}, providing more evidence
that simple quantile-based partitioning of the join matrix does not suffice.

\begin{table}[tbp]
\setlength{\tabcolsep}{1mm}
\centering
\caption{Impact of local join algorithm: Input duplication $I$ vs max worker load $L_m = 4 I_m + O_m$ for varying ratios $\beta_2/\beta_1$
for join of \texttt{ebird} with \texttt{cloud}, band width $(2,2,2)$, $w=30$.}
\label{tab:varyingWeights}
\scalebox{0.8}{
\begin{tabular}{|r||r|r||r|r||r|r||r|r|}\hline
\multirow{2}{*}{$\beta_2 / \beta_1$} & \multicolumn{2}{c||}{\method} & \multicolumn{2}{c||}{\methodVit} & \multicolumn{2}{c||}{\methodOneB} & \multicolumn{2}{c|}{\methodGrid} \\
\cline{2-9}
 & $I$ & $L_m$ & $I$ & $L_m$ & $I$ & $L_m$ & $I$ & $L_m$ \\
\hline
0.0001 & 890.34 & 289 & \multirow{9}{*}{1830} & \multirow{9}{*}{502} & \multirow{9}{*}{4832} & \multirow{9}{*}{711} & \multirow{9}{*}{10800} & \multirow{9}{*}{1518} \\
0.001 & 890.36 & 223 &  &  &  &  &  & \\
0.01 & 890.42 & 195 &  &  &  &  &  & \\
0.1 & 890.52 & 191 &  &  &  &  &  & \\
1 & 890.8 & 189 &  &  &  &  &  & \\
10 & 890.8 & 189 &  &  &  &  &  & \\
100 & 890.8 & 189 &  &  &  &  &  & \\
1000 & 890.8 & 189 &  &  &  &  &  & \\
10000 & 890.8 & 189 &  &  &  &  &  & \\\hline
\end{tabular}}
\end{table}

\subsection{Impact of Local Join Algorithm}
\label{sec:exp-shuffleVSlocal}

\Cref{tab:varyingWeights} shows a typical result of the impact of ratio $\beta_2 / \beta_1$.
A high ratio occurs in systems with fast data transfer and slow local computation. Replacing the local band-join
algorithm with a faster one would lower the ratio. While
the competitors are not affected by the ratio (they all ignore network shuffle time), it is visible
how \emph{increasing weight on local join cost makes \method reduce max worker load}, incurring slightly
higher input duplication.

\begin{table}[ptb]
\setlength{\tabcolsep}{1mm}
\centering
\caption{\methodS vs. \method (I/O sizes in [millions]).}
\label{tab:dupBothOrDupT}
\scalebox{0.8}
{
	\begin{tabular}{|r|r||rrr|rrr|}
	\hline
	\multirow{2}{*}{Data Sets}
        & \multirow{2}{*}{Band width}
	    & \multicolumn{3}{c|}{\methodS} 
		& \multicolumn{3}{c|}{\method} \\
	\cline{3-8}
		& & $I$\qquad & $I_m$ & $O_m$ 
		& $I$\quad & $I_m$ & $O_m$ \\
	\hline
    \texttt{pareto-$1.0$} & (2,2,2)
		& 401 & 13 & 17  
        & 401 & 12 & 21 \\
    \texttt{ebird} and \texttt{cloud} & (0,0,0)
		& 890 & 30 & 0  
        & 890 & 30 & 0 \\		
    \texttt{ebird} and \texttt{cloud} & (2,2,2)
		& 899 & 32 & 66  
        & 891 & 31 & 67 \\		
    \texttt{ebird} and \texttt{cloud} & (4,4,4)
		& 918 & 31 & 567  
        & 894 & 30 & 515 \\
	\hline		
    \texttt{rv-pareto-$1.5$} & (1000,1000,1000)
		& 452 & \colorbox{Red1}{143} & 0  
        & 400 & 13 & 0 \\		
    \texttt{rv-pareto-$1.5$} & (2000,2000,2000)
		& 430 & \colorbox{Red1}{173} & 0  
    	& 401 & 13 & 0 \\		
    \texttt{rv-pareto-$1.5$} & 2
		& 433 & 40 & 0  
    	& 401 & 14 & 0\\		
    \texttt{rv-pareto-$1.5$} & 1000
		& 402 & \colorbox{Red1}{200} & 0
    	& 402 & 14 & 0 \\		
	\hline
	\end{tabular}
}
\end{table}

\subsection{Impact of Symmetric Partitioning}
\label{sec:exp-symmetric}

While \method and \methodS find similar partitionings on \texttt{pareto-1.0} and the real data,
the advantages of symmetric partitioning are revealed on the reverse Pareto data in \Cref{tab:dupBothOrDupT}.
Here \methodS cannot split regions with high density of $T$ without incurring high input duplication.
In contrast, \method switches the roles of $S$ and $T$, because $S$ is sparse in those regions
and hence the split creates few input duplicates.

\begin{figure}[tbp]
\centering
\includegraphics[width=0.9\linewidth]{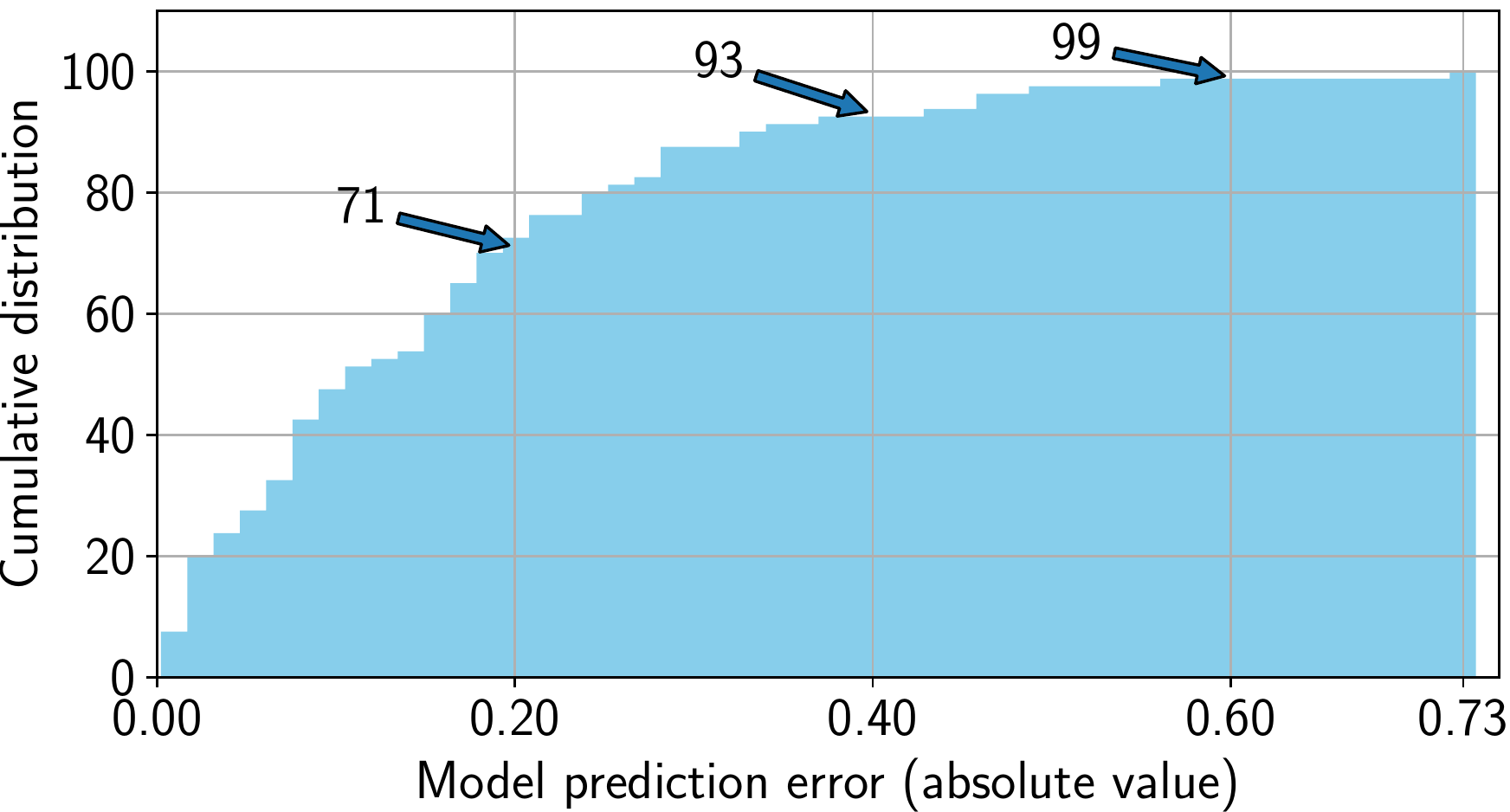}
\caption{
Accuracy of the running-time model: 
Cumulative distribution of model error.
}
\label{fig:model-cumulative-dist}
\end{figure}

\subsection{Accuracy of the Running-Time Model}
\label{sec:exp-model}

In all experiments, the running-time model's predictions were sufficiently accurate for identifying excellent
partitionings for \method. Due to the high cost in terms of time and money of executing computations in the cloud,
we sometimes rely on the running-time model also to report join time.
\Cref{fig:model-cumulative-dist} shows the cumulative distribution of the relative
error between predicted and measured join time for over 80 experiments selected randomly from all runs.
E.g., the model has less than $20\%$ error in over $70\%$ of the cases and it is never off by more than
a factor 1.8.

\subsection{Near-Optimality of Our Approach}
\label{sec:exp-4alg-summary}

\resultbox{
\Cref{fig:comp-overhead} in \Cref{sec:intro} summarizes the results from all tables, making it easy to
see how \method achieves significantly lower max worker load ($y$-axis)
with less input duplication ($x$-axis).
\method is always within 10\% of the lower bound
on both measures, beating the competition by a wide margin.
}

\section{Conclusions}
\label{sec:concl}

For distributed band-joins,
we showed that recursive partitioning with the appropriate split-scoring measure achieves both low
optimization time (a few seconds) and near-optimal join partitioning (within 10\% of the respective
lower bounds) on input duplication and max worker load.
Even if system parameters change, e.g., due to faster networks or CPUs,
\method's achievement will still stand, because the issues of low input duplication and low input and
output per worker will remain crucial optimization goals.

There are several exciting directions for future work. First, for band-joins between more than two relations,
can one do better than reducing the problem to multiple two-way joins? Second, how do we partition
for join conditions that contain a mix of various types of predicates, e.g., equality, inequality, band, and non-equality?
Third, what other types of join conditions give rise to specialized methods, like \method for band-join,
so that similarly significant improvements over generic theta-join approaches can be achieved?

More information about \method and other techniques for distributed data-intensive computations
can be found at \url{https://northeastern-datalab.github.io/distributed/}.

\begin{acks}
This work was supported in part by
the National Institutes of Health (NIH) under award number R01 NS091421 and by
the National Science Foundation (NSF) under award number CAREER IIS-1762268.
The content is solely the responsibility of the authors and does not necessarily represent
the official views of NIH or NSF.
We also would like to thank Aditya Ghosh for his contributions to design and implementation of \methodGridOpt,
Dhavalkumar Patel for implementing the join-sampling algorithm,
and the anonymous reviewers and Niklas Smedemark-Margulies for their constructive feedback.
\end{acks}
\bibliographystyle{ACM-Reference-Format}
\balance
\bibliography{band-join}
\clearpage
\begin{appendix}

\section{Additional Empirical Results}

We present additional measurements and insights from our empirical evaluation. Results from the main
paper may be repeated here for convenience so that the reader can find relevant numbers in one place.
Properties of the datasets used are summarized in \Cref{tab:queryBig}.

\begin{table}[tb]
\centering
\caption{Band-join characteristics used in the experiments. Input and output size are reported in [million tuples].}
\label{tab:queryBig}
\scalebox{0.8}
{
	\begin{tabular}{|l|r|r|r|r|}
	\hline
	Data set                         & $d$ & Band width & Input Size & Output Size  \\
	\hline
	\texttt{pareto-1.5}              & 1 & $0$              & 400  & 2430  \\ 
	\texttt{pareto-1.5}              & 1 & $10^{-5}$        & 400  & 4580  \\
	\texttt{pareto-1.5}              & 1 & $2\cdot 10^{-5}$ & 400  & 9120  \\
	\texttt{pareto-1.5}              & 1 & $3\cdot 10^{-5}$ & 400  & 11280 \\
	\texttt{pareto-1.5}              & 3 & $(0,0,0)$        & 400  & 0     \\
	\texttt{pareto-1.5}              & 3 & $(2,2,2)$        & 400  & 1120  \\
	\texttt{pareto-1.5}              & 3 & $(4,4,4)$        & 400  & 8740  \\
	\hline	
	\texttt{pareto-0.5}              & 3 & $(2,2,2)$        & 400  & 12    \\
	\texttt{pareto-1.0}              & 3 & $(2,2,2)$        & 400  & 420   \\
	\texttt{pareto-2.0}              & 3 & $(2,2,2)$        & 400  & 3200  \\
	\hline
	\texttt{pareto-1.5}              & 8 & $(20,\ldots,20)$     & 100  & 9  \\
	\texttt{pareto-1.5}              & 8 & $(20,\ldots,20)$     & 200  & 57  \\
	\texttt{pareto-1.5}              & 8 & $(20,\ldots,20)$     & 400  & 219  \\
	\texttt{pareto-1.5}              & 8 & $(20,\ldots,20)$     & 800  & 857  \\
	\hline
	\texttt{rv-pareto-1.5}      & 1 & $2$        & 400    & 0  \\
	\texttt{rv-pareto-1.5}      & 1 & $1000$     & 400 	 & 0  \\
	\texttt{rv-pareto-1.5}      & 3 & $(1000,1000,1000)$     & 400 & 0  \\
	\texttt{rv-pareto-1.5}      & 3 & $(2000,2000,2000)$     & 400 & 0  \\
	\hline
	\texttt{ebird} and \texttt{cloud} & 3 & ($0,0,0$) & 890     & 0     \\
	\texttt{ebird} and \texttt{cloud} & 3 & ($1,1,1$) & 890        & 320  \\
	\texttt{ebird} and \texttt{cloud} & 3 & ($1,1,5$) & 890        & 1164 \\
	\texttt{ebird} and \texttt{cloud} & 3 & ($2,2,2$) & 890        & 2134 \\
	\texttt{ebird} and \texttt{cloud} & 3 & ($4,4,4$) & 890        & 16998 \\
    \hline
    \texttt{ptf\_objects} & 2 & ($2.78\cdot 10^{-4},2.78\cdot 10^{-4}$) & 1198 & 876 \\
    \texttt{ptf\_objects} & 2 & ($8.33\cdot 10^{-4},8.33\cdot 10^{-4}$) & 1198 & 1125 \\
    \hline
	\end{tabular}
}
\end{table}

\subsection{Comparing to Distributed IEJoin}
\label{sec:exp-iejoin}

\resultbox{\Cref{tab:comp-iejoin} shows that \methodS has faster join time than the distributed version of
IEJoin~\cite{khayyat2017fast} for band joins.}

IEJoin~\cite{khayyat2017fast} uses a carefully designed data structure to speed up in-memory computation
of joins with inequality predicates. Its distributed version sorts the input datasets on one of the join attributes and
range-partitions them into default-sized blocks based on approximate quantiles, before pairs of joinable blocks
are assigned to the $w$ workers for local
joining. We find that the block size, i.e., the granularity of the quantiles used to define the partitions in each
input dataset, is a key meta-parameter.
\Cref{tab:comp-iejoin} shows how much it affects the efficiency of distributed band-joins.
For each query, we search for the best block size and list a few smaller and larger ones.
With a good block size, distributed IEJoin can have join time close to that of \methodS
in certain cases. However, in most cases, it creates significantly higher input duplication than
\methodS, thus having a longer join time. The higher duplication is caused by (1) partition boundaries that cut
through dense input regions and (2) the direct use of quantiles for defining partitions, instead of applying
a covering algorithm to identify just $w$ partitions
(as \methodVit~\cite{vitorovic2016load} and M-Bucket-I~\cite{Okcan:theta-join} do).
The main benefit of the covering algorithm is that it can often avoid duplication of input blocks that
belong to multiple joinable block pairs by assigning those pairs to the same region in the cover.

\begin{table*}[tb]
\setlength{\tabcolsep}{1mm}
\centering
\caption{\Cref{sec:exp-iejoin}: Comparing \method with distributed IEJoin on 30 workers. We use \texttt{pareto} data sets with 400 million input. The best block size (with the shortest join time) for each query is marked in bold. (Join result size, $I,I_m$ and $O_m$ in [millions])}
\label{tab:comp-iejoin}
\scalebox{0.8}{
	\begin{tabular}{|r|r|r||r|rrr||r|rrr|r|}
	\hline
	\multirow{2}{*}{Data Sets} & \multirow{2}{*}{Band Width} & \multirow{2}{*}{\begin{tabular}[c]{@{}r@{}}Join Result Size\end{tabular}} & \multicolumn{4}{c||}{\methodS} & \multicolumn{5}{c|}{IEJoin} \\
	\cline{4-12}
	 &  &  & Join Time & $I$ & $I_m$ & $O_m$ & Join Time & $I$ & $I_m$ & $O_m$ & sizePerBlock \\
	\hline
	\multirow{5}{*}{Pareto-1.5} & \multirow{5}{*}{{[}0,0,0{]}} & \multirow{5}{*}{0} & \multirow{5}{*}{\colorbox{Blue1}{229}} & \multirow{5}{*}{401} & \multirow{5}{*}{14} & \multirow{5}{*}{0} & 523 & 780 & 40 & 0 & 10000 \\
	 &  &  &  &  &  &  & 551 & 770 & 44 & 0 & 11000 \\
	 &  &  &  &  &  &  & 391 & 726 & 25 & 0 & \textbf{12524} \\
	 &  &  &  &  &  &  & 406 & 754 & 26 & 0 & 13000 \\
	 &  &  &  &  &  &  & 422 & 756 & 28 & 0 & 14000 \\\hline
	\multirow{5}{*}{Pareto-1.5} & \multirow{5}{*}{{[}2,2,2{]}} & \multirow{5}{*}{1120} & \multirow{5}{*}{\colorbox{Blue1}{342}} & \multirow{5}{*}{404} & \multirow{5}{*}{15} & \multirow{5}{*}{29} & 696 & 1092 & 48 & 14 & 6000 \\
	 &  &  &  &  &  &  & 707 & 1092 & 42 & 44 & 7000 \\
	 &  &  &  &  &  &  & \colorbox{Red1}{677} & 1070 & 45 & 21 & \textbf{7422} \\
	 &  &  &  &  &  &  & 695 & 1088 & 48 & 14 & 8000 \\
	 &  &  &  &  &  &  & 733 & 1062 & 36 & 85 & 9000 \\\hline
	\multirow{5}{*}{Pareto-1.5} & \multirow{5}{*}{{[}4,4,4{]}} & \multirow{5}{*}{8740} & \multirow{5}{*}{\colorbox{Blue1}{858}} & \multirow{5}{*}{413} & \multirow{5}{*}{14} & \multirow{5}{*}{29} & 968 & 1088 & 64 & 900 & 16000 \\
	 &  &  &  &  &  &  & 922 & 1054 & 68 & 55 & 17000 \\
	 &  &  &  &  &  &  & 899 & 1116 & 72 & 19 & \textbf{18000} \\
	 &  &  &  &  &  &  & 3512 & 1064 & 38 & 1500 & 19000 \\
	 &  &  &  &  &  &  & 3753 & 1120 & 40 & 1610 & 20000 \\\hline
	\multirow{5}{*}{Pareto-1.0} & \multirow{5}{*}{{[}2,2,2{]}} & \multirow{5}{*}{420} & \multirow{5}{*}{\colorbox{Blue1}{287}} & \multirow{5}{*}{401} & \multirow{5}{*}{13} & \multirow{5}{*}{17} & 670 & 1176 & 40 & 21 & 4000 \\
	 &  &  &  &  &  &  & 650 & 1120 & 40 & 18 & 5000 \\
	 &  &  &  &  &  &  & \colorbox{Red1}{635} & 1080 & 37 & 26 & \textbf{6263} \\
	 &  &  &  &  &  &  & 661 & 1106 & 42 & 18 & 7000 \\
	 &  &  &  &  &  &  & 676 & 1088 & 48 & 4 & 8000 \\\hline
	\multirow{5}{*}{Pareto-0.5} & \multirow{5}{*}{{[}2,2,2{]}} & \multirow{5}{*}{12} & \multirow{5}{*}{\colorbox{Blue1}{227}} & \multirow{5}{*}{401} & \multirow{5}{*}{13} & \multirow{5}{*}{0.3} & 413 & 828 & 24 & 1 & 6000 \\
	 &  &  &  &  &  &  & 436 & 798 & 28 & 1 & 7000 \\
	 &  &  &  &  &  &  & 348 & 796 & 17 & 2 & \textbf{8295} \\
	 &  &  &  &  &  &  & 363 & 810 & 18 & 2 & 9000 \\
	 &  &  &  &  &  &  & 382 & 820 & 20 & 2 & 10000 \\\hline
	\end{tabular}}
\end{table*}

\subsection{Accuracy of the Running-Time Model}
\label{sec:exp-model}

\resultbox{\Cref{tab:modelAccuracy} and \Cref{fig:model-cumulative-dist} show that our running-time model has a less than $20\%$ relative error in over $70\%$ of the cases. \Cref{tab:varyingWeightsBig} sheds more light on how this model helps \method to intelligently choose which overhead to tune to improve join efficiency.}

\Cref{tab:modelAccuracy} lists the predicted and actual join time. 
Our running-time model predicts correctly which algorithm has shorter join time, and in most cases has a less than $20\%$ relative error compared to the ground truth. 
\Cref{fig:model-cumulative-dist} summarizes the prediction errors in a cumulative distribution. 

A major benefit of using a running-time model is to help \method determine which part of the distributed join computation has more impact on join time. Specifically, if the running-time model assigns a greater weight factor
to total input size $I$ than to the local overhead ($4\cdot I_m + O_m$), then \method will try to create less
duplication, perhaps at the expense of a worse load balance.
We conduct a series of experiments exploring how the weight factors in the running-time model affect \method.
As shown in \Cref{tab:varyingWeightsBig}, $\beta_1$ is fixed, and as $\beta_2$ increases, the local overhead has a
greater impact on the overall join time, so it is beneficial to reduce it. This is what \method does: $I$ increases and
 $4\cdot I_m+O_m$ decreases as $\beta_2$ gets bigger, indicating that the algorithm more aggressively
partitions the data to reduce max worker load. The local join cost is proportional to the load $4\cdot I_m + O_m$. The factor 4 here is obtained for our cluster using the method mentioned in \cite{vitorovic2016load}.

\renewcommand\dbltopfraction{1} %

\begin{table*}[tb]
\setlength{\tabcolsep}{1mm}
\centering
\caption{\Cref{sec:exp-model}: Running-Time Model Accuracy. Unless stated otherwise, the join was executed on the 30-worker cluster. Dataset $X/Y/w$ refers to input size of $X$ million, output size of $Y$ million and a cluster with $w$ workers.}
\label{tab:modelAccuracy}
\scalebox{0.8}{
	\begin{tabular}{|rr|rrr|rrr|rrr|rrr|}\hline
	\multirow{2}{*}{Data Sets} & \multirow{2}{*}{Band Width} & \multicolumn{3}{c|}{\methodS} & \multicolumn{3}{c|}{\methodVit} & \multicolumn{3}{c|}{\methodOneB} & \multicolumn{3}{c|}{\methodGrid} \\\cline{3-14}
	 &  & Predicted & Actual & Error & Predicted & Actual & Error & Predicted & Actual & Error & Predicted & Actual & Error \\
	\hline
	pareto-1.5, d=1 & 0 & 375.35 & 348 & 7.86\% & 491.55 & 483 & 1.77\% & 1323.69 & 762 & 73.71\% & - & - & - \\
	pareto-1.5, d=1 & 1.00E-05 & 507.36 & 532 & -4.63\% & 711.22 & 655 & 8.58\% & 1464.09 & 1004 & 45.83\% & 722.61 & 540 & 33.82\% \\
	pareto-1.5, d=1 & 2.00E-05 & 807.54 & 810 & -0.30\% & 969.48 & 962 & 0.78\% & 1758.54 & 1316 & 33.63\% & 1017.06 & 834 & 21.95\% \\
	pareto-1.5, d=1 & 3.00E-05 & 948.33 & 875 & 8.38\% & 1184.12 & 1140 & 3.87\% & 1898.94 & 1520 & 24.93\% & 1157.46 & 956 & 21.07\% \\ \hline
	pareto-1.5, d=3 & (0,0,0) & 215.13 & 229 & -6.06\% & 265 & 320 & -17.19\% & 1165.74 & 792 & 47.19\% & - & - & - \\
	pareto-1.5, d=3 & (2,2,2) & 286.53 & 342 & -16.22\% & 462.5 & 645 & -28.29\% & 1238.48 & 1149 & 7.79\% & 1379.65 & 1412 & -2.29\% \\
	pareto-1.5, d=3 & (4,4,4) & 786.99 & 858 & -8.28\% & 1090.11 & 1212 & -10.06\% & 1733.19 & 1772 & -2.19\% & 2211.65 & 1816 & 21.79\% \\ \hline
	ebird \& cloud & (0,0,0) & 331.96 & 245 & 35.49\% & 354.46 & 308 & 15.08\% & 1791.93 & 1418 & 26.37\% & - & - & - \\
	ebird \& cloud & (1,1,1) & 364.77 & 329 & 10.87\% & 787.67 & 977 & -19.38\% & 1806.54 & 1532 & 17.92\% & 1456.17 & 1419 & 2.62\% \\
	ebird \& cloud & (1,1,5) & 459.77 & 432 & 6.43\% & 902.82 & 985 & -8.34\% & 1844.18 & 1597 & 15.48\% & 1577.02 & 1505 & 4.79\% \\
	ebird \& cloud & (2,2,2) & 468.43 & 420 & 11.53\% & 1030.42 & 1212 & -14.98\% & 1893.32 & 1573 & 20.36\% & 1717 & 1377 & 24.69\% \\ \hline
	pareto-$0.5$ & (2,2,2) & 210.255 & 227 & -7.38\% & 313.74 & 346 & -9.32\% & 1164.18 & 1137 & 2.39\% & 1260.7 & 1146 & 10.01\% \\
	pareto-$1.0$ & (2,2,2) & 242.82 & 287 & -15.39\% & 382.77 & 539 & -28.99\% & 1190.7 & 1235 & -3.59\% & 1304.9 & 1335 & -2.25\% \\
	pareto-$1.5$ & (2,2,2) & 282.63 & 342 & -17.36\% & 461.91 & 645 & -28.39\% & 1235.55 & 1149 & 7.53\% & 1379.65 & 1412 & -2.29\% \\
	pareto-$2.0$ & (2,2,2) & 435.27 & 483 & -9.88\% & 677.49 & 811 & -16.46\% & 1372.05 & 1369 & 0.22\% & 2387.15 & 2417 & -1.24\% \\ \hline
    pareto-$1.5$ 200/282/15 & (2,2,2) & 249.48 & 305 & -18.20\% & 375 & 460 & -18.48\% & 882.45 & 779 & 13.28\% & 1327.65 & 1381 & -3.86\% \\
	pareto-$1.5$ 400/1120/30 & (2,2,2) & 282.63 & 342 & -17.36\% & 461.91 & 645 & -28.39\% & 1235.55 & 1149 & 7.53\% & 1379.65 & 1412 & -2.29\% \\
	pareto-$1.5$ 800/4460/60 & (2,2,2) & 469.98 & 434 & 8.29\% & 928.2 & 920 & 0.89\% & 2706.9 & 1731 & 56.38\% & - & - & - \\ \hline
	ebird \& cloud 222/134/15 & (2,2,2) & 198.66 & 204 & -2.62\% & 275.94 & 271 & 1.82\% & 693.27 & 547 & 26.74\% & 1216.96 & 812 & 49.87\% \\
	ebird \& cloud 445/530/30 & (2,2,2) & 273.06 & 190 & 43.72\% & 457.41 & 331 & 38.19\% & 855.9 & 688 & 24.40\% & 778.77 & 771 & 1.01\% \\
	ebird \& cloud 890/2000/60 & (2,2,2) & 235.746 & 213 & 10.68\% & 474.81 & 440 & 7.91\% & 1330.38 & 1117 & 19.10\% & 902.16 & 793 & 13.77\% \\ \hline
	\end{tabular}}
\end{table*}

\begin{table*}[tb]
\setlength{\tabcolsep}{1mm}
\caption{\Cref{sec:exp-model}: Varying $\beta_2$ in the running-time model ``$\beta_1 \cdot I + \beta_2 \cdot (4 \cdot I_m + O_m)$'', where $\beta_1$ is fixed to be 1.0. The relative join time is calculated by dividing each join time by \method's join time in the same row.}
\label{tab:varyingWeightsBig}
	\vspace{5px}
	\begin{subtable}{\linewidth}
	\centering
	\caption{\texttt{Pareto-$1.5$}, band width is $(2,2,2)$, on 30 workers.}
	\scalebox{0.8}{
	\begin{tabular}{|r||r|r|r|r|r|r|r|r||rrr|}\hline
	\multirow{2}{*}{$\beta_2$} & \multicolumn{2}{c|}{\method} & \multicolumn{2}{c|}{\methodVit} & \multicolumn{2}{c|}{\methodOneB} & \multicolumn{2}{c||}{\methodGrid} & \multicolumn{3}{c|}{Relative Join Time vs \method} \\
	\cline{2-12}
	 & $I$ & 4$\cdot I_m + O_m$ & $I$ & 4$\cdot I_m + O_m$ & $I$ & 4$\cdot I_m + O_m$ & I & 4$\cdot I_m + O_m$ & \methodVit & \methodOneB & \methodGrid \\
	\hline
	0.0001 & 401 & 2510 & \multirow{9}{*}{652} & \multirow{9}{*}{145} & \multirow{9}{*}{2200} & \multirow{9}{*}{329} & \multirow{9}{*}{\colorbox{Red1}{5600}} & \multirow{9}{*}{781} & 1.63 & 5.49 & \colorbox{Red1}{13.97} \\
	0.001 & 401 & 433 &  &  &  &  &  &  & 1.62 & 5.47 & \colorbox{Red1}{13.93} \\
	0.01 & 402 & 95 &  &  &  &  &  &  & 1.62 & 5.47 & \colorbox{Red1}{13.92} \\
	0.1 & 402 & 95 &  &  &  &  &  &  & 1.62 & 5.42 & \colorbox{Red1}{13.78} \\
	1 & 403 & 91 &  &  &  &  &  &  & 1.61 & 5.12 & \colorbox{Red1}{12.92} \\
	10 & 404 & 91 &  &  &  &  &  &  & 1.60 & 4.18 & \colorbox{Red1}{10.21} \\
	100 & 404 & 91 &  &  &  &  &  &  & 1.60 & 3.70 & \colorbox{Red1}{8.81} \\
	1000 & 404 & 91 &  &  &  &  &  &  & 1.59 & 3.63 & \colorbox{Red1}{8.61} \\
	10000 & 404 & 91 &  &  &  &  &  &  & 1.59 & 3.62 & \colorbox{Red1}{8.59} \\\hline
	\end{tabular}}
	\end{subtable}
	\vspace{5px}
    \newline
	\begin{subtable}{\linewidth}
	\centering
	\caption{\texttt{ebird} joins \texttt{cloud}, band width is $(2,2,2)$, on 30 workers.}
	\scalebox{0.8}{
	\begin{tabular}{|r||r|r|r|r|r|r|r|r||rrr|}\hline
	\multirow{2}{*}{$\beta_2$} & \multicolumn{2}{c|}{\method} & \multicolumn{2}{c|}{\methodVit} & \multicolumn{2}{c|}{\methodOneB} & \multicolumn{2}{c||}{\methodGrid} & \multicolumn{3}{c|}{Relative Join Time vs \method} \\
	\cline{2-12}
	 & $I$ & 4$\cdot I_m + O_m$ & $I$ & 4$\cdot I_m + O_m$ & $I$ & 4$\cdot I_m + O_m$ & I & 4$\cdot I_m + O_m$ & \methodVit & \methodOneB & \methodGrid \\
	\hline
	0.0001 & 890.34 & 289 & \multirow{9}{*}{1830} & \multirow{9}{*}{502} & \multirow{9}{*}{4832} & \multirow{9}{*}{711} & \multirow{9}{*}{\colorbox{Red1}{10800}} & \multirow{9}{*}{1518} & 2.06 & 5.43 & \colorbox{Red1}{12.14} \\
	0.001 & 890.36 & 223 &  &  &  &  &  &  & 2.05 & 5.42 & \colorbox{Red1}{12.12} \\
	0.01 & 890.42 & 195 &  &  &  &  &  &  & 2.06 & 5.43 & \colorbox{Red1}{12.12} \\
	0.1 & 890.52 & 191 &  &  &  &  &  &  & 2.07 & 5.39 & \colorbox{Red1}{12.03} \\
	1 & 890.8 & 189 &  &  &  &  &  &  & 2.16 & 5.13 & \colorbox{Red1}{11.41} \\
	10 & 890.8 & 189 &  &  &  &  &  &  & 2.46 & 4.29 & \colorbox{Red1}{9.33} \\
	100 & 890.8 & 189 &  &  &  &  &  &  & 2.62 & 3.83 & \colorbox{Red1}{8.20} \\
	1000 & 890.8 & 189 &  &  &  &  &  &  & 2.65 & 3.76 & \colorbox{Red1}{8.04} \\
	10000 & 890.8 & 189 &  &  &  &  &  &  & 2.65 & 3.76 & \colorbox{Red1}{8.02} \\\hline
	\end{tabular}}
	\end{subtable}
\end{table*}

\subsection{Benefits of Symmetric Partitioning}
\label{sec:exp-twoAdap}

\resultbox{\Cref{tab:dupBothOrDupTBig} shows that allowing to choose which of the inputs to duplicate across
a split boundary (which is the benefit of \method over \methodS) can significantly
reduce input duplication as well as the load on the most loaded worker machine at the same time,
resulting in lower join time. 
}

While the gap is small when high-density regions of both inputs coincide (\texttt{pareto-z}) or are sufficiently correlated (real data), the anti-correlated reverse Pareto distributions reveal the benefits of allowing more flexibility in the split decisions. In particular, \method is much better able to achieve near-perfect load balance with
low input duplication, significantly lowering join time.

\subsection{Additional results for \Cref{sec:expMultAttr}: ``Multiple Join Attributes''}
\label{sec:app_expMultAttr}

\resultbox{\Cref{tab:app_time-3d} confirms the superiority of \method as dimensionality of the join
condition increases.}

Compared to the results for 8-dimensional band conditions in \Cref{sec:expMultAttr}, this series of experiments
uses a smaller band width of 5 in each dimension, which results in significantly lower output size compared to band width 20 in each dimension. This helps \methodVit because smaller output implies a less dense join matrix and
hence lower optimization cost, but it still is significantly worse then \method.

\begin{table*}[ptb]
\setlength{\tabcolsep}{1mm}
\centering
\caption{\Cref{sec:exp-twoAdap}: \methodS vs. \method (Model-estimated join times in [sec], I/O sizes in [millions]).}
\label{tab:dupBothOrDupTBig}
\scalebox{0.8}
{
	\begin{tabular}{|r|r||rrrrr|rrrrr||c|}
	\hline
	\multirow{2}{*}{Data Sets}
        & \multirow{2}{*}{Band width}
	    & \multicolumn{5}{c|}{\methodS} 
		& \multicolumn{5}{c||}{\method} 
		& Ratio of \method join time \\
	\cline{3-12}
		& & $I$\qquad & $I_m$ & $O_m$ & \Imbal & Join time
		& $I$\quad & $I_m$ & $O_m$ & \Imbal & Join time
		& to \methodS join time \\
	\hline
    \texttt{pareto-$1.0$} & (2,2,2)
		& 401 & 13 & 17 & \colorbox{Blue1}{1.02} & 243  
        & 401 & 12 & 21 & \colorbox{Blue1}{1.02} & 242 
		& \colorbox{Blue1}{1.00} \\
    \texttt{ebird} and \texttt{cloud} & (0,0,0)
		& 890 & 30 & 0  & \colorbox{Blue1}{1.01} & 332  
        & 890 & 30 & 0  & \colorbox{Blue1}{1.01} & 332 
		& \colorbox{Blue1}{1.00} \\		
    \texttt{ebird} and \texttt{cloud} & (2,2,2)
		& 899 & 32 & 66 & \colorbox{Blue1}{1.03} & 434  
        & 891 & 31 & 67 & \colorbox{Blue1}{1.02} & 429 
		& \colorbox{Blue1}{0.99} \\		
    \texttt{ebird} and \texttt{cloud} & (4,4,4)
		& 918 & 31 & 567 & \colorbox{Blue1}{1.09} & 1220  
        & 894 & 30 & 515 & \colorbox{Blue1}{1.01} & 1038 
		& \colorbox{Blue1}{0.93} \\
	\hline		
    \texttt{reverse-pareto-$2.0$} & (1000,1000,1000)
		& 439 & 168 & 0 & \colorbox{Red1}{11.47} & 1429  
        & 425 & 19 & 0  & \colorbox{Blue1}{1.32} & 261 
		& \colorbox{Blue1}{0.18} \\		
    \texttt{reverse-pareto-$2.0$} & (2000,2000,2000)
		& 418 & 189 & 0 & \colorbox{Red1}{13.55} & 1584  
    	& 475 & 19 & 0  & \colorbox{Blue1}{1.18} & 275 
		& \colorbox{Blue1}{0.17} \\		
    \texttt{reverse-pareto-$2.0$} & 2
		& 458 & 34 & 0 & \colorbox{Red1}{2.21} & 387  
    	& 413 & 19 & 0  & \colorbox{Blue1}{1.36} & 258 
		& \colorbox{Blue1}{0.25} \\		
    \texttt{reverse-pareto-$2.0$} & 1000
		& 400 & 200 & 0 & \colorbox{Red1}{14.97} & 1666 
    	& 550 & 33 & 0  & \colorbox{Blue1}{1.82} & 409 
		& \colorbox{Blue1}{0.67} \\		
	\hline
	\end{tabular}
}
\end{table*}

\begin{table*}[tb]
\setlength{\tabcolsep}{1mm}
\caption{
\Cref{sec:expMultAttr}
and
\Cref{sec:app_expMultAttr}:
Multidimensional joins on \texttt{Pareto-$1.5$}, varying dimension from 1 to 8 with band width 5 on each dimension. (Join time is estimated using the running-time model.)}
\label{tab:app_time-3d}
	\centering
	\scalebox{0.8}
	{
        \def\arraystretch{1.5}%
	    \begin{tabular}{|r|r||r|r|r|r||rrr|rrr|rrr|rrr|}
		\hline
		\multirow{2}{*}{$d$} & \multirow{2}{*}{\begin{tabular}[r]{@{}r@{}}Join Result\\ Size [millions]\end{tabular}} &
		  \multicolumn{4}{c||}{Runtime (\parTime+join time) in [sec]} & \multicolumn{12}{c|}{I/O sizes in [millions]: $I,I_m,O_m$} \\ 
		\cline{3-18}
		 & & \method & \methodVit & \methodOneB & \methodGrid
			& \multicolumn{3}{c|}{\method} & \multicolumn{3}{c|}{\methodVit} 
			& \multicolumn{3}{c|}{\methodOneB} & \multicolumn{3}{c|}{\methodGrid}   \\
		\hline
		1 & $1.12\cdot 10^{8}$ & 
		  \def\arraystretch{1}%
          \begin{tabular}[r]{@{}r@{}} $6.77\cdot 10^6$ \\(1+$6.77\cdot 10^6$) \end{tabular} &
		  \def\arraystretch{1}%
          \begin{tabular}[r]{@{}r@{}} $9.4\cdot 10^6$ \\ (113+$9.4\cdot 10^6$) \end{tabular} &
          $7.27\cdot 10^6$ & $7.27\cdot 10^6$ &
          531 & 18 & 3470000 & 544 & 12 & 4820000 & 2200 & 73 & 3730000 & 785 & 27 & 3730000 \\\hline
		2 & $313,000$ &
		  \def\arraystretch{1}%
		  \begin{tabular}[r]{@{}r@{}} $20291$ \\(1+20290) \end{tabular} &
		  \def\arraystretch{1}%
		  \begin{tabular}[r]{@{}r@{}} $26488$ \\(113+26375) \end{tabular} &
		  21446 & 21340 & 409 & 12 & 10300 & 548 & 13 & 13400 & 2200 & 73 & 10400 & 1956 & 67 & 10400 \\\hline
		4 & $860$ & 266 (3+263) & 519 (120+399) & \colorbox{Red1}{1222} & \colorbox{Red1}{8751} & 406 & 11 & 34 & 573 & 27 & 19 & 2200 & 73 & 29 & \colorbox{Red1}{16004} & \colorbox{Red1}{547} & 29 \\\hline
		8 & 0 & 217 (3+214) & 458 (151+307) & \colorbox{Red1}{1166} & \colorbox{Red1}{694560} & 404 & 14 & 0.0 & 560 & 20 & 0.00 & 2200 & 73 & 0 & \colorbox{Red1}{1280326} & \colorbox{Red1}{43747} & 0 \\\hline
		\end{tabular}
	}
\end{table*}

\subsection{Results of \method using the theoretical termination condition}
\label{sec:app_expMinMaxOverhead}

\resultbox{\Cref{tab:theoretical_condition} shows that \method finds superior partitionings also
when using the theoretical stopping condition that looks for the minimal overhead over the lower
bounds on total input and max worker load. This does not rely on any cost model, except for an estimate
of the relative cost per input vs per output tuple for the local join cost.}

For this series of experiments, we used the Palomar Transient Factory (PTF) data set,
which records observations of stars and galaxies.
We explored join queries with an input of 1.198 billion records on two attributes: object right ascension and declination, with band width of 1 arc second ($2.78\cdot 10^{-4}$) and 3 arc seconds ($8.33\cdot 10^{-4}$), respectively.
This type of queries finds pairs of celestial objects that are close to each other, helping the astronomers
determine repeat observations of the same object. \Cref{tab:theoretical_condition} shows total
input including duplicates ($I$), as well as max input and output
size on the most loaded worker ($I_m, O_m$).
\method beats all competitors on all three measures, meaning it achieves both lower input duplication
and max worker load.

\begin{table*}[ptb]
\setlength{\tabcolsep}{1mm}
\centering
\caption{\Cref{sec:app_expMinMaxOverhead}: \method using theoretical termination condition (I/O sizes in [millions]).}
\label{tab:theoretical_condition}
\scalebox{0.8}
{
	\begin{tabular}{|r|r|r||rrr|rrr|rrr|rrr|}
	\hline
	\multirow{2}{*}{Data Sets}
        & \multirow{2}{*}{Band Width}
        & Join Result
		& \multicolumn{3}{c|}{\method} 
	    & \multicolumn{3}{c|}{\methodVit} 
	    & \multicolumn{3}{c|}{\methodOneB} 
	    & \multicolumn{3}{c|}{\methodGrid} \\
	\cline{4-15}
		& & Size [millions] & $I$\qquad & $I_m$ & $O_m$
		& $I$\quad & $I_m$ & $O_m$
		& $I$\quad & $I_m$ & $O_m$ 
		& $I$\quad & $I_m$ & $O_m$ \\
	\hline
    \texttt{ptf\_objects} & ($2.78\cdot 10^{-4}$,$2.78\cdot 10^{-4}$) & 876
		& 1198 & 39.98 & 29.08 
        & 1488 & 60.02 & 32.13
        & \colorbox{Red1}{6589} & \colorbox{Red1}{220.00} & 29.20
        & \colorbox{Red1}{5990} & \colorbox{Red1}{199.67} & 29.20 \\
    \texttt{ptf\_objects} & ($8.33\cdot 10^{-4}$,$8.33\cdot 10^{-4}$) & 1125
        & 1198 & 40.25 & 36.39
        & 1508 & 60.02 & 40.77
        & \colorbox{Red1}{6589} & \colorbox{Red1}{220.99} & 37.50
        & \colorbox{Red1}{5990} & \colorbox{Red1}{199.67} & 37.50 \\
	\hline
	\end{tabular}
}
\end{table*}

\subsection{Near-Optimality of Our Approach}

\Cref{fig:comp-overhead_appendix} is a variant of \Cref{fig:comp-overhead} 
summarizes the results from all tables including the ones from the appendix.
We again 
see how \method \emph{achieves significantly lower max worker load ($y$-axis)
with less input duplication ($x$-axis).
\method is always within 10\% of the lower bound
on both measures, beating the competition by a wide margin.}

\begin{figure}[t]
\centering
\includegraphics[width=.9\linewidth]{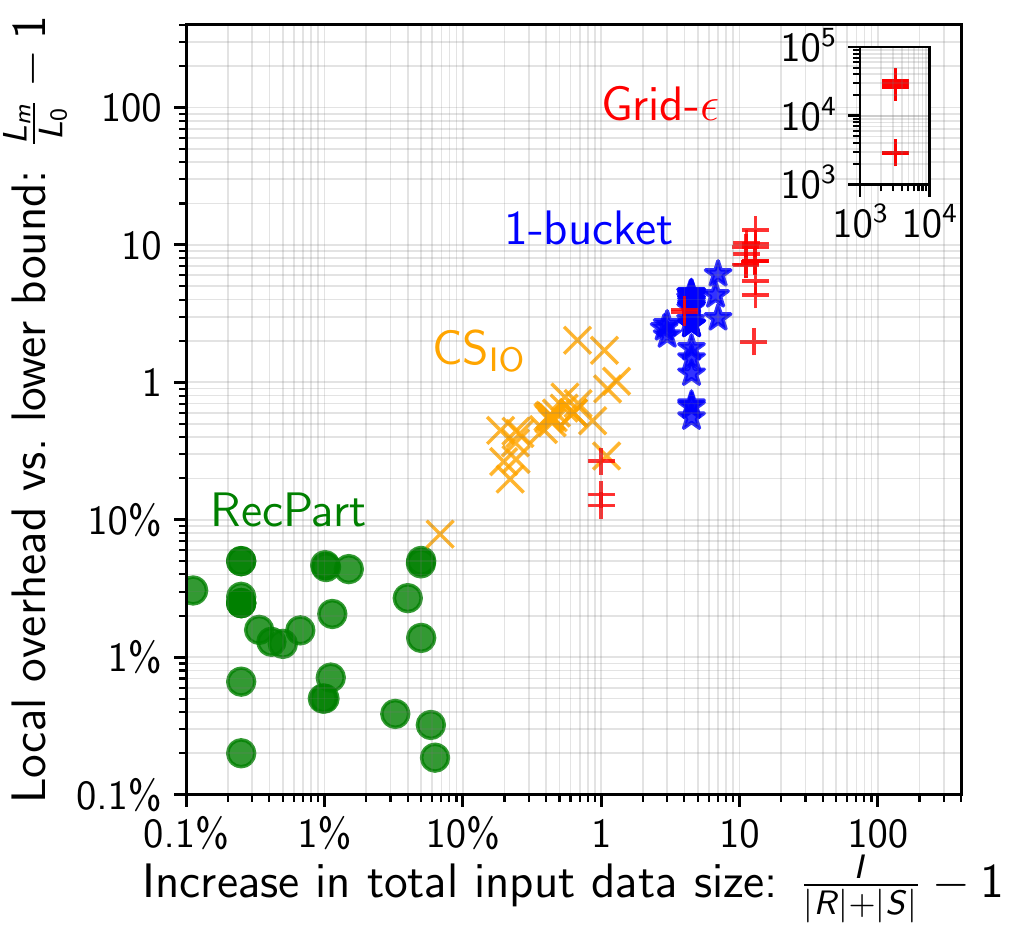}
\caption{Total input duplication (x-axis) and maximum overhead across workers (y-axis) 
for our method \method vs.\ 3 competitors.
In contrast to \cref{fig:comp-overhead}, this figure additionally includes the data points from the appendix.
As before, \method is always within 10\% of the lower bounds 
(0\% duplication and 0\% overhead).}
\label{fig:comp-overhead_appendix}
\end{figure}

\end{appendix}

\end{document}